\theoremstyle{plain}
\newtheorem{thm}{Theorem}
\newtheorem{propn}{Proposition}
\newtheorem{lem}{Lemma}
\theoremstyle{remark}
\newcommand{\sst}{\scriptscriptstyle}
\renewcommand{\1}{\one}
\renewcommand{\2}{\two}
\newcommand{\3}{\three}
\newcommand{\4}{{\mathfrak 4}}
\newcommand{\cone}{\check{\1}}
\newcommand{\ctwo}{\check{\2}}
\newcommand{\id}{{\rm id}}
\newcommand{\pa}{\partial}
\newcommand{\ot}{\otimes}
\newcommand{\ra}{\to}
\newcommand{\fr}[2]{{\textstyle \frac{#1}{#2} }}
\newcommand{\fsl}{{\mathfrak s}{\mathfrak l}}
\renewcommand{\=}[1]{\stackrel{(\ref{#1})}{=}}
\newcommand{\df}{\equiv}
\newcommand{\al}{\alpha}
\newcommand{\be}{\beta}
\newcommand{\ga}{\gamma}
\newcommand{\Ga}{\Gamma}
\newcommand{\de}{\delta}
\newcommand{\De}{\Delta}
\newcommand{\ep}{\epsilon}
\newcommand{\om}{\omega}
\newcommand{\si}{\sigma}
\newcommand{\vf}{\varphi}
\newcommand{\CA}{{\mathcal A}}
\newcommand{\CC}{{\mathcal C}}
\newcommand{\CD}{{\mathcal D}}
\newcommand{\CH}{{\mathcal H}}
\newcommand{\CI}{{\mathcal I}}
\newcommand{\CK}{{\mathcal K}}
\newcommand{\CM}{{\mathcal M}}
\newcommand{\CP}{{\mathcal P}}
\newcommand{\CS}{{\mathcal S}}
\newcommand{\CT}{{\mathcal T}}
\newcommand{\CU}{{\mathcal U}}
\newcommand{\SA}{{\mathsf A}}
\newcommand{\SB}{{\mathsf B}}
\newcommand{\SC}{{\mathsf C}}
\newcommand{\SE}{{\mathsf E}}
\newcommand{\SF}{{\mathsf F}}
\newcommand{\SK}{{\mathsf K}}
\newcommand{\SL}{{\mathsf L}}
\newcommand{\SM}{{\mathsf M}}
\newcommand{\SO}{{\mathsf O}}
\newcommand{\SP}{{\mathsf P}}
\newcommand{\SQ}{{\mathsf Q}}
\newcommand{\SR}{{\mathsf R}}
\renewcommand{\SS}{{\mathsf S}}
\newcommand{\ST}{{\mathsf T}}
\newcommand{\SU}{{\mathsf U}}
\newcommand{\SV}{{\mathsf V}}
\newcommand{\SW}{{\mathsf W}}
\newcommand{\SX}{{\mathsf X}}
\newcommand{\SZ}{{\mathsf Z}}
\newcommand{\fv}{{\mathfrak v}}
\newcommand{\sa}{{\mathsf a}}
\newcommand{\sd}{{\mathsf d}}
\newcommand{\se}{{\mathsf e}}
\renewcommand{\sf}{{\mathsf f}}
\newcommand{\sh}{{\mathsf h}}
\newcommand{\sll}{{\mathsf l}}
\newcommand{\sq}{{\mathsf q}}
\newcommand{\spp}{{\mathsf p}}
\newcommand{\sr}{{\mathsf r}}
\newcommand{\mss}{{\mathsf s}}
\newcommand{\mst}{{\mathsf t}}
\newcommand{\su}{{\mathsf u}}
\newcommand{\sw}{{\mathsf w}}
\newcommand{\sy}{{\mathsf y}}
\newcommand{\sz}{{\mathsf z}}
\newcommand{\zero}{{\mathfrak 0}}
\newcommand{\0}{{\mathfrak 0}}
\newcommand{\one}{{\mathfrak 1}}
\newcommand{\two}{{\mathfrak 2}}
\newcommand{\three}{{\mathfrak 3}}
\newcommand{\BR}{{\mathbb R}}
\newcommand{\BC}{{\mathbb C}}
\newcommand{\BS}{{\mathbb S}}
\newcommand{\rf}[1]{(\ref{#1})}
\newcommand{\fus}[6]{\big\{\,{}^{#1}_{#3}\;{}^{#2}_{#4}\;{}^{#5}_{#6}\,
\big\}_b} 
\newcommand{\Fus}[6]{F_{#5#6}^{}\big[\,{}^{#3}_{#4}\;{}^{#2}_{#1}\,\big]}
\newcommand{\nc}{\newcommand}
\nc{\rnc}{\renewcommand} \nc{\beq}{\begin{equation}}
\nc{\eeq}{\end{equation}} \nc{\beqa}{\begin{eqnarray}}
\nc{\eeqa}{\end{eqnarray}}
\begin{document}
\title{On  
the relation between the modular double of $\CU_{q}(\fsl(2,\BR))$\\ 
and the quantum Teichm\"uller theory}
\author{Iurii Nidaiev\footnote{Current address:
Department of Physics and Astronomy,
Rutgers University, Piscataway, NJ 08855, USA,}, J\"org Teschner}
\address{
DESY Theory, Notkestr. 85, 22603 Hamburg, Germany\\
{\tt teschner@mail.desy.de}}
\maketitle

\begin{quote}
\centerline{\bf Abstract}
{\small We exhibit direct relations 
between the modular double
of $\CU_{q}(\fsl(2,\BR))$ and the
quantum Teichm\"uller theory. 
Explicit representations for the fusion- and 
braiding operations of the quantum Teichm\"uller theory are 
immediate consequences.
Our results include a simplified derivation of
the Clebsch-Gordan decomposition for 
the principal series of representation
of the modular double
of $\CU_{q}(\fsl(2,\BR))$. 
}\end{quote}


\tableofcontents

\section{Introduction}
\setcounter{equation}{0}
The modular double 
of $\CU_{q}(\fsl(2,\BR))$ is a non-compact
quantum group closely related to the quantum deformation 
of the universal enveloping algebra of $\fsl(2,\BR)$. It has
some interesting features that are responsible for its relevance
to conformal field theory \cite{PT1,Teschner:2001rv}, 
integrable models \cite{BT}, and quantum Teichm\"uller theory.
The generators of the modular double are represented
by {\it positive} self-adjoint operators, which was shown 
in \cite{BT1} to be responsible for the remarkable
self-duality of the modular double: It is simultaneously
the modular double of $\CU_{\tilde{q}}(\fsl(2,\BR))$, with 
deformation parameter $\tilde{q}$ given as 
$\tilde{q}=e^{\pi i b^{-2}}$ if $e^{\pi i b^2}$. This self-duality 
was pointed out independently in \cite{FMD} and in \cite{PT1},
and it has profound consequences in the applications
of this mathematical structure. One may, for example, use
it to explain the quantum-field theoretical self-dualities 
of the Liouville theory \cite{Teschner:2001rv} 
and of the Sinh-Gordon model \cite{BT}.

There are various hints that there must be close
connections between the quantization of the Teichm\"uller spaces
constructed in \cite{F97,CF,Ka1} on the one hand,
and the modular double of $\CU_{q}(\fsl(2,\BR))$ on the other hand. 
First hints came from the observations
made in \cite{T03} that the fusion move in the quantum 
Teichm\"uller theory gets represented in terms of the 
6j-symbols of the modular double \cite{PT1,PT2}. 
One may also observe \cite{FK} that the quantum Teichm\"uller theory 
is essentially build from the basic data of the 
modular double of the quantum $(ax+b)$-group, the so-called multiplicative
unitary.
As the $(ax+b)$-group is nothing but the Borel half of $SL(2,\BR)$,
one may expect relations between the 
quantum Teichm\"uller theory and the modular double 
of  $\CU_{\tilde{q}}(\fsl(2,\BR))$ to follow by combining the
quantum double construction of 
${\rm Fun}_q(SL(2,\BR))$ from the quantum $(ax+b)$-group \cite{Ip} 
with the duality between the modular doubles of 
${\rm Fun}_q(SL(2,\BR))$ and $\CU_{q}(\fsl(2,\BR))$ 
described in \cite{PT1}, and proven in \cite{Ip}.

However, all these hints are somewhat indirect. We'll here exhibit
a direct link by establishing a relation between
the Casimir operator of $\CU_{q}(\fsl(2,\BR))$ and the 
geodesic length operators of the quantum Teichm\"uller theory.
The key observation is that the co-product of the modular
double of  $\CU_{\tilde{q}}(\fsl(2,\BR))$ gets represented by 
an operator in the quantum Teichm\"uller theory that
has a simple geometrical interpretation in terms of 
changes of triangulation of the underlying Riemann surfaces.
The combinatorial structure of the quantum
Teichm\"uller theory can be used to find an explicit 
expression for the Clebsch-Gordan operator that
describes the decomposition of the tensor product 
of two irreducible representations of the modular 
double into irreducible representations.

An immediate consequence is the direct relation between the 
kernel representing the
fusion operation from quantum Teichm\"uller theory and the 
b-6j symbols of the modular double \cite{PT1,PT2}. 
We also find, not surprisingly, that the R-operator of the modular double 
\cite{FMD,BT1} is directly related to the braiding operation
of the  quantum Teichm\"uller theory. 

The Clebsch-Gordan maps of the modular double have previously been constructed in 
\cite{PT2} as an integral operator with an explicit kernel.
However, especially the proof of the completeness 
for the Clebsch-Gordan decomposition given in \cite{PT2}
was quite complicated. The construction of the Clebsch-Gordan operator
given in this paper will allow us to re-derive the main
results of \cite{PT2} on 
the Clebsch-Gordan decomposition 
in a simpler, and hopefully more transparent
way. The explicit construction of the Clebsch-Gordan operator
presented below reduces the proof of completeness to the
results of \cite{Ka3,Ka4} on the spectral decomposition of the 
geodesic length operators in Teichm\"uller theory. The proof
of this result given in \cite{Ka4} is much simpler than
the proof of the corresponding result on the Casimir
operators of $\CU_{q}(\fsl(2,\BR))$ given in \cite{PT2}.

\noindent
{\bf Acknowledgements:}
J.T. would like to thank L. Faddeev for interesting discussions 
and correspondence. I.N. gratefully acknowledges 
support by a EPFL Excellence Fellowship
of the \'Ecole Polytechnique F\'ed\'erale de Lausanne.

\section{Some notations and conventions}
\setcounter{equation}{0}

The special function $e_b(U)$ can be defined in the strip 
$|\Im z|<|\Im c_b|$, $c_b\equiv i(b+b^{-1})/2$ by means of the 
integral representation
\begin{equation}
\log e_b(z)\;\equiv\;\frac{1}{4}
\int\limits_{i0-\infty}^{i0+\infty}\frac{dw}{w}
\frac{e^{-2{\mathsf i}zw}}{\sinh(bw)
\sinh(b^{-1}w)}.
\end{equation}
Closely related is the function $w_b(x)$ defined via
\begin{equation}\label{eb-wb}
e_b(x)=e^{-\frac{\pi i}{\1\2}(1+2c_b^2)}
{e^{\frac{\pi i}{2}x^2}}{(w_b(x))^{-1}}\,.
\end{equation}
Another useful combination is the function $D_\al(x)$, defined
as
\begin{equation}
D_\al(x):=\frac{w_b(x+\al)}{w_b(x-\al)}\,. 
\end{equation}

For tensor products we will be using the following 
leg-numbering notation. Let us first define, as usual,
\begin{equation}
\SX_r:=1\ot\dots\ot \underset{{\rm r-th}}{\SX}\ot\dots \ot 1\,.
\end{equation}
We are using the slightly unusual convention to label tensor
factors from the right to the left, as, for example, in 
$\CH_2\ot\CH_1$.



\section{Modular double}

\setcounter{equation}{0}
\subsection{Prinicpal series representations of $\CU_q(\fsl(2,\BR))$}

We will be considering the Hopf-algebra $\CU_q(\fsl(2,\BR))$
which has generators $E$, $F$ and $K$ subject to the  
relations,
\begin{equation}
\begin{aligned}
&KE\,=\,q\,EK\,,\\
&KE\,=\,q\,EK\,,
\end{aligned}
\qquad [\,E\,,\,F\,]\,=\,-\frac{K^2-K^{-2}}{q-q^{-1}}\,.
\end{equation}
The algebra $\CU_q(\fsl(2,\BR))$ has the central element
\begin{equation}
Q=(q-q^{-1})^2\,FE-qK^2-qK^{-2}\,.
\end{equation}
The co-product is given as
\begin{align}
 \De(K)\,=\,&K\ot K\,,\qquad 
\begin{aligned} & \De(E)=E\ot K+K^{-1}\ot E\,,\\
 & \De(F)=F\ot K+K^{-1}\ot F\,.
\end{aligned}
\label{De}
\end{align}
This implies
\begin{align}
 \De(Q)\,=\,&K^{-1}F\ot EK
+K^{-1}E\ot F K \notag \\
& +Q\ot K^2+K^{-2}\ot Q+(q+q^{-1})K^{-2}\ot K^2\,.
\end{align}

This algebra has a one-parameter
family of representations $\CP_{\al}'$
\begin{equation}\label{Paldef}
\begin{aligned}
 \SE_s'
\equiv\pi_{s}(E):=e^{+\pi b\sq}\frac{\cosh\pi b (\spp-s)}{\sin\pi b^2}
e^{+\pi b\sq}\,,\\
\SF_s'\equiv\pi_{s}(F):=e^{-\pi b\sq}\frac{\cosh\pi b (\spp+s)}{\sin\pi b^2}
e^{-\pi b\sq}\,,
\end{aligned}
\qquad \SK_s'\equiv\pi_s(K):=e^{-\pi b\spp}\,,
\end{equation}
where $\spp$ and $\sq$ are operators acting on functions $f(q)$ as
$\spp f(q)=(2\pi i)^{-1}\frac{\pa}{\pa q}f(q)$ and $\sq f(q)=qf(q)$,
respectively.
In the definitions \rf{Paldef} we are parameterizing $q$ as
$q=e^{\pi \textup{i} b^2}$. There is a maximal dense 
subspace $\CP_{s}'\subset L^2(\BR)$ on which all polynomials formed out of 
$\SE_s'$, $\SF_s'$ and $\SK_s'$ are well-defined 
\cite[Appendix B]{BT}.

\subsection{Modular duality}

These representations are distinguished by
a remarkable self-duality property: It is automatically a
representation of the quantum group 
$\CU_{\tilde{q}}(\fsl(2,\BR))$, where 
$\tilde{q}=e^{\pi \textup{i}/b^2}$ if $q=e^{\pi \textup{i} b^2}$. 
This representation is generated from operators 
$\tilde{\SE}_\al$, $\tilde{\SF}_\al$ and  $\tilde{\SK}_\al$ 
which are defined by formulae obtained from 
those in \rf{Paldef} by replacing $b\ra b^{-1}$.
The subspace $\CP_\al$ is simultaneously 
a maximal domain for the polynomial functions of 
$\tilde{\SE}_\al$, $\tilde{\SF}_\al$ and  $\tilde{\SK}_\al$ 
\cite[Appendix B]{BT}.

This phenomenon was observed independently in \cite{PT1} and in 
\cite{FMD}. 
It is closely related to the fact that  $\SE_\al$, $\SF_\al$ and $\SK_\al$
are  {\it positive} 
self-adjoint generators which allows one to construct
$\tilde{\SE}_\al$, $\tilde{\SF}_\al$ and  $\tilde{\SK}_\al$
via \cite{BT1}
\begin{equation}
\tilde{\se}\,=\,\se^{1/b^2}\,,\qquad\tilde{\sf}\,=\,\sf^{1/b^2}\,,\qquad
\tilde{\SK}\,=\,\SK^{1/b^2}
\end{equation}
using the notations
\begin{equation}
\begin{aligned}
&\se:=2\sin(\pi b^2)\,\SE\,,\\
&\tilde{\se}:=2\sin(\pi b^{-2})\,\tilde{\SE}\,,
\end{aligned}
\qquad
\begin{aligned}
&\sf:=2\sin(\pi b^2)\,\SF\,,\\
&\tilde{\sf}:=2\sin(\pi b^{-2})\,\tilde{\SF}\,.
\end{aligned}
\end{equation}

It was proposed in \cite{PT1,BT1} to construct a
noncompact quantum group which has as {\it complete} set 
of tempered representations the self-dual representations
$\CP_{\al}$. 
It's gradually becoming clear how to realize this suggestion
precisely. 
Relevant steps in this direction were taken in \cite{BT1} by 
defining co-product, R-operator and Haar-measure of such a 
quantum group. Further important progress in this direction was
recently made in \cite{Ip}. 
Following \cite{FMD}, we will in the following call 
this noncompact quantum 
group the modular double of $\CU_q(\fsl(2,\BR))$.

\subsection{The Whittaker model for $\CD\CU_q(\fsl_2)$}

A unitarily equivalent family of representations of the modular double
is
\begin{subequations} \label{Whitt}
\begin{align}
&2\sin\pi b^2\,\SE_s\,=\,e^{\pi b(2\sq-\spp)}\,,\qquad
\SK_s\,=\,e^{-\pi b \spp}\,,\\
&2\sin\pi b^2\,\SF_s\,=\,
e^{\pi b(\sq-\spp/2)}(2\cosh(2\pi b s)+2\cosh(2\pi b\spp))
e^{\pi b(\sq-\spp/2)}\,,
\end{align} 
\end{subequations}
A joint domain of definition is the space $\CP$ of entire functions
which decay faster than any polynomial when going to infinity along
the real axis.
It is easy to see that this representation is unitarily equivalent to 
the one defined in \rf{Paldef},
$ 
\SX_s'=\SU_s^{}\cdot \SX\cdot\SU_s^{-1},
$ 
with
\begin{equation}
\SU_s:=e^{-\frac{\pi i}{2}\spp^2}w_b(\spp-s)\,.
\end{equation}

In any representation in which $\SE_r$ are invertible
we may represent the action of the Casimir 
on the tensor product 
of two representations, defined as, 
\begin{equation}
\SQ_{\2\1}\equiv(\pi_{s_\2}\ot\pi_{s_\1})(Q)\,,
\end{equation} 
by the formula
\begin{align}\label{Q21}
 \SQ_{\2\1}\,=\,&\SK^{-1}_\2\,
(q\SK^2_\2+q^{-1}\SK^{-2}_\2+\SQ_\2^{})\,\SE^{-1}_\2 \SE_\1^{}\SK_\1^{}
+\SK^{-1}_\2\SE_\2^{}\, 
(q\SK^2_\1+q^{-1}\SK^{-2}_\1+\SQ_\1^{})\,\SE^{-1}_\1 \SK_\1^{}\notag \\
& +\SQ_\2 \SK^2_\1+\SK^{-2}_\2 \SQ_\1+(q+q^{-1})\SK^{-2}_\2 \SK^2_\1\,.
\end{align}
Our main task is to diagonalize this operator.

\subsection{The model space}

It will be useful for us to introduce a space which contains 
all the irreducible representations of the modular double
with multiplicity one.

Let us consider the space $\CM:=\CP\ot L^2(\BR_+,d\mu)$. 
We'll choose the measure $d\mu\equiv d\mu(s)$ as
\begin{equation}
d\mu(s):=\,ds\;4\sinh(2\pi bs) \sinh(2\pi b^{-1}s) \,.
\end{equation}
This space may be
identified with the space of functions of two variables, taken
to be functions of $f(p_\2,s_\1)$. We will consider an operator
$\SQ\equiv\pi_{\CM}(Q)$ 
which will represent the action of the Casimir $Q$ on
$\CM$. Its action is 
\begin{equation}
\SQ\cdot f(p_\2,s_\1):=\,
2\cosh(2\pi b s_\1)\, f(p_\2,s_\1)\,.
\end{equation}
The space $\CM$ becomes a representation of the modular double
generated by the operators
$\SE\equiv\pi_{\CM}^{}(E)$,  $\SF\equiv\pi_{\CM}^{}(F)$
and $\SK\equiv\pi_{\CM}^{}(K)$ which are defined as
\begin{subequations} 
\begin{align}
&2\sin\pi b^2\,\SE\,=\,e^{\pi b(2\sq-\spp)}\,,\qquad
\SK\,=\,e^{-\pi b \spp}\,,\\
&2\sin\pi b^2\,\SF\,=\,
e^{\pi b(\sq-\spp/2)}(\SQ+2\cosh(2\pi b\spp))
e^{\pi b(\sq-\spp/2)}\,,
\end{align} 
\end{subequations}
It is clear by definition that $\CM$ decomposes into irreducible
representations of the modular double as
\begin{equation}
\CM \,\simeq \,
\int^{\oplus}d\mu(s)\;\CP_{s}\,\,,
\end{equation}
with action of the generators defined above. 

\subsection{The R-operator}

Let us introduce the rescaled generators $\se$ and 
$\sf$ via
\begin{equation}
\se:=2\sin\pi b^2\SE\,,\qquad
\sf:=2\sin\pi b^2\SF\,.
\end{equation}
Let us furthermore introduce an anti-self-adjoint element $\sh$ such
that $\SK=q^\sh$. We will then define the following operator on $\CM\ot\CM$:
\begin{equation}\label{Rdef}
 \SR\,=\, q^{\sh\ot\sh} \,
 E_b(\se\ot
 \sf)\, q^{\sh\ot\sh} \,.
\end{equation}
$\SR$ coincides with the R-operator proposed 
by L.~Faddeev in~\cite{FMD}. 
Notice that $|g_b(x)|=1$ for $x\in{\BR}^+$. This implies that
$\SR$ is manifestly unitary.

\begin{thm}\label{T1}
The operator $\SR$ has the following properties:
\begin{eqnarray}
 & {\rm (i)} &  \SR\,\De(\SX) \,=\, 
     \De'(\SX)\,\SR \,,  \label{i} \\
 & {\rm (ii)} & (\id\ot\De)\SR\,=\, \SR_{13}\SR_{12} ,\qquad
 (\De\ot\id)\SR\,=\, \SR_{13}\SR_{23} \,, \label{ii} \\
 & {\rm (iii)} & (\sigma \ot \id)\SR\,=\, \SR^{-1} , \qquad 
 (\id \ot \sigma)\SR\,=\, \SR^{-1} \,.  \label{iii}
\end{eqnarray}
\end{thm}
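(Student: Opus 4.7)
The plan is to verify (i)--(iii) by direct computation, exploiting the factorization $\SR=q^{\sh\ot\sh}E_b(\se\ot\sf)q^{\sh\ot\sh}$ into two complementary pieces. Conjugation by $q^{\sh\ot\sh}$ is governed entirely by the Cartan relations $q^\sh Eq^{-\sh}=qE$ and $q^\sh Fq^{-\sh}=q^{-1}F$, while the quantum dilogarithm $E_b$ satisfies two crucial identities for $q^2$-commuting operators: the Weyl exchange relation $E_b(U)VE_b(U)^{-1}=V(1+U)$ when $UV=q^2VU$, and the Faddeev--Volkov factorization $E_b(U+V)=E_b(V)E_b(U)$ under the same hypothesis. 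Each of (i)--(iii) will reduce to one of these, or to the inversion identity of $E_b$.

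\emph{Part (i).} The case $\De(K)=K\ot K$ is automatic, since both the Cartan and the quantum dilogarithm factors commute with $K\ot K$ (the latter because $KEK^{-1}=qE$ and $KFK^{-1}=q^{-1}F$ cancel in $\se\ot\sf$). For $\De(E)$, I first conjugate both sides of $\SR\De(E)=\De'(E)\SR$ by the outer $q^{\sh\ot\sh}$ factors to eliminate them, using the Cartan conjugation rule; the remaining statement asks that $E_b(\se\ot\sf)$ conjugate one explicit combination of $E$'s and $K$'s into another. A direct computation of the commutator $[\se\ot\sf,\,\cdot\,]$, using $[E,F]=-(K^2-K^{-2})/(q-q^{-1})$, reduces the required equality to the Weyl exchange identity for $E_b$. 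The case $\De(F)$ is symmetric.

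\emph{Part (ii).} Apply $\id\ot\De$ to $\SR$. The outer Cartan factors split multiplicatively as $q^{\sh_3(\sh_2+\sh_1)}=q^{\sh_3\sh_2}q^{\sh_3\sh_1}$, which already matches the Cartan content of $\SR_{13}\SR_{12}$. The central factor becomes $E_b\bigl(\se_3\ot(\sf_2K_1+K_2^{-1}\sf_1)\bigr)$; after moving the $K$-tails through the outer Cartans, the two summands in the argument become $q^2$-commuting, so the Faddeev--Volkov identity factorizes the $E_b$ into two pieces which, once the $K$-tails are reabsorbed into the Cartan factors, are exactly the central factors of $\SR_{13}$ and $\SR_{12}$. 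The second identity in (ii) follows by the mirror argument, applied to $(\De\ot\id)(\se)=\se_2K_1+K_2^{-1}\se_1$.

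\emph{Part (iii) and main obstacle.} The antipode formulas $\sigma(K)=K^{-1}$, $\sigma(E)=-K^{-1}E$, $\sigma(F)=-FK$ give $(\sigma\ot\id)\SR=q^{-\sh\ot\sh}E_b(-K^{-1}\se\ot\sf)q^{-\sh\ot\sh}$; absorbing the $-K^{-1}$-tail into one flanking Cartan exponential and invoking the inversion identity for $E_b$ converts this into $\SR^{-1}$. The other half of (iii) is analogous using $\sigma(F)=-FK$. The principal technical difficulty throughout is bookkeeping: every time $E_b(\se\ot\sf)$ is moved past a Cartan exponential or a $K$-tail its argument shifts, and the decisive step in (ii) requires verifying that the two summands to which Faddeev--Volkov is applied genuinely $q^2$-commute under the paper's right-to-left leg-numbering convention. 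Once that commutation is checked, the functional equation of the quantum dilogarithm does the rest.
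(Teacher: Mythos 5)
For part (ii) --- which is the only part of Theorem~\ref{T1} the paper actually proves, the first identity being computed explicitly and the second asserted to follow in the same way, with (i) and (iii) left to the literature \cite{FMD,BT1} --- your argument is exactly the paper's: split the outer Gaussians multiplicatively, apply the coproduct inside $E_b$, observe that the two resulting summands are $q^2$-commuting, factorize with the quantum exponential identity of Lemma~\ref{Qexp}, and reabsorb the $K$-tails into the Cartan factors. One caution: the paper's lemma states $E_b(\SU)\,E_b(\SV)=E_b(\SU+\SV)$ for $\SU\SV=q^2\SV\SU$, whereas you wrote the product in the opposite order; with non-commuting arguments these are different statements, and getting this order right is precisely the bookkeeping hazard you yourself identify. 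Your sketch of (i) via the shift (Weyl) relation for $E_b$ is the standard route of \cite{BT1} and is fine in outline, though the paper does not carry it out.

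Part (iii) as written has a genuine gap. First, the antipode formulas you quote ($\sigma(E)=-K^{-1}E$, $\sigma(F)=-FK$) belong to a different coproduct convention; for the coproduct \rf{De} used here one finds $\sigma(E)=-qE$, $\sigma(F)=-q^{-1}F$, $\sigma(K)=K^{-1}$, so that $(\sigma\ot\id)\SR=q^{-\sh\ot\sh}E_b(-q\,\se\ot\sf)\,q^{-\sh\ot\sh}$. Second, and more importantly, ``invoking the inversion identity for $E_b$'' does not by itself turn this into $\SR^{-1}$: the inversion relation of the noncompact quantum dilogarithm, $e_b(z)\,e_b(-z)=\zeta_{\rm inv}\,e^{\pi i z^2}$, relates $e_b(z)^{-1}$ to $e_b(-z)$ only up to a Gaussian factor, which must be shown to be absorbed by the flanking $q^{-\sh\ot\sh}$ exponentials --- this is exactly where Faddeev's symmetric normalization of $\SR$, with Gaussians on both sides of $E_b$, is essential. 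In addition, $E_b$ evaluated on an operator that is not positive self-adjoint has to be given a meaning (e.g.\ via the product representation of $e_b$ as a ratio of $q$- and $\tilde q$-exponentials) before the manipulation is legitimate. Your one-line reduction does not engage with either point.
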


\begin{lem}\label{Qexp}
Let $\SU$ and $\SV$ be positive self-adjoint operators such that
$\SU\SV = q^2 \SV\SU$ where $q=e^{i \pi b^2}$. The function $E_b(x)$
satisfies the identities
\begin{eqnarray}
 \label{qexp}
 E_b(\SU) \ E_b(\SV) &=& E_b(\SU+\SV) \,, \\
 \label{pent}
 E_b(\SV) \ E_b(\SU) &=& E_b(\SU) \ E_b(q^{-1} \SU\SV) \ E_b(\SV) \,.
\end{eqnarray}
Furthermore, (\ref{qexp}) $\Leftrightarrow$ (\ref{pent}).
\end{lem}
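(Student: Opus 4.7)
Proof plan. The plan is to reduce Lemma \ref{Qexp} to a single conjugation identity for $E_b$ acting on the $q$-Weyl pair $(\SU,\SV)$, and to derive both (\ref{qexp}) and (\ref{pent}) from it. Since $\SU$ and $\SV$ are positive self-adjoint with $\SU\SV = q^2\SV\SU$, I realize them in Weyl form $\SU = e^{2\pi b \sx}$, $\SV = e^{2\pi b\sy}$ with $[\sx,\sy] = (2\pi\textup{i})^{-1}$ on each irreducible block, and functional calculus on the positive self-adjoint operator $\SU+\SV$ then defines $E_b(\SU+\SV)$. Because $|E_b(r)|=1$ for $r\in\BR$, the operators $E_b(\SU)$, $E_b(\SV)$, $E_b(\SU+\SV)$ are bounded unitaries, so the asserted equalities are identities of bounded operators.

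The technical heart is the shift identity $E_b(z-\textup{i}b/2) = (1+e^{2\pi b z})\,E_b(z+\textup{i}b/2)$ (and its $b\to b^{-1}$ dual), obtained from the integral representation of $E_b$. Combined with the Weyl shift $e^{2\pi b\sx} f(\sy)e^{-2\pi b\sx} = f(\sy - \textup{i}b)$, it yields the basic conjugation identity
\[
E_b(\SU)^{-1}\, \SV\, E_b(\SU) \;=\; \SV \,+\, q^{-1}\SU\SV,
\]
and its analogue obtained by swapping $\SU$ and $\SV$. Crucially, the pair $(q^{-1}\SU\SV,\, \SV)$ is again a positive $q$-Weyl pair, satisfying $(q^{-1}\SU\SV)\,\SV = q^2\,\SV\,(q^{-1}\SU\SV)$, so (\ref{qexp}) can in principle be applied to it.

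To prove (\ref{qexp}) I would introduce the one-parameter family of unitary operators $F(t) := E_b(\SU)\,E_b(t\SV)\,E_b(\SU+t\SV)^{-1}$ for $t\geq 0$, differentiate strongly in $t$ on a dense core of analytic vectors (for example a Schwartz-type space such as $\CP$), and show that the two contributions from the $t$-derivatives of the second and third factors cancel by virtue of the shift identity applied to the argument $\SU+t\SV$. Since $F(0)=1$, this forces $F(t)\equiv 1$, and $t=1$ gives (\ref{qexp}). The implication (\ref{qexp}) $\Rightarrow$ (\ref{pent}) is then immediate: functional calculus applied to the conjugation identity yields
\[
E_b(\SU)^{-1}\,E_b(\SV)\,E_b(\SU) \;=\; E_b\!\left(\SV + q^{-1}\SU\SV\right),
\]
and the right-hand side splits via (\ref{qexp}) on the positive $q$-commuting pair $(q^{-1}\SU\SV,\SV)$ into $E_b(q^{-1}\SU\SV)\,E_b(\SV)$; rearrangement is precisely (\ref{pent}). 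Conversely, (\ref{pent}) applied to rescaled generators $(t\SU,\SV)$ and expanded to first order in $t$ recovers the conjugation identity, whence (\ref{qexp}) follows by reversing the $F(t)$ argument.

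The main obstacle is the domain analysis. Since $\SU$ and $\SV$ are unbounded, the shift identity, the functional calculus for $\SU+t\SV$, and the strong differentiation of $F(t)$ must all be justified on a common dense core of analytic vectors before extending by continuity to the whole Hilbert space. These points are standard but delicate, and in the write-up I would invoke the existing analyses of the non-compact quantum dilogarithm by Faddeev, Kashaev and Volkov rather than redoing the requisite estimates from scratch.
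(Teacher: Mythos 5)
The paper itself gives no proof of Lemma \ref{Qexp}; it records (\ref{qexp}) and (\ref{pent}) as the known quantum exponential and pentagon relations and implicitly refers to the literature, so your attempt has to be measured against the standard proofs rather than against an argument in the text. The parts of your proposal that work are the ones that are standard: the conjugation identity $E_b(\SU)^{-1}\,\SV\,E_b(\SU)=\SV+q^{-1}\SU\SV$ does follow from the functional equation of $E_b$ plus the Weyl shift, functional calculus upgrades it to $E_b(\SU)^{-1}E_b(\SV)E_b(\SU)=E_b(\SV+q^{-1}\SU\SV)$, and your observation that $(q^{-1}\SU\SV,\SV)$ is again a positive pair with $(q^{-1}\SU\SV)\,\SV=q^2\,\SV\,(q^{-1}\SU\SV)$ correctly reduces (\ref{pent}) to (\ref{qexp}). (Minor sign check: with $[\sx,\sy]=(2\pi \textup{i})^{-1}$ one gets $\SU\SV=q^{-2}\SV\SU$; you want the opposite sign of the commutator.)

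The genuine gap is in your proof of (\ref{qexp}) itself. Differentiating $F(t)=E_b(\SU)\,E_b(t\SV)\,E_b(\SU+t\SV)^{-1}$ in $t$ does not produce terms that the shift identity can control, for two separate reasons. First, $\SV$ does not commute with $\SU+t\SV$, so $\frac{d}{dt}E_b(\SU+t\SV)$ is not $\SV\,E_b'(\SU+t\SV)$ but a noncommutative Duhamel-type integral; the ``two contributions'' you want to cancel are not even of the same algebraic form. Second, the shift identity is a $q$-difference equation relating $E_b$ at arguments rescaled by $q^{\pm2}$, whereas the logarithmic derivative $E_b'/E_b$ is a genuinely new special function not expressible through it, so even the derivative of the middle factor is not governed by that identity; and replacing the $t$-derivative by the natural $q^2$-difference does not help, because $\SU+q^2t\SV$ is not a $q$-rescaling of the single operator $\SU+t\SV$. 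The known proofs avoid this mechanism entirely: either one proves (\ref{pent}) first via the Fourier--Ramanujan integral formula for $e_b$ and then deduces (\ref{qexp}) algebraically (noting that every positive pair $(\SU',\SV')$ with $\SU'\SV'=q^2\SV'\SU'$ arises as $(q^{-1}\SU\SV,\SV)$ for some positive pair $(\SU,\SV)$, so that direction of the equivalence is genuinely usable), or one shows directly that $E_b(\SU)E_b(\SV)$ and $E_b(\SU+\SV)$ implement the same conjugation on an irreducible family of operators --- for instance both commute with $\SU+\SV$ and with its modular dual --- whence their ratio is a scalar fixed by asymptotics. Your closing appeal to Faddeev, Kashaev and Volkov would in effect import one of these arguments wholesale; as written, the $F(t)$ device would not survive the domain analysis you defer.
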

In the literature, eqs.~(\ref{qexp}) and (\ref{pent}) 
are often referred to 
as the quantum exponential and the quantum pentagon
relations.

To prove the first formula in (\ref{ii}), we use 
the quantum exponential relation (\ref{qexp}) {}from 
Lemma~\ref{Qexp} with identification
\hbox{$\SU= \se_\1 \SK^{-1}_\2 \sf_\3$} 
and \hbox{$\SV= \se_\1 \sf_\2 \SK_\3$},
\begin{eqnarray*} 
 (\id\ot\De)\SR & \={Rdef} &  
 (\id\ot\De) \bigl( q^{ \sh_\1 \sh_\2} \,
 g_b ( \se_\1 \sf_\2) \, q^{ \sh_\1 \sh_\2} \bigr) \\
 & \={De} & q^{ \sh_\1 \sh_\2 + \sh_\1 \sh_\3} \,
 E_b \bigl( \se_\1 \sf_\2 \SK_\3 + 
  \se_\1 \SK^{-1}_\2 \sf_\3  \bigr)
 \, q^{ \sh_\1 \sh_\2 + \sh_\1 \sh_\3 }   \\ 
 & \={qexp} & q^{ \sh_\1 \sh_\2 + \sh_\1 \sh_\3 } \, 
 E_b( \se_\1^{} \SK^{-1}_\2 \sf_\3 ) \,
 E_b( \se_\1^{} \sf_\2^{} \SK_\3^{} ) \,
 q^{ \sh_\1 \sh_\2 + \sh_\1 \sh_\3 } \\
 & = & q^{ \sh_\1 \sh_\3} \, E_b( \se_\1 \sf_\3 ) \,
 q^{ \sh_\1 \sh_\3} \cdot q^{ \sh_\1 \sh_\2} \, E_b( \se_\1 \sf_\2 ) 
 \, q^{ \sh_\1 \sh_\2} =  \SR_{\1\3}\SR_{\1\2} \,. 
\end{eqnarray*}
The second formula in (\ref{ii}) is proved in the same way. 

The R-operator allows us to introduce the braiding of tensor 
products of the representations $\CP_{s}$. Specifically, let the 
operator $\SB:\CP_{s_{\two}}\ot \CP_{s_\one}\ra \CP_{s_{\one}}\ot 
\CP_{s_\two}$
be defined by $\SB_{s_{\two},s_\one}
\equiv \SP\SR_{s_{\two},s_\one}$, where $\SP$ is the 
operator that permutes the two tensor factors. Property 
(i) {}from Theorem~\ref{T1} implies as usual that 
$\SB_{s_{\two},s_\one}\circ \De(\SX)=
 \De(\SX)\circ\SB_{s_{\two},s_\one}$.



\section{The Clebsch-Gordan maps of the modular double}

\setcounter{equation}{0}

In this section we are going to re-derive the main results of \cite{PT2} 
on the Clebsch-Gordan decomposition of tensor products of representations
of  $\CD\CU_q(\fsl(2,\BR))$ in a completely new way. 
The most difficult part in \cite{PT2}
was to prove the completeness of the eigenfunctions of the 
Casimir operator $\SQ_{\2\1}$ acting
on the tensor product of two representations. This result will now
be obtained by first constructing an explicit unitary operator which maps
$\SQ_{\2\1}$ to a simple standard form $\SQ_\1''$, 
and then applying the 
result of Kashaev \cite{Ka4} on the completeness of the eigenfunctions
of $\SQ_\1''$. The resulting proof is much shorter than the 
one given in \cite{PT2}.


\subsection{Definition of Clebsch-Gordan maps}

The goal is to construct the Clebsch-Gordon projection maps, 
\begin{equation}
\SC^{s_\3}_{s_\2s_\1}:\CP_{s_\2}\ot\CP_{s_\1}\,\ra\, \CP_{s_\3}\,,
\end{equation}
that satisfy
\begin{equation}
\SC^{s_\3}_{s_\2s_\1}\cdot
(\pi_{s_\2}\ot\pi_{s_\1})(X)\,=\,\pi_{s_\3}(X)\cdot\SC^{s_\3}_{s_\2s_\1} \,.
\end{equation}

It will be convenient to consider the 
unitary operators
\begin{equation}
\SC_{s_\2s_\1}:\CP_{s_\2}\ot\CP_{s_\1}\,\ra\, \CM\,\simeq\, 
\int^{\oplus}d\mu(s_\3)\;\CP_{s_\3}\,,
\end{equation}
related to $\SC^{s_\3}_{s_\2s_\1}$ as
\begin{equation}
\SC_{s_\2s_\1}\,=\,\int^{\oplus}d\mu(s_\3)\;\SC^{s_\3}_{s_\2s_\1}\,.
\end{equation}

We note that $\SC_{s_\2s_\1}$ is characterized by the properties
\begin{subequations}\label{intertw}
\begin{align}
&(\SC_{s_\2s_\1})^{-1}\cdot \SE\cdot \SC_{s_\2s_\1}\,=\,
\SE_\2\SK_\1+\SK^{-1}_\2\SE_\1\,,\\
&(\SC_{s_\2s_\1})^{-1}\cdot \SK\cdot \SC_{s_\2s_\1}\,=\,\SK_\2\SK_\1\,,\\
&(\SC_{s_\2s_\1})^{-1}\cdot \SQ\cdot \SC_{s_\2s_\1}\,=\,\SQ_{\2\1}\,.
\end{align}
\end{subequations}
The ``missing'' property
\[ 
(\SC_{s_\2s_\1})^{-1}\cdot \SF\cdot \SC_{s_\2s_\1}\,=\,
\SF_\2\SK_\1+\SK^{-1}_\2\SF_\1\,,
\]
is an easy consequence of \rf{intertw}, since invertibility of $\SE$ allows
us to express
$\SF$ in terms of $\SE$, $\SK$ and $\SQ$ in our representations.

\subsection{Factorization of Clebsch-Gordan maps}

We will construct the Clebsch-Gordan maps in the following factorized 
form: 
\begin{equation}\label{CGMconstr}
\SC_{s_\2s_\1}:=\,\nu^{\mss_{\2\1}}_{s_\2s_\1}\cdot\SS_\1\cdot\SC_\1\cdot (\ST_{\1\2})^{-1}\,,
\end{equation}
where
\begin{itemize}
\item The operator $\ST_{\1\2}$ satisfies
\begin{subequations}\label{ST12}
\begin{align}
\label{ST12a}
&\ST_{\1\2}\cdot \SE_\2\cdot (\ST_{\1\2})^{-1}\,=\,\SE_\2\SK_\1+
\SK^{-1}_\2\SE_\1\,,\\
\label{ST12b}
&\ST_{\1\2}\cdot \SK_\2\cdot (\ST_{\1\2})^{-1}\,=\,\SK_\2\SK_\1\,,\\
\label{ST12c}
&\ST_{\1\2}\cdot {\SQ}_{\2\1}'\cdot (\ST_{\1\2})^{-1}\,=\,\SQ_{\2\1}\,,
\end{align}
\end{subequations}
where
\begin{equation}\label{SQ1def}
{\SQ}_{\1}':=\,2\cosh 2\pi b(\sq_\1-\spp_\1)+e^{-2\pi b\sq_\1}Q_\1
+e^{-2\pi b\spp_\1}Q_\2+e^{-2\pi b(\spp_\1+\sq_\1)}\,.
\end{equation}
This means that $\ST_{\1\2}$ generates the representation of the
co-product in the 
representation of the Borel-subalgebra generated by $E$ and $K$ on 
$\CP_{s_\2}\ot\CP_{s_\1}$, and it simplifies $\SQ_{\2\1}$ to an operator
that acts nontrivially only on one tensor factor.
\item The operator $\SC_\1$ maps $L^2(\BR^2)$ to itself, 
commutes with $\SK_\2$ and $\SE_\2$ 
and maps $\SQ_\1'$ to a simple form,
\begin{equation}\label{SC1}
(\SC_\1)^{-1}\cdot\SQ_\1''\cdot\SC_\1\,=\,{\SQ}_{\1}'\,,
\end{equation}
with $\SQ_\1''$ being defined as 
\begin{equation}\label{SL1}
\SQ_\1''\,=\,2\cosh 2\pi b\spp_\1+e^{-2\pi b\sq_\1}\,.
\end{equation}
\item $\SS_\1$ maps $L^2(\BR^2)$ to $\CM$ in such a way
that $\SQ_\1''$ is mapped to the multiplication operator $\SQ$, 
\begin{subequations}
\begin{align}
& \SS_\1^{-1}\cdot \SE\cdot \SS_\1^{}\,=\,
\SE_\2\,,\\
&\SS_\1^{-1}\cdot \SK\cdot \SS_\1^{}\,=\,\SK_\2\,,\\
&\SS_\1^{-1}\cdot \SQ\cdot \SS_\1^{}\,=\,\SQ_\1''\,.
\end{align}
\end{subequations}
\item $\nu^{\mss_{\2\1}}_{s_\2s_\1}$ is a normalization factor that 
may depend on the positive self-adjoint operator
$\mss_{\2\1}$ defined by $\SQ_{\2\1}=2\cosh(2\pi b\mss_{\2\1})$.
A convenient choice for $\nu^{\mss_{\2\1}}_{s_\2s_\1}$ will be defined later. 
\end{itemize}
It follows easily that the operator defined in \rf{CGMconstr}
satisfies \rf{intertw}.

\subsection{Construction of the Clebsch-Gordan maps}\label{CGconstr}

The operators $\ST_{\1\2}$ and $\SC_1$ in \rf{CGMconstr} 
can be constructed explicitly as
\begin{subequations}
\begin{align}
&\ST_{\1\2}:=e_b(\sq_\1+\spp_\2-\sq_\2)e^{-2\pi i\spp_\1\sq_\2}\,,\\
&\SC^{-1}_\1:=e_b(\sq_\1-s_\2)e^{2\pi i s_\1\spp_\1}
\frac{e_b(s_\1-\spp_\1)}{e_b(s_\1+\spp_\1)} 
e^{2\pi i s_\2\sq_\1}
\,.
\end{align}
\end{subequations}
The operator $\SS_\1$ essentially coincides with the operator
that maps $\SL_\1$ to diagonal form. This operator can be
represented by the integral kernel
\begin{equation}\label{L1-eigenf}
\langle\,p_\1\,
|\,s_{\1}\,\rangle\,=\,\frac{e_b(s_{\1}+p_1+c_b-i0)}{e_b(s_{\1}-p_1-c_b+i0)}
e^{-2\pi i s_1(p_1+c_b)}\,.
\end{equation}
The functions $\phi_{s_\1}(p_\1):=
\langle\,p_\1\, |\,s_{\1}\,\rangle$
are nothing but the eigenfunctions of the operator $\SL_\1$ in the
representation where $\spp_1$ is diagonal. 
It was shown in \cite{Ka4} that the eigenfunctions $\phi_{s_\1}(p_\1)$
are delta-function orthogonalized and 
complete in $L^2(\BR)$,
\begin{subequations}
\begin{align}
&\int_{\BR_+}dp_\1\;\langle\,s_\1\,|\,p_\1\,\rangle
\langle\,p_\1\,|\,s_{\1}'\,\rangle\,=\,\de(s_\1^{}-s_\1')\,.\\
&\int_{\BR_+}d\mu(s_\1)\;\langle\,p_\1\,|\,s_{\1}\,\rangle\langle\,s_\1\,|\,p_\1'\,\rangle\,=\,\de(p_\1^{}-p_\1')\,.
\end{align}
\end{subequations}
This is equivalent to unitarity of the operator $\SS_\1$.

\subsection{Verification of intertwining property}

We want to demonstrate that the operator $\ST_{\1\2}$ satisfies \rf{ST12}. In order to see 
this, let us
calculate
\begin{align*}
\ST_{\1\2}\cdot\SE_{2}\cdot(\ST_{\1\2})^{-1}\,=& \,
e_b(\sq_\1+\spp_\2-\sq_\2)\cdot e^{\pi b(2\sq_\2-(\spp_\2+\spp_\1))}\cdot
(e_b(\sq_\1+\spp_\2-\sq_\2))^{-1} \\
=& e^{\frac{\pi b}{2}(2\sq_\2-(\spp_\2+\spp_\1))}\cdot
\frac{e_b(\sq_\1+\spp_\2-\sq_\2-ib/2)}{e_b(\sq_\1+\spp_\2-\sq_\2+ib/2)}\cdot
e^{\frac{\pi b}{2}(2\sq_\2-(\spp_\2+\spp_\1))}\\
=& e^{\frac{\pi b}{2}(2\sq_\2-(\spp_\2+\spp_\1))}\cdot
(1+e^{2\pi b(\sq_\1+\spp_\2-\sq_\2)})\cdot
e^{\frac{\pi b}{2}(2\sq_\2-(\spp_\2+\spp_\1))}\\
=& \SE_\2\SK_\1+(\SK_\2)^{-1}\SE_\1\,.
\end{align*}
Equation \rf{ST12c} us verified as follows: Let us write $\ST_{\1\2}=
\mst_{\1\2}e^{-2\pi i \spp_1\sq_2}$, and calculate
\begin{align}\label{adT12}
& \ST_{\1\2}\cdot \tilde{\SL}_\1\cdot (\ST_{\1\2})^{-1}\,= \,\\
&=\,\mst_{\1\2}\cdot\big( 2
\cosh 2\pi b(\sq_1-\spp_1-\sq_2)+e^{-2\pi b(\sq_1-\sq_2)}Q_1
+e^{-2\pi b\spp_1}Q_2+e^{-2\pi b(\sq_1+\spp_1-\sq_2)}\big)\cdot\mst_{\1\2}^{-1} \notag\\
&=\,2
\cosh 2\pi b(\sq_1-\spp_1-\sq_2)\notag\\
&\qquad+Q_1 e^{-\pi b(\sq_1-\sq_2)}
(1+e^{2\pi b(\sq_1+\spp_2-\sq_2)})e^{-\pi b(\sq_1-\sq_2)}\notag\\
& \qquad + Q_2e^{-\pi b\spp_1}(1+e^{2\pi b(\sq_1+\spp_2-\sq_2)})
e^{-\pi b\spp_1}\notag\\
&\qquad +e^{-\pi b(\sq_1+\spp_1-\sq_2)}
(1+e^{2\pi b(\sq_1+\spp_2-\sq_2+ib/2)})(1+e^{2\pi b(\sq_1+\spp_2-\sq_2-ib/2)})
e^{-\pi b(\sq_1+\spp_1-\sq_2)}
\notag\end{align}
Comparison of this expression with \rf{Q21} proves \rf{ST12c}.

The calculations needed to verify \rf{SC1} are very similar.

\subsection{The b-Clebsch-Gordan coefficients}

The b-Clebsch-Gordan coefficients are defined as the matrix elements of the
Clebsch-Gordan operator,
\begin{equation}
\big(\,{}^{s_3}_{p_3}\,|\,{}^{s_\2}_{p_\2}\,{}^{s_\1}_{p_\1}\big)_b^{}:=
\langle\,s_\3,p_\3\,|\,\SC_{s_\2s_\1}\,|\,p_\2,p_\1\,\rangle\,.
\end{equation}
We have
\begin{propn}\label{CGpropn:T} $\quad$\\
There exists a choice of coefficients $\nu^{\mss_{\2\1}}_{s_\2s_\1}$ such that 
the following statements are true:
\begin{enumerate} 
\item The b-Clebsch-Gordan coefficients are 
explicitly given by the formula
\begin{align}
\big(\,{}^{s_3}_{p_3}\,|\,{}^{s_\2}_{p_\2}\,{}^{s_\1}_{p_\1}\big)_b^{}=&\,
\de(p_{\2\1}-p_\2-p_\1)\,\left(\frac{w_b(s_\1+s_\2-s_\3)w_b(s_\1+s_\3-s_\2)
w_b(s_\2+s_\3-s_\1)}{w_b(s_\1+s_\2+s_\3)}\right)^{\frac{1}{2}}
\notag\\
&\times e^{\frac{\pi i}{2}(p_3^2-p_\1^2-p_\2^2)}\,
\frac{w_b(p_\1-s_\1)w_b(p_\2-s_\2)}{w_b(p_{\2\1}-s_{\2\1})}
e^{\pi i(p_\2(s_\1+c_b)-p_\1(s_\2+c_b))}\notag\\
& \times\int_{\BR}dp\;e^{\pi ip(s_\1+s_\2-s_{\2\1}+c_b)} 
D_{\frac{1}{2}(s_\2-s_\1-s_{\2\1}-c_b)}(p+p_\2)
D_{\frac{1}{2}(s_\1-s_\2-s_{\2\1}-c_b)}(p-p_\1)
\notag
\\
& \hspace{4cm} \times D_{\frac{1}{2}(s_\1+s_\2+s_{\2\1}-c_b)}^{}(p)\,.
\label{b-3j}\end{align}
\item The following Weyl-symmetries hold:
\begin{equation}\label{Weyl}
\big(\,{}^{s_3}_{p_3}\,|\,{}^{s_\2}_{p_\2}\,{}^{s_\1}_{p_\1}\big)_b^{}
=\big(\,{}^{s_3}_{p_3}\,|\,{}^{s_\2}_{p_\2}\,{}^{-s_\1}_{\;\;\,p_\1}\big)_b^{}
=\big(\,{}^{s_3}_{p_3}\,|\,{}^{-s_\2}_{\;\;\,p_\2}\,{}^{s_\1}_{p_\1}\big)_b^{}
=\big(\,{}^{-s_3}_{\;\;\,p_3}\,|\,{}^{s_\2}_{p_\2}\,{}^{s_\1}_{p_\1}\big)_b^{}\,.
\end{equation}
\item The b-Clebsch-Gordan coefficients are real,
\begin{equation}\label{reality}
\big[\big(\,{}^{s_3}_{p_3}\,|\,{}^{s_\2}_{p_\2}\,{}^{s_\1}_{p_\1}\big)_b^{}\big]^*
=\big(\,{}^{s_3}_{p_3}\,|\,{}^{s_\2}_{p_\2}\,{}^{s_\1}_{p_\1}\big)_b^{}\,.
\end{equation}
\end{enumerate}
\end{propn}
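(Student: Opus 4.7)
The plan is to obtain part (1) by direct computation from the factorization
$\SC_{s_\2 s_\1} = \nu^{\mss_{\2\1}}_{s_\2 s_\1} \cdot \SS_\1 \cdot \SC_\1 \cdot (\ST_{\1\2})^{-1}$
of Section~\ref{CGconstr}, and to read parts (2) and (3) off the resulting closed-form integral. First I would evaluate $(\ST_{\1\2})^{-1}$ on the momentum eigenstate $|p_\2, p_\1\rangle$: since $\ST_{\1\2} = e_b(\sq_\1+\spp_\2-\sq_\2)\,e^{-2\pi i \spp_\1\sq_\2}$, its inverse first translates $\spp_\2$ by $\spp_\1$ — producing the total-momentum delta $\delta(p_{\2\1}-p_\2-p_\1)$ that is forced anyway by the intertwining property for $\SK_\2\SK_\1$ — and then multiplies by $e_b(\sq_\1+\spp_\2-\sq_\2)^{-1}$. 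Next I would apply $\SC_\1$, whose inverse is given explicitly in Section~\ref{CGconstr}: this contributes a product of $e_b$-ratios together with exponential shifts in $\sq_\1, \spp_\1$ by the spin labels $s_\1, s_\2$. Finally, pairing with the Kashaev eigenfunction $\langle p_\1|s_\1\rangle$ from (\ref{L1-eigenf}) implements the action of $\SS_\1$ and produces the matrix element.

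To bring the result into the form (\ref{b-3j}), I would insert one resolution of identity (in an auxiliary variable $p$) to collect the various operator-ordered $e_b$-factors into a single integral, then convert every $e_b$ into $w_b$ via (\ref{eb-wb}). The quadratic terms generated by (\ref{eb-wb}) are exactly what produce the Gaussian $e^{\frac{\pi i}{2}(p_3^2-p_\1^2-p_\2^2)}$ and the linear phase $e^{\pi i(p_\2(s_\1+c_b)-p_\1(s_\2+c_b))}$; regrouping the remaining $w_b$-ratios then yields the three $D_\alpha$-functions with arguments $\tfrac{1}{2}(\pm s_\1 \pm s_\2 \pm s_{\2\1} - c_b)$. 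The scalar normalization $\nu^{\mss_{\2\1}}_{s_\2 s_\1}$, which has been left unspecified so far, is then defined precisely so that the remaining purely $s$-dependent prefactor assembles into the symmetric square root $\bigl(w_b(s_\1+s_\2-s_\3)w_b(s_\1+s_\3-s_\2)w_b(s_\2+s_\3-s_\1)/w_b(s_\1+s_\2+s_\3)\bigr)^{1/2}$.

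Parts (2) and (3) follow by inspection of this closed form. For the Weyl symmetries, I would use the reflection property $D_{-\alpha}(-x)=D_\alpha(x)$ together with a change of variable $p\to -p$ in the integral; the square-root prefactor is invariant under $s_\3\to -s_\3$ by symmetry of $s_\3$-dependence across the four $w_b$-factors, and for $s_\1\to -s_\1$ (resp.\ $s_\2\to -s_\2$) the combined effect on the $D_\alpha$'s and on the factor $e^{\pi i p(s_\1+s_\2-s_{\2\1}+c_b)}$ compensates exactly, consistent with the fact that $s_{\2\1}$ is only determined up to sign by $\SQ_{\2\1}=2\cosh(2\pi b s_{\2\1})$. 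For reality, the relations $|w_b(x)|=1$ and $\overline{w_b(x)}=1/w_b(x)$ for $x\in\BR$ imply that the complex conjugate of the integrand equals the integrand after $p\to -p$, so the integral is real; the prefactor is then manifestly real once the branches of the square root and of $\nu^{\mss_{\2\1}}_{s_\2 s_\1}$ are chosen consistently.

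The main obstacle is the middle step: reducing the naive double integral that comes out of $\SS_\1\SC_\1(\ST_{\1\2})^{-1}$ to the single $D_\alpha$-integral in (\ref{b-3j}). This requires careful application of the pentagon identity (\ref{pent}) of Lemma~\ref{Qexp} — or equivalently a contour deformation respecting the $\pm i0$ prescriptions in (\ref{L1-eigenf}) — and tracking signs, phases, and the argument shifts of the $D_\alpha$-functions through this rearrangement is where essentially all of the computational work lies.
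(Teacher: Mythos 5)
Your strategy for part (1) is essentially the one the paper follows in Appendix \ref{proofCGpropn}: there, too, everything is reduced to a matrix element of the factorized operator --- concretely the paper computes $\langle p_\2,p_\1|\ST_{\1\2}^{}\SC_\1^{-1}|p_{\2\1},s_{\2\1}\rangle$, i.e.\ the adjoint of the object you propose to evaluate, and recovers \rf{b-3j} at the very end by complex conjugation together with the choice of $\nu^{\mss_{\2\1}}_{s_\2s_\1}$ --- the delta function $\de(p_{\2\1}-p_\2-p_\1)$ is extracted by commuting $e^{-2\pi i\spp_\1\sq_\2}$ through the $e_b$-factor of $\ST_{\1\2}$, two resolutions of the identity are inserted, the $e_b$'s are represented by their Fourier transforms \rf{FTeb1} and \rf{FTeb2}, one auxiliary integral is done in closed form by a beta-type $e_b$-integral, and the result is converted to $w_b$ and $D_\al$ via \rf{eb-wb}. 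Up to working with the adjoint, your plan for part (1) is sound and matches the paper.

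The genuine gap is in your proposed derivation of parts (2) and (3). The Weyl symmetries \rf{Weyl} are \emph{not} integrand-level symmetries of \rf{b-3j}. For instance, under $s_\3\to-s_\3$ the argument $s_\1+s_\2-s_\3$ of a numerator factor of the square-root prefactor is exchanged with the argument $s_\1+s_\2+s_\3$ of the denominator factor, so the prefactor is manifestly not invariant factor-by-factor (by the inversion relation for $w_b$ it essentially goes over into its inverse); at the same time the exponent $e^{\pi i p(s_\1+s_\2-s_{\2\1}+c_b)}$ and the three $D$-indices change into a genuinely different integrand, and no combination of $D_{a}(x)=D_a(-x)$ with the substitution $p\to-p$ restores it. What the paper actually uses to prove \rf{Weyl} is the star-triangle integral identity \rf{startriang} (Eq.\ (A.34) of [BT]), which equates two \emph{different} $D_\al$-integrals; this is an identity between integrals, not a change of variables, and it is the missing ingredient in your argument. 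Similarly, for reality: since the $D$-indices in \rf{b-3j} contain $-c_b$, complex conjugation of the integrand does not reproduce the integrand after $p\to-p$; what one actually obtains is \rf{reality'}, namely that $[\,\cdot\,]^*$ equals the coefficient with all three labels $s_\1,s_\2,s_\3$ replaced by $-s_\1,-s_\2,-s_\3$. Reality \rf{reality} then follows only by combining this with the Weyl symmetries, so part (3) depends on part (2), which in turn requires the integral identity you have not invoked.
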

The proof can be found in Appendix \ref{proofCGpropn}.

The unitarity of of the Clebsch-Gordan maps $\SC_{s_\2s_\1}$ 
is equivalent to the following orthogonality
and completeness relations for the Clebsch-Gordan coefficients,
\begin{align}
& \int_{\BR^2} dp_{\2}dp_{\1}\;
\big[\big(\,{}^{s_{\2\1}}_{p_{\2\1}}\,|\,{}^{s_\2}_{p_\2}\,
{}^{s_\1}_{p_\1}\,\big)\big]^*
\big(\,{}^{s_{\2\1}'}_{p_{\2\1}'}\,|\,
{}^{s_\2}_{p_\2}\,{}^{s_\1}_{p_\1}\,\big) = 
\de(p_{\2\1}-p_{\2\1}')\de(s_{\2\1}-s_{\2\1}')\\
& \int_{\BR_+}d\mu(s_{\2\1})\int_\BR dp_{\2\1}\;
\big[\big(\,{}^{s_{\2\1}}_{p_{\2\1}}\,|\,{}^{s_\2}_{p_\2}\,
{}^{s_\1}_{p_\1}\,\big)\big]^*
\big(\,{}^{s_{\2\1}}_{p_{\2\1}}\,|\,
{}^{s_\2}_{p_\2'}\,{}^{s_\1}_{p_\1'}\,\big) = 
\de(p_\1-p_\1')\de(p_\2-p_\2')\,.
\end{align}

We finally want to compare our results with those of \cite{PT2}.
In this reference 
the authors 
constructed Clebsch-Gordan maps $\SC'{}^{s_3}_{s_\2,s_\1}:
\CP_{s_\2}'\ot\CP_{s_\1}'\ra\CP_{s_\3}'$ as integral operators
of the form
\begin{equation}\label{CGdef}
(\SC'{}^{s_\3}_{s_\2,s_\1}\psi)(x_3)\,=\,
\int_{\BR^2}dx_\1dx_\2\;
\big(\,{}^{s_\3}_{x_\3}\,|\,{}^{s_\2}_{x_\2}\,{}^{s_\1}_{x_\1}\big)_b^{}
\,\psi(x_\2,x_\1)\,,
\end{equation}
where
\begin{align} \label{b-3ja}
\big(\,{}^{s_3}_{x_3}\,|\,{}^{s_\2}_{x_\2}\,{}^{s_\1}_{x_\1}\big)_b^{} &
 = 
N(s_\3,s_\2,s_\1)D_{-\frac{1}{2} (s_\1+s_\2+s_3+c_b)}
\big(x_\2-x_\1-\fr{s_3+c_b}{2}\big)
 \\ 
& \quad\times D_{-\frac{1}{2}(s_\2-s_3-s_\1+c_b)}^{}
\big(x_\2-x_3- \fr{s_\1+c_b}{2}\big)
D_{-\frac{1}{2}(s_\1-s_3-s_\2+c_b)}^{}
\big(x_3-x_\1- \fr{s_\2+c_b}{2}\big)\,. 
\notag\end{align}
The normalization factor will be chosen as
\begin{align}
N(s_\3,s_\2,s_\1)=\left(\frac{w_b(s_\1+s_\2+s_\3)w_b(s_\1+s_\2-s_\3)}
{w_b(s_\1+s_\3-s_\2)w_b(s_\2+s_\3-s_\1)}\right)^{\frac{1}{2}}\,.
\end{align}
\begin{propn}\label{Comppropn}
We have
\begin{equation}
\SC'{}^{s_3}_{s_\2,s_\1}\,=\,
\SU_{s_\3}\cdot\SC{}^{s_3}_{s_\2,s_\1}\cdot(\SU_{s_\2}^{-1}\ot\SU_{s_\1}^{-1})\,.
\end{equation}
\end{propn}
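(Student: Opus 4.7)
The plan is to establish the identity through a Schur-type argument, combined with an explicit check that the normalizations match. First, I would recall that the operator $\SU_s=e^{-\frac{\pi i}{2}\spp^2}w_b(\spp-s)$ introduced earlier implements the unitary equivalence $\pi'_s(X)=\SU_s\cdot\pi_s(X)\cdot\SU_s^{-1}$ between the Whittaker realization $\pi_s$ of the principal series on $\CP_s$ and the canonical realization $\pi'_s$ on $\CP'_s$. Combining this with the intertwining properties \rf{intertw} of $\SC^{s_\3}_{s_\2 s_\1}$, a short manipulation shows that the operator
$$
\widetilde{\SC}^{s_\3}_{s_\2 s_\1}\,:=\,\SU_{s_\3}\cdot\SC^{s_\3}_{s_\2 s_\1}\cdot(\SU_{s_\2}^{-1}\ot\SU_{s_\1}^{-1})
$$
satisfies the intertwining identity
$
\widetilde{\SC}^{s_\3}_{s_\2 s_\1}\cdot(\pi'_{s_\2}\ot\pi'_{s_\1})(X)=\pi'_{s_\3}(X)\cdot\widetilde{\SC}^{s_\3}_{s_\2 s_\1}
$
for every generator $X$ of the modular double. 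The operator $\SC'{}^{s_\3}_{s_\2,s_\1}$ of \cite{PT2} was constructed to satisfy precisely the same intertwining property.

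Next, I would apply a Schur-type argument. The factorized construction \rf{CGMconstr} together with the completeness and orthogonality of the $\SQ_\1''$-eigenfunctions \rf{L1-eigenf} established by Kashaev in \cite{Ka4} implies that $\CP'_{s_\2}\ot\CP'_{s_\1}$ decomposes into irreducibles with multiplicity one. Since the principal series representations $\CP'_s$ are themselves irreducible (see \cite{PT1,BT1}), the space of intertwiners from $\CP'_{s_\2}\ot\CP'_{s_\1}$ to $\CP'_{s_\3}$ is one-dimensional, so
$$
\SC'{}^{s_\3}_{s_\2,s_\1}\,=\,\lambda(s_\3,s_\2,s_\1)\,\widetilde{\SC}^{s_\3}_{s_\2 s_\1}
$$
for some scalar $\lambda(s_\3,s_\2,s_\1)$ depending only on the three weights.

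The last step is to show that the choice of $\nu^{\mss_{\2\1}}_{s_\2 s_\1}$ fixed in Proposition \ref{CGpropn:T}, together with the choice of $N(s_\3,s_\2,s_\1)$ recorded above the statement, yields $\lambda\equiv 1$. For this I would translate the kernel \rf{b-3j} to the $x$-representation by convolving with the kernel of $\SU_{s_\3}$ on the left and with those of $\SU_{s_\2}^{-1}\ot\SU_{s_\1}^{-1}$ on the right. Using \rf{eb-wb} to convert between $e_b$ and $w_b$, the Gaussian exponentials and $w_b$-prefactors in \rf{b-3j} recombine with the factors $e^{-\frac{i\pi}{2}\spp^2}$ and $w_b(\spp-s)$ coming from the $\SU$'s to reproduce the $w_b$-normalization and the Gaussian structure of \rf{b-3ja}. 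The remaining $p$-integral in \rf{b-3j} then reduces, via a Fourier-type identity, to the product of the three $D_{-\frac{1}{2}(\cdots)}$ factors in \rf{b-3ja}. The main obstacle here is the careful tracking of phases and of the $w_b$-shift identities needed to perform this recombination cleanly; the reality \rf{reality} and Weyl-symmetry \rf{Weyl} properties already verified in Proposition \ref{CGpropn:T} provide convenient consistency checks on the resulting normalization.
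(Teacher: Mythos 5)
Your proposal is correct in substance, but it is organized differently from the paper's proof, and the extra structure you introduce is largely scaffolding around the same computation. The paper proves Proposition \ref{Comppropn} by a single direct calculation: it Fourier-transforms the position-space kernel \rf{b-3ja} of $\SC'{}^{s_\3}_{s_\2,s_\1}$ to momentum space (using the transform \rf{D-FT} of $D_a$ and the star-triangle--type identities), arrives at \rf{bCGold}, and then compares term by term with the kernel \rf{b-3j} of $\SC^{s_\3}_{s_\2s_\1}$; the resulting ratio \rf{3jcomp} is read off as exactly the product of the diagonal momentum-space kernels $e^{-\frac{\pi i}{2}p^2}w_b(p-s)$ of the operators $\SU_{s}$, times the normalization $N(s_\3,s_\2,s_\1)$. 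Your step three is this same computation run in the opposite direction (conjugating \rf{b-3j} by the $\SU$-kernels and landing on \rf{b-3ja}), so the computational core coincides. What you add is the preliminary Schur-type argument: both operators intertwine $(\pi'_{s_\2}\ot\pi'_{s_\1})$ with $\pi'_{s_\3}$, and multiplicity-one (from the simplicity of the spectrum of $\SQ_{\2\1}$, i.e.\ Kashaev's result) plus irreducibility of $\CP'_{s}$ forces proportionality by a scalar $\lambda(s_\3,s_\2,s_\1)$. This is a legitimate and potentially economical route --- it would in principle let you fix $\lambda$ by matching a single asymptotic or special value rather than performing the full integral transformation --- but as written you still carry out the full kernel comparison, which makes the Schur step redundant. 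Two small caveats if you want to lean on it seriously: the intertwiners here are densities in a direct-integral decomposition, so ``one-dimensionality of the intertwiner space'' and the resulting $\lambda$ are only defined for almost every $s_\3$ and require a measurability remark; and you should make sure the point at which you pin down $\lambda$ is one where neither kernel vanishes. Neither issue is fatal, and your consistency checks via \rf{Weyl} and \rf{reality} are sensible.
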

The proof is given in Appendix \ref{PTvsNT}.

\subsection{The fusion operation}

Let us now consider tensor products of three representations.
There are two natural ways to construct unitary operators
\begin{equation}\label{321target}
\SC_{s_\3s_\2s_\1}:\CP_{s_\3}\ot\CP_{s_\2}\ot\CP_{s_\1}\,\ra\, \CM\ot
\int^{\oplus}d\mu(s)\; e_s\,,
\end{equation}
that satisfy
\begin{equation}
\SC_{s_\3s_\2s_\1}\cdot
(\pi_{s_\3}\ot\pi_{s_\2}\ot\pi_{s_\1})(X)\,=\,(\pi_{\CM}(X)\ot 1)
\cdot\SC_{s_\3s_\2s_\1} \,.
\end{equation}
In  \rf{321target} we used the notation $e_s$ for the one-dimensional
module of the algebra of functions $f:\BS\ra\BC$ with action
given as $f\cdot e_s=f(s) e_s$. 
The variable $s$ represents the multiplicity 
with which the representation $\CM$ appears in the triple 
tensor product $\CP_{s_\3}\ot\CP_{s_\2}\ot\CP_{s_\1}$.
Two such operators can be constructed as
\begin{align}
\SC_{s_\3(s_\2s_\1)}:=
\,\int^{\oplus}d\mu(s_{\2\1})\;
\SC_{s_\3s_{\2\1}}\cdot(1\ot\SC^{s_{\2\1}}_{s_\2s_\1})\,,\\
\SC_{(s_\3s_\2)s_\1}:=
\,\int^{\oplus}d\mu(s_{\3\2})\;
\SC_{s_{\3\2}s_\1}\cdot(\SC^{s_{\3\2}}_{s_\3s_\2}\ot 1)\,.
\end{align}
The fusion operator $\SA_{s_\3s_\2s_\1}:
\int^{\oplus}d\mu(s_{\3\2})\int^{\oplus}d\mu(s_\4)\;\CP_{s_\4}\ra
\int^{\oplus}d\mu(s_{\2\1})\int^{\oplus}d\mu(s_\4)\;\CP_{s_\4}$ 
is defined as
\begin{align}
\label{fusdef}
& \SA_{s_\3s_\2s_\1}:=\SC_{s_\3(s_\2s_\1)}\cdot\big[\SC_{(s_\3s_\2)s_\1}\big]^{\dagger}\,.
\end{align}
This operator commutes with $\pi_{s_\4}$ and is therefore of the form
\begin{equation}
 \SA_{s_\3s_\2s_\1}\,=\,\int^{\oplus}d\mu(s_\4)\;\SA^{s_\4}_{s_\3s_\2s_\1}\,,
\end{equation}where $\SA^{s_\4}_{s_\3s_\2s_\1}$ is a unitary operator
$\SA^{s_\4}_{s_\3s_\2s_\1}:
\int^{\oplus}d\mu(s_{\3\2})\,e_{s_{\3\2}}\longrightarrow
\int^{\oplus}d\mu(s_{\2\1})\,e_{s_{\2\1}}\simeq
L^2(\BS,d\mu).
$

\subsection{The b-j symbols}

The b-6j symbols are defined as the matrix elements of the operator
$\SA^{s_\4}_{s_\3s_\2s_\1}$,
\begin{equation}
\big\{\,{}^{s_\1}_{s_\3}\;{}^{s_\2}_{s_\4}\;{}^{s_{\2\1}}_{s_{\3\2}}\big\}_b^{}:=
\langle\,s_{\2\1}\,|\,\SA^{s_\4}_{s_\3s_\2s_\1}\,|\,s_{\3\2}\,\rangle\,.
\end{equation}
Proposition \ref{Comppropn} allows us to use the results from 
\cite{PT2,TV} for the calculation of these matrix elements. The
result is
\begin{align}\label{6j3}
\big\{\,{}^{s_1}_{s_4}\;{}^{s_2}_{s_5}\;{}^{s_3}_{s_6}\big\}_b^{}
&=\Delta(\al_3,\al_2,\al_1)\Delta(\al_5,\al_4,\al_3)\Delta(\al_6,\al_4,\al_2)
\Delta(\al_6,\al_5,\al_\1)\\
&\qquad \times\int\limits_{\CC}du\;
S_b(u-\alpha_{321}) S_b(u-\al_{543}) S_b(u -\alpha_{642}) 
S_b(u-\alpha_{651})
\notag \\[-1.5ex] & \hspace{2cm}\times 
S_b( \alpha_{4321}-u) S_b(\alpha_{6431}-u) 
S_b(\alpha_{6532}-u) S_b(2Q-u)\,.
\notag\end{align}
The expression involves the following ingredients:
\begin{itemize}
\item We have used the notations $\al_i=\frac{Q}{2}+{\mathrm i}s_i$, as
well as
$\al_{ijk}=\al_i+\al_j+\al_k$,
$\al_{ijkl}=\al_i+\al_j+\al_k+\al_l$ for $i,j,k,l\in\{1,2,3,4,5,6\}$.
\item $\Delta(\al_3,\al_2,\al_1)$ is defined as 
\begin{align}
&\Delta(\al_3,\al_2,\al_1)=\bigg(\frac{S_b(\alpha_1+\alpha_2+\alpha_s-Q)}{S_b(\alpha_1+\alpha_2-\alpha_s) 
S_b(\alpha_1+\alpha_s-\alpha_2) S_b(\alpha_2+\alpha_s-\alpha_1)} 
\bigg)^{\frac{1}{2}}\,.
\notag\end{align}
\item The integral is defined in the cases
that $\al_k\in Q/2+\textup{i}\BR$ by a contour $\CC$ which 
approaches $2Q+\textup{i}\BR$ near infinity,
and passes the real axis in the interval $(3Q/2,2Q)$.
For other values of the variables $\al_k$ it is defined by analytic 
continuation.
\end{itemize}




\section{Quantum Teichm\"uller theory}
\setcounter{equation}{0}

This section presents the definitions and results from the 
quantum Teichm\"uller theory that will be needed in this paper.
We will use the formulation introduced by R. Kashaev \cite{Ka1}, 
see also \cite{T05} for a more detailed exposition and a discussion
of its relation to the framework of Fock \cite{F97} and
Chekhov and Fock \cite{CF}. The formulation from \cite{Ka1}
starts from the quantization of a somewhat enlarged 
space $\hat{\CT}(C)$. The usual Teichm\"uller space $\CT(C)$ 
can then be characterized as subspace of $\hat{\CT}(C)$ using 
certain linear constraints. This is motivated by the observation 
that the spaces $\hat{\CT}(C)$ have natural polarizations,
which is not obvious in the formulation of \cite{F97,CF}.

\subsection{Algebra of operators and its representations}

For a given surface $C$ with 
constant negative curvature metric and at least one puncture one considers
ideal triangulations $\tau$. Such ideal triangulations are defined
by maximal collection of non-intersecting open geodesics which start and
end at the punctures of $C$. We will assume that the triangulations
are decorated, which means that a distinguished corner is chosen in
each triangle.

We will find it convenient
to parameterize triangulations by their dual graphs which are called
fat graphs $\vf_\tau$. The vertices of $\vf_\tau$ are in one-to-one correspondence
with the triangles of $\tau$, and the edges of $\vf_\tau$ are in 
one-to-one correspondence
with the edges of $\tau$. The relation between a triangle $t$ in 
$\tau$ and the fat graph $\vf_{\tau}$ is depicted in Figure \ref{decor}.
$\vf_\tau$ inherits a natural decoration of its
vertices from $\tau$, as is also indicated in Figure \ref{decor}.
\begin{figure}[htb]
\epsfxsize5cm
\centerline{\epsfbox{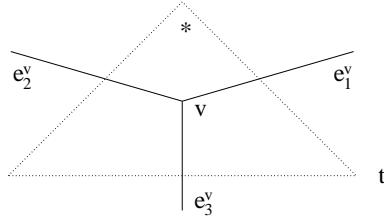}}
\caption{\it Graphical representation of the vertex $v$ dual
to a triangle $t$. The marked corner defines a
corresponding numbering of the edges that emanate at $v$.}
\label{decor}
\end{figure}

The quantum theory associated to the Teichm\"uller space $\CT(C)$
is defined on the kinematical level by associating to each vertex $v\in\vf_0$,
$\vf_0=\{{\rm vertices\;\,of\;\,}\vf\}$,
of $\vf$ a pair of generators $p_v,q_v$ which are supposed to satisfy
the relations
\begin{equation}
\big[\,p_v\,,\,q_{v'}\,\big]\,=\,\frac{\de_{vv'}}{2\pi i}\,.
\end{equation}
There is a natural representation of this algebra on the 
Schwarz space $\hat{\CS}(C)$ of rapidly decaying smooth functions $\psi(q)$, 
$q:\vf_0\ni v\ra q_v$, generated from 
$\pi_\vf(q_v):=\sq_v$, $\pi_\vf(p_v):=\spp_v$, where
\begin{equation}
\sq_v\,\psi(q):=\,q_v\psi(q)\,,\qquad
\spp_v\,\psi(q):=\,\frac{1}{2\pi i}\frac{\pa}{\pa q_v}\psi(q)\,.
\end{equation}
For each surface $C$ 
we have thereby defined an algebra $\hat{\CA}(C)$ together with 
a family
of representations $\pi_\vf$ of $\hat{\CA}(C)$ on the 
Schwarz spaces $\hat{\CS}_\vf(C)$ which are dense subspaces of
the Hilbert space $\CK(\vf)\simeq L^2(\BR^{4g-4+2n})$. 
The next step is to show that the choice of fat graph $\vf$ is
inessential by constructing unitary operators 
$\pi_{\vf_\2\vf_\1}:\CK(\vf_1)\ra\CK(\vf_2)$
intertwining the 
representations  $\pi_{\vf_1}$ and $\pi_{\vf_\2}$.

\subsection{The projective representation of the 
Ptolemy groupoid on $\CK(\vf)$ \index{Ptolemy groupoid}}
\label{qPtolemy}

The groupoid generated by the changes from 
one fat graph to another is called the Ptolemy groupoid.
It can be described in terms of generators and relations,
see e.g. \cite[Section 3]{T05} for a summary of the relevant results 
and further references.

Following \cite{Ka3} closely we shall define a 
projective unitary representation of the Ptolemy groupoid in 
terms of the following set of 
unitary operators 
\begin{equation}\begin{aligned}
\SA_v\;\equiv\;& e^{\frac{\pi i}{3}}
e^{-\pi i (\spp_v+\sq_v)^2}e^{-3\pi i \sq_v^2}\\
\ST_{vw}\;\equiv\;& e_b(\sq_v+\spp_w-\sq_w)e^{-2\pi i\spp_v\sq_w} ,
\end{aligned}\qquad
\text{where}\;\;v,w\in\vf_{\zero}\,.
\label{qPtgens}\end{equation}
The special function $e_b(U)$ can be defined in the strip 
$|\Im z|<|\Im c_b|$, $c_b\equiv i(b+b^{-1})/2$ by means of the 
integral representation
\begin{equation}
\log e_b(z)\;\equiv\;\frac{1}{4}
\int\limits_{i0-\infty}^{i0+\infty}\frac{dw}{w}
\frac{e^{-2{\mathsf i}zw}}{\sinh(bw)
\sinh(b^{-1}w)}.
\end{equation}
These operators are unitary for $(1-|b|)\Im b=0$. They satisfy the 
following relations \cite{Ka3}
\begin{subequations}
\begin{align}
{\rm (i)}& 
\qquad\ST_{vw}\ST_{uw}\ST_{uv}\;=\;\ST_{uv}\ST_{vw},\label{pentrel}\\
{\rm (ii)}& \qquad\SA_{v}\ST_{uv}\SA_{u}\; =\; 
\SA_{u}\ST_{vu}\SA_{v},
\label{symrel}\\
{\rm (iii)}& \qquad\ST_{vu}\SA_{u}\ST_{uv}
\;=\;\zeta\SA_{u}\SA_{v}\SP_{uv},
\label{invrel}\\
{\rm (iv)}& \qquad\SA_u^3\;=\;\id,\label{cuberel}
\end{align}
\end{subequations}
where $\zeta=e^{\pi i c_b^2/3}$, $c_b\df\frac{i}{2}(b+b^{-1})$.
The relations 
\rf{pentrel} to \rf{cuberel} allow us to define a projective 
representation of the Ptolemy groupoid as follows.  
\begin{itemize}
\item Assume that $\om_{uv}\in[\vf',\vf]$.
To $\om_{uv}$ let us associate
the operator
\[ 
\su(\omega_{uv})\,\df\,\ST_{uv}\;:\;\CK(\vf)\ni \fv\;\,\ra\;\,
\ST_{uv}\fv\in\CK(\vf').
\]
\item For each fat graph $\vf$ and vertices $u,v\in \vf_{\zero}$
let us define the following operators
\[ \begin{aligned}
{}& \SA_{u}^{\vf}\;:\;\CK(\vf)\ni \fv\;\,\ra\;\,
\SA_{u}^{}\fv\in\CK(\rho_{u}\circ\vf).\\
{}& \SP_{uv}^{\vf}\;:\;\CK(\vf)\ni \fv\;\,\ra\;\,
\SP_{uv}^{}\fv\in\CK((uv)\circ\vf).
\end{aligned}
\]
\end{itemize}
It follows immediately from \rf{pentrel}-\rf{cuberel}
that the operators $\ST_{uv}$, $\SA_{u}$ 
and $\SP_{uv}$ can be used to generate
a unitary projective representation of the Ptolemy groupoid.

The corrsponding automorphisms of the algebra $\CA(C)$ are
\begin{equation}
\sa_{\vf_\2\vf_\1}(\SO):=
\sa\sd[\SU_{\vf_\2\vf_\1}](\SO):=
\SU_{\vf_\2\vf_\1}^{}\cdot\SO\cdot\SU_{\vf_\2\vf_\1}^{}\,.
\end{equation}
The automorphism $\sa_{\vf_\2\vf_\1}$
generate the canonical quantization
of the changes of coordinates for $\hat{\CT}(C)$ 
from one fat 
graph to another \cite{Ka1}.

\subsection{The reduction to the Teichm\"uller spaces}
\label{constraints}

Recall that 
the quantum theory defined in this way is not quite the one we
are interested in. It is the 
quantum theory of an enlarged space $\hat{\CT}(C)$
which is the product of the 
Teichm\"uller space with the first homology of $C$, both 
considered as real vector spaces \cite{Ka1,T05}.
The embedding of the 
Teichm\"uller space $\CT(C)$ into  $\hat{\CT}(C)$
can be described classically 
in terms of a certain set of constraints $z_c=0$ which
characterize the locus of  $\CT(C)$ within  $\hat{\CT}(C)$.

To define the quantum representatives of the constraints 
let us introduce an embedding  
of the first homology $H_1(\Sigma,\BR)$ into $\hat{\CT}(C)$ as follows.
Each graph geodesic  $g_\ga$ which 
represents an element $\ga\in H_1(\Sigma,\BR)$ may 
be described by an ordered sequence of vertices
$v_i\in\vf_\0$, and edges $e_{i}\in\vf_\1$,
$i=0,\dots,n$, where $v_0=v_n$, $e_0=e_n$, and we assume 
that $v_{i-1}$, $v_{i}$ 
are connected by the single edge $e_i$. We will define $\omega_i=1$
if the arcs connecting $e_i$ and $e_{i+1}$ turn around the 
vertex $v_i$ in the counterclockwise sense, $\omega_i=-1$ otherwise.
The edges emanating from $v_i$ will be numbered
$e^i_{j}$, $j=1,2,3$ according to the convention introduced in Figure
\ref{decor}. 
To each $c\in H_1(\Sigma,\BR)$ we will assign
\begin{equation}\label{constrdef}
z_c\df\sum_{i=1}^{n} u_i,\qquad 
u_i:\;=\;\omega_i\left\{ \begin{array}{ll} 
-q_{v_i} &\;\text{if $\{e_{i},e_{i+1}\}=\{e_3^i,e_1^i\}$,}\\ 
\phantom{-}p_{v_i}   &\;\text{if $\{e_{i},e_{i+1}\}=\{e_2^i,e_3^i\}$,}\\ 
q_{v_i}-p_{v_i} & \; \text{if $\{e_{i},e_{i+1}\}=\{e_1^i,e_2^i\}$.}
\end{array}\right. 
\end{equation}
Let $C_{\vf}$ be the
subspace in $\hat{\CT}(C)$ that is spanned by the $z_{c}$, 
$c\in H_1(\Sigma,\BR)$.

\begin{lem} $\frac{\quad}{}$ \cite{Ka1}
The mapping $H_1(\Sigma,\BR)\ni c\mapsto z_{c}\in C_\vf$
is an isomorphism of Poisson vector spaces.
\end{lem}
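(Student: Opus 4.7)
The plan is to verify four properties of the assignment $\Phi \colon c \mapsto z_c$ from \eqref{constrdef}: linearity on 1-chains, well-definedness on homology classes, bijectivity onto $C_\vf$, and compatibility with the Poisson structures. Surjectivity onto $C_\vf$ is tautological, since $C_\vf$ was defined as the span of the image.

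Linearity is immediate, because the sum $\sum_i u_i$ concatenates under concatenation of graph-geodesic representatives. For well-definedness on $H_1(\Sigma,\BR)$, it suffices to show $z_\gamma = 0$ whenever $\gamma$ bounds, and by the standard cellular description of homology on $\vf_\tau$ this reduces to invariance under two elementary moves on closed edge paths: insertion of a back-and-forth edge pair (trivial since $\omega_i$ flips sign on reversal and the two contributions cancel), and replacement of a subarc by the complementary arc around a single face of the fat graph. The second is the real content and amounts to a finite check, using the decoration rules of Figure \ref{decor} and the case split in \eqref{constrdef}, that the sum of $u_i$ contributions around any face vanishes. For injectivity, write $z_c$ in the basis $\{p_v, q_v\}_{v\in\vf_0}$ and collect terms vertex by vertex: since the contribution at each $v$ is linear in $(p_v, q_v)$ and determined by the local incoming--outgoing edge pattern of $\gamma$ at $v$, vanishing of $z_c$ forces $\gamma$ to turn trivially at every vertex, hence to be a boundary.

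The main work, and the principal obstacle, is to show that $\Phi$ is a Poisson map, i.e.\ that $\{z_c, z_{c'}\}$ equals the intersection pairing $\langle c, c' \rangle$ on $H_1(\Sigma,\BR)$. From the canonical brackets $\{p_v, q_w\} = \delta_{vw}/(2\pi)$ (the classical limit of the commutation relations generating $\pi_\vf$), only vertices $v$ visited by both graph-geodesic representatives $g_\gamma$ and $g_{\gamma'}$ contribute to $\{z_c, z_{c'}\}$. At each such shared vertex the contribution is $\omega_i\,\omega_j'\,\{u_i, u_j'\}$, where $(u_i, u_j')$ are read off from \eqref{constrdef} according to the pair of edge-patterns used at $v$ by the two geodesics. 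The target local claim is that this bracket equals exactly the signed intersection index of the two arcs at $v$, a value in $\{-1,0,+1\}$ determined by the relative twist of the two arcs through the vertex; summing over shared vertices then reproduces the global intersection form. The tedious but essential step is the case analysis at a shared vertex: with three edge-pattern classes per geodesic there are nine combinations to check, and the $\omega_i$ signs must be tracked coherently with the decoration of Figure \ref{decor}. Once this local identity is verified, combining it with the preceding steps yields the asserted isomorphism of Poisson vector spaces.
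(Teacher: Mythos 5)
First, a point of comparison: the paper contains no proof of this lemma. It is imported from Kashaev's work, as the citation to [Ka1] indicates, so there is no in-paper argument to measure your proposal against. Judged on its own, your outline has the right architecture and is essentially the standard one: surjectivity onto $C_\vf$ is indeed tautological, the bracket $\{z_c,z_{c'}\}$ localizes at vertices visited by both curves because $\{p_v,q_{v'}\}\propto\delta_{vv'}$, and the matching with the signed local intersection index ultimately rests on the cyclic symmetry of the three possible contributions $-q_v$, $p_v$, $q_v-p_v$ at a trivalent vertex, whose pairwise brackets are all equal. That strategy would work if carried out.

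As written, however, the proposal is a plan rather than a proof, and one step would fail as stated. The two verifications that carry all of the content --- the cancellation of $\sum_i u_i$ around a face and the nine-case computation of the local bracket --- are announced but not performed. More seriously, the injectivity argument claims that vanishing of $z_c$ ``forces $\gamma$ to turn trivially at every vertex.'' This is false as a local statement: the three contributions satisfy $-q_v+p_v+(q_v-p_v)=0$, so a chain making a full turn around a vertex contributes nothing there, and distinct passages through the same vertex can cancel without being locally trivial. Deciding which globally cancelling configurations are null-homologous is precisely the content of injectivity (and of well-definedness); once the bracket computation is in place, the clean route is a rank/dimension count rather than the vertex-by-vertex claim you make. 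Relatedly, well-definedness is subtler than your two elementary moves suggest: $u_i$ depends on the \emph{corner} $(e_i,e_{i+1})$ used at $v_i$, not merely on the traversed edges, so independence of the choice of closed edge-path representing a given $1$-cycle must be checked separately; and since $\vf$ is a spine of the punctured surface, a face boundary is a loop around a puncture, which is not null-homologous and whose constraint is a nonzero central element (compare the annulus constraint $\frac{1}{2}(\spp_a-\sq_a+\spp_b)$ in the paper), so the proposed ``face move'' cannot be invoked to establish well-definedness without first settling which homology group --- of the punctured or of the filled surface --- the statement refers to.
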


Replacing $q_v$ by $\sq_v$ and $p_v$ by $\spp_v$ in the definition
above gives the definition of the operators $\sz_c\equiv \sz_{\vf,c}$ 
which represent
the constraints in the quantum theory.  
Let us note that the constraints transform under a
change of fat graph as
$\sa_{\vf_\2\vf_\1}(\sz_{\vf_\1,c})=\sz_{\vf_\2,c}$.

\subsection{Length operators}

A particularly important class of coordinate functions on the
Teichm\"uller spaces are the geodesic length functions.
The quantization of these observables was studied in 
\cite{CF,CF2,T05}.

Such length operators can be constructed in general 
as follows \cite{T05}. 
We will first define
the length operators for a case in which the 
choice of fat graph $\vf$ simplifies the representation of the 
curve $c$. 
We then explain how to
generalize the definition to all other cases.

Let $A_c$ be an annulus embedded in the surface $C$
containing the curve $c$, and let 
$\vf$ be a fat graph which looks inside of $A_c$ 
as depicted in Figure \ref{annfig}.
\begin{equation}\label{annfig}
\lower.9cm\hbox{\epsfig{figure=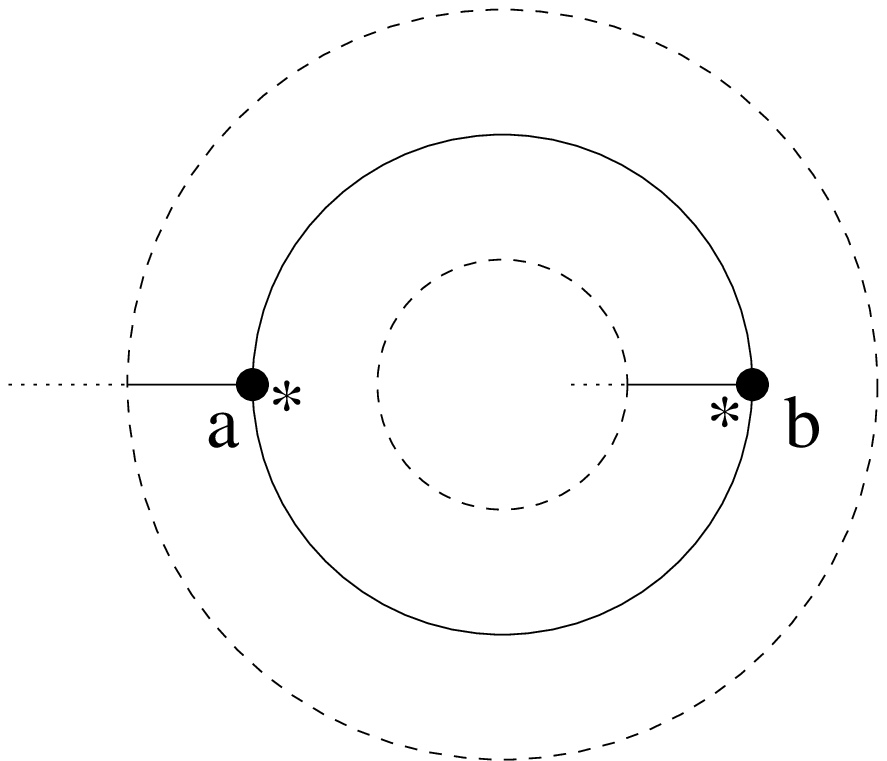,height=3cm}}
\qquad
\begin{aligned}
& \text{Annulus $A_c$: Region bounded} \\
& \text{by the two dashed circles,}\\
& \text{and part of $\vf_\si$ contained in $A_c$.}
\end{aligned}
\end{equation}
Let 
\begin{equation}\label{SLsimple}
\SL_{\vf,c}:=\,2\cosh 2\pi b\spp_c+
e^{-2\pi b\sq_c}\,,
\end{equation}
where $\spp_c:=\frac{1}{2}(\spp_a-\sq_a-\spp_b)$, 
$\sq_c:=\frac{1}{2}(\sq_a+\spp_a+\spp_b-2\sq_b)$.

In all remaining cases we will define the length operator
$\SL_{\vf,c}$ as follows: There always exists
a fat graph $\vf_0$ for which the definition 
above can be used to define 
$\SL_{\vf_0,c}$. Let then
\begin{equation}\label{genlength}
\SL_{\vf,c}:=\sa_{\vf,\vf_0}^{}(\SL_{\vf_0,c})\,.
\end{equation}
It can be shown that the length operators $\SL_{\vf,c}$ are unambigously
defined in this way \cite{T05}.

The length operators satisfy the following properties:
\begin{enumerate}
\item[(a)] {\bf Spectrum:} $\SL_{\vf,c}$ is self-adjoint.
The spectrum of $\SL_{\vf,c}$ is simple and equal
to $[2,\infty)$. This is necessary and sufficient for the existence
of an operator $\sll_{\vf,c}$ - the {\it geodesic length operator} - 
such that 
$\SL_{\vf,c}=2\cosh\frac{1}{2}\sll_c$.
\item[(b)] {\bf Commutativity:} 
\[
\big[\,\SL_{\vf,c}\,,\,\SL_{\vf,c'}\,\big]\,=\,0\quad
{\rm if}\;\; c\cap c'=\emptyset.
\]
\item[(c)] {\bf Mapping class group invariance:}
\[ 
\sa_\mu(\SL_{\vf,c})\,=\,\SL_{\mu.\vf,c},
\quad\sa_\mu\equiv\sa_{[\mu.\vf,\vf]},\quad
\text{for all}\;\;\mu\in{\rm MC}(\Sigma).
\]
\end{enumerate}
It can furthermore be shown that this definition reproduces the classical
geodesic length functions on $\CT(C)$ in the classical limit.

As an example for the use of 
\rf{genlength} that will be important for the following let us
assume that the curve $c$ is the boundary component
of a trinion $P_c$ embedded in $C$ within which the fat graph $\vf'$ 
looks as follows:
\begin{equation}\label{trinfig}
\lower.9cm\hbox{\epsfig{figure=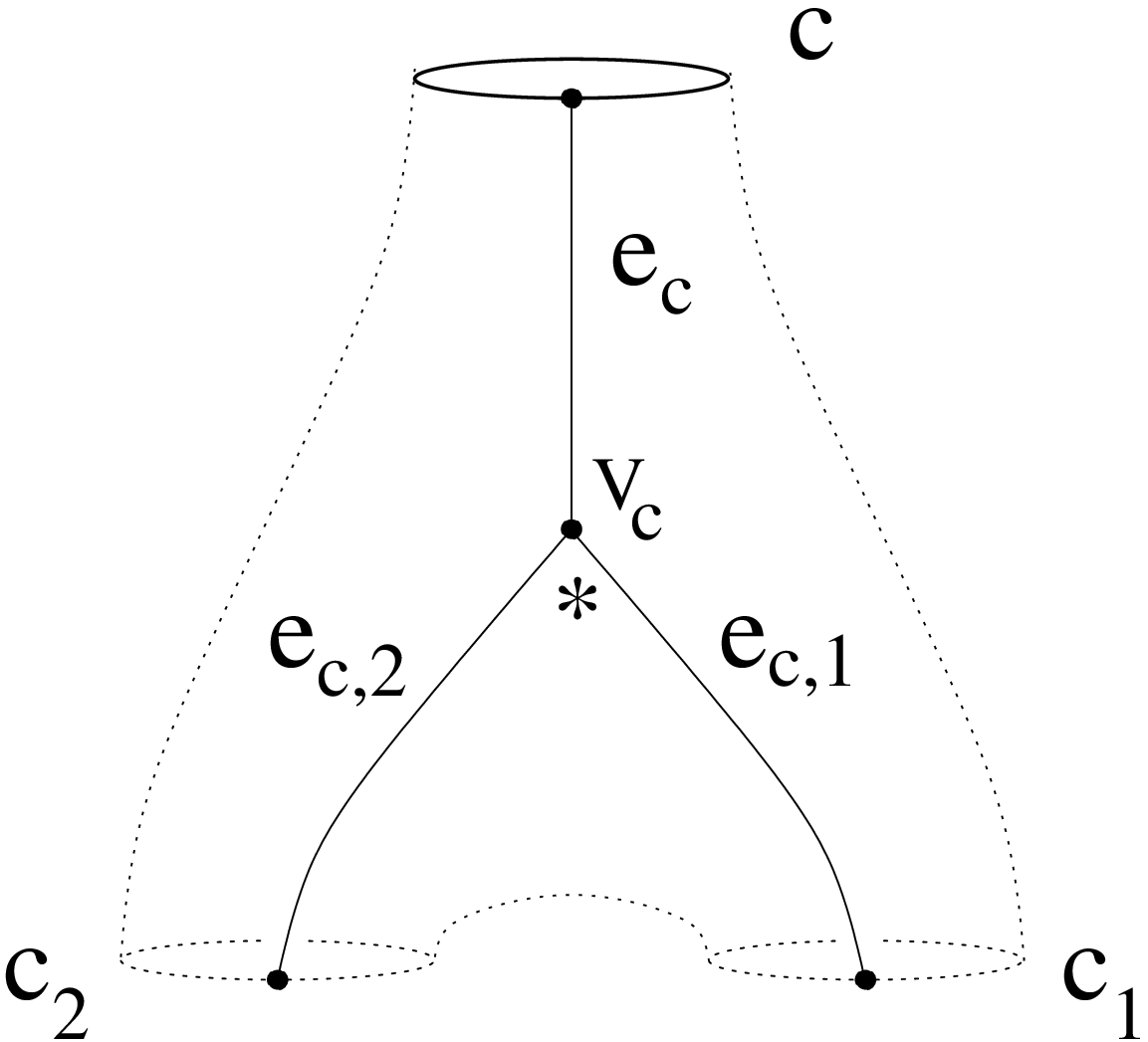,height=3cm}}\,.
\end{equation}
Let $c_\ep$, $\ep=1,2$ be the curves
which represent the other boundary components
of $P_c$ as 
indicated in Figure \ref{trinfig}.

\begin{propn}\label{Lreppropn}
$\SL_c$ is given by
\begin{equation}\label{modLlem2}
\SL_{\vf',c}\;=\;2\cosh(\sy_c^{\2}+\sy_c^{\1})
+e^{-\sy_c^{{\2}}}\SL_{c_\1}^{}
+e^{\sy_c^{\1}}\SL_{c_\2}^{}
+e^{\sy_c^{\1}-\sy_c^{\2}}\, ,
\end{equation}
where $\sy_c^{{\ep}}$, $\ep=1,2$
are defined as
$\sy_c^{\2}=2\pi b ({\sq}_{c}+\sz_{c_\2})$, 
$\sy_c^{\1}=- 2\pi b ({\spp}_{c}-\sz_{c_\1})$.
\end{propn}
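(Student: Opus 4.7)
The plan is to apply the general definition \rf{genlength} of length operators with a carefully chosen reference fat graph. Specifically, I would first construct a fat graph $\vf_0$ on $P_c$ in which the curve $c$ sits inside an annular neighborhood of the form depicted in \rf{annfig}, so that $\SL_{\vf_0,c}$ is given by the simple formula \rf{SLsimple} in terms of a pair of canonical operators $\spp_c^{(0)},\sq_c^{(0)}$ built from the vertex variables of $\vf_0$. The fat graph $\vf_0$ is then related to the trinion fat graph $\vf'$ of \rf{trinfig} by a short sequence of Ptolemy moves that opens the annulus around $c$ into a trinion so that $c_\1$ and $c_\2$ emerge as the remaining boundary components. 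My task reduces to computing $\sa_{\vf',\vf_0}(\SL_{\vf_0,c})$ explicitly.

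I would then conjugate the three summands of \rf{SLsimple} by the unitary $\SU_{\vf'\vf_0}$ built from the corresponding product of $\ST_{vw}$'s and $\SA_v$'s. The conjugation of the pure exponentials by the prefactors $e^{-2\pi i\spp_v\sq_w}$ and the Gaussian pieces of $\SA_v$ in \rf{qPtgens} produces linear substitutions in the arguments, while each conjugation by an $e_b$-factor expands a single exponential into a sum of two using the defining identity $e_b(x+ib/2)=(1+e^{2\pi b x})e_b(x-ib/2)$ (a consequence of \rf{qexp}). Collecting the resulting terms in the new coordinates, I would isolate the two independent linear combinations $\sy_c^{\1}=-2\pi b(\spp_c-\sz_{c_\1})$ and $\sy_c^{\2}=2\pi b(\sq_c+\sz_{c_\2})$ that commute with the constraints $\sz_{c_\1},\sz_{c_\2}$, and identify the coefficients of $e^{-\sy_c^{\2}}$ and $e^{\sy_c^{\1}}$ as operators supported on sub-annuli surrounding $c_\1$ and $c_\2$ respectively. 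By a second application of \rf{genlength} restricted to those sub-annuli, together with \rf{SLsimple}, these residual operators are precisely the length operators $\SL_{c_\1}$ and $\SL_{c_\2}$, yielding \rf{modLlem2}.

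The main obstacle is the bookkeeping of the shifts by $\sz_{c_\ep}$ in the definitions of $\sy_c^{\ep}$. Classically these shifts vanish and the formula reduces to the standard $SL(2,\BR)$ trace identity for the boundary holonomy of a trinion in shear coordinates; quantum mechanically, however, the shifts are forced on us because the naive candidates $-2\pi b\spp_c$, $2\pi b\sq_c$ fail to commute with the linking constraints and so would violate the consistency of \rf{genlength} under a change of fat graph preserving the trinion structure. I would therefore organize the computation so that the intermediate Ptolemy moves are chosen to preserve the annular neighborhoods of $c_\1,c_\2$ throughout, which both keeps $\SL_{c_\1}$ and $\SL_{c_\2}$ in standard form and forces the constraint shifts into $\sy_c^{\1},\sy_c^{\2}$ automatically. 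As an independent sanity check, I would verify that the right-hand side of \rf{modLlem2} commutes with $\sz_{c_\1},\sz_{c_\2}$ and with $\SL_{c_\1},\SL_{c_\2}$, is self-adjoint with spectrum $[2,\infty)$, and has the expected classical limit, which together with property (c) pins down the quantum expression uniquely.
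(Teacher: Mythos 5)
Your proposal follows essentially the same route as the paper's proof in Appendix B.2: there the change of graph is realized as the explicit three-move sequence $\rho_e^{-1}\circ\omega_{eb}$, $W_1$, $\omega_{ba}$, and \rf{SLsimple} is conjugated step by step, with each $e_b$-conjugation expanded via the shift equation $e_b(x-ib/2)/e_b(x+ib/2)=1+e^{2\pi bx}$ and the terms collected so that the constraint shifts $\sz_{c_\1},\sz_{c_\2}$ emerge in the exponents exactly as you describe. The only small inaccuracy is attributing that functional equation to \rf{qexp} (it is a property of $e_b$ itself, not of the operator identity), which does not affect the argument.
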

The proof of \rf{modLlem2} can be found in Appendix 
\ref{Proofapp}.

\subsection{The annulus}

As a basic building block let us develop the quantum 
Teichm\"uller theory of an annulus in some detail.
To the simple closed curve $c$ that can be embedded into $A$ we associate
\begin{itemize}
\item the constraint 
\begin{equation}
\sz:=\frac{1}{2}(\spp_a-\sq_a+\spp_b)\,,
\end{equation}
\item the length operator $\SL$ is defined as in \rf{SLsimple}.
\end{itemize}
The operator $\SL$ is positive-self-adjoint, and its spectral 
decomposition \cite{Ka4} was recalled in
the above. 

For later use let us construct the change of representation from 
the representation in which $\spp_a$ and $\spp_b$ are diagonal
to a representation where $\sz$ and $\SL$ are diagonal.
To this aim let us introduce $\sd:=\frac{1}{2}(\sq_a+\spp_a-\spp_b+2\sq_b)$.
We have 
\[
\begin{aligned}
{[}{\sz},\sd]=(2\pi i)^{-1}\,,\\
[\spp,\sq]=(2\pi i)^{-1}\,,
\end{aligned}\qquad
\begin{aligned}
&[\sz,\spp]=0\,,\quad
[\sz,\sq]=0\,,\\
&[\sd,\spp]=0\,,\quad
[\sd,\sq]=0\,.
\end{aligned}
\]
Let $\langle\,p,z\,|$ be an eigenvector of $\spp$ and $\sz$ with
eigenvalues $p$ and $z$, respectively, and
$|\,p_a,p_b\,\rangle$ an eigenvector of $\spp_a$ and $\spp_b$
with eigenvalues $p_a$ and $p_b$, respectively.
It follows easily that 
\begin{equation}
\langle\,p,z\,|\,p_a,p_b\,\rangle\,=\,\de(p_b-z+p)e^{\pi i(p+z-p_a)^2}\,.
\end{equation}
The transformation 
\begin{equation}\label{holediag}
\psi(s,z)\,=\,\int_{\BR^2}dpdp_a\;\frac{w_b(s-p+c_b-i0)}{w_b(s+p-c_b+i0)}
e^{\pi i(p+z-p_a)^2}\Psi(p_a,z-p)\,,
\end{equation}
will then map a wave function $\Psi(p_a,p_b)$ in the representation
which diagonalizes $\spp_a$, $\spp_b$ to the 
corresponding wave function $\psi(s,z)$ in the representation
which diagonalizes $\SL$ and $\sz$.

\subsection{Teichm\"uller theory for surfaces with holes}\label{ssec:hole}

The formulation of quantum Teichm\"uller theory introduced above
has only punctures (holes with vanishing geodesic circumference) 
as boundary components. In order to generalize to holes of 
non-vanishing geodesic circumference one may represent each hole
as the result of cutting along a geodesic surrounding a pair of
punctures.
\[
\lower.9cm\hbox{\epsfig{figure=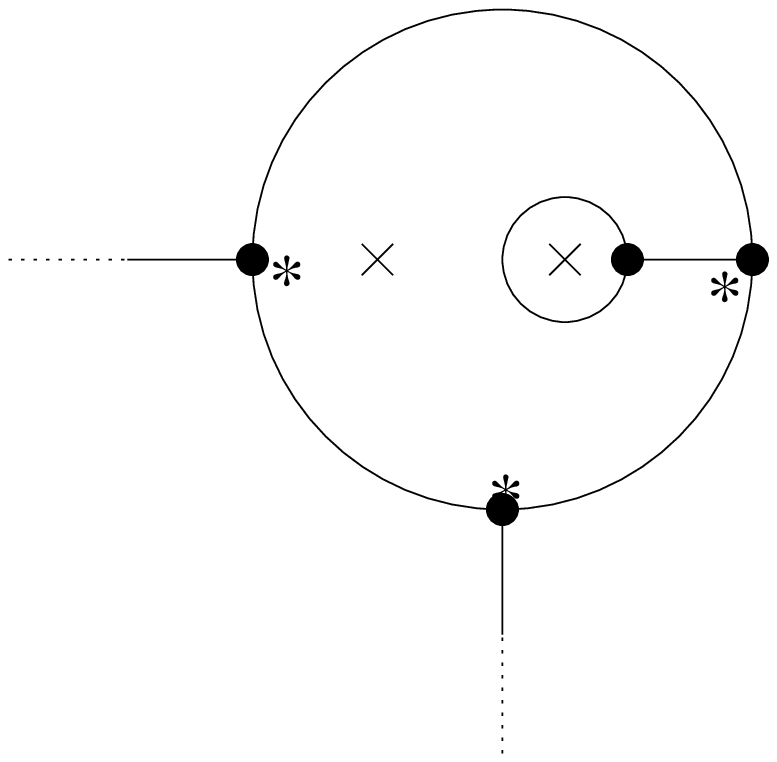,height=3cm}}
\quad
\begin{aligned}
& \text{Example for a} \\
& \text{fat graph in the} \\
& \text{vicinity of two} \\
& \text{punctures (crosses)}\\
& \text{}
\end{aligned}
\qquad
\lower.9cm\hbox{\epsfig{figure=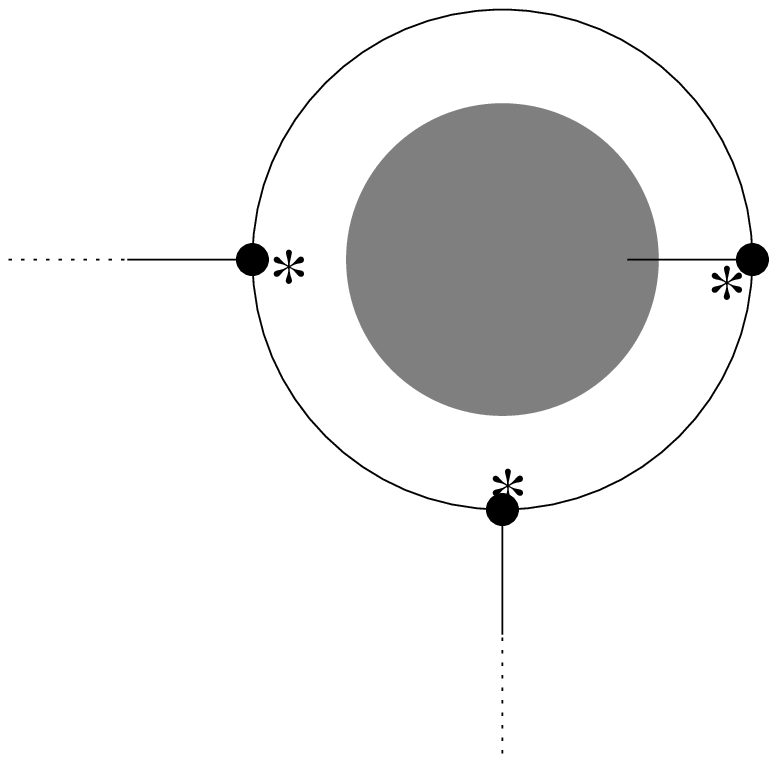,height=3cm}}
\quad
\begin{aligned}
& \text{The same fat graph} \\
& \text{after cutting} \\
& \text{out the hole}\\
& \text{}
\end{aligned}
\]
On a surface $C$ with $n$ holes one may choose $\vf$ to have the following
simple standard form near at most $n-1$ of the holes:
\begin{equation}\label{holefig}
\lower.9cm\hbox{\epsfig{figure=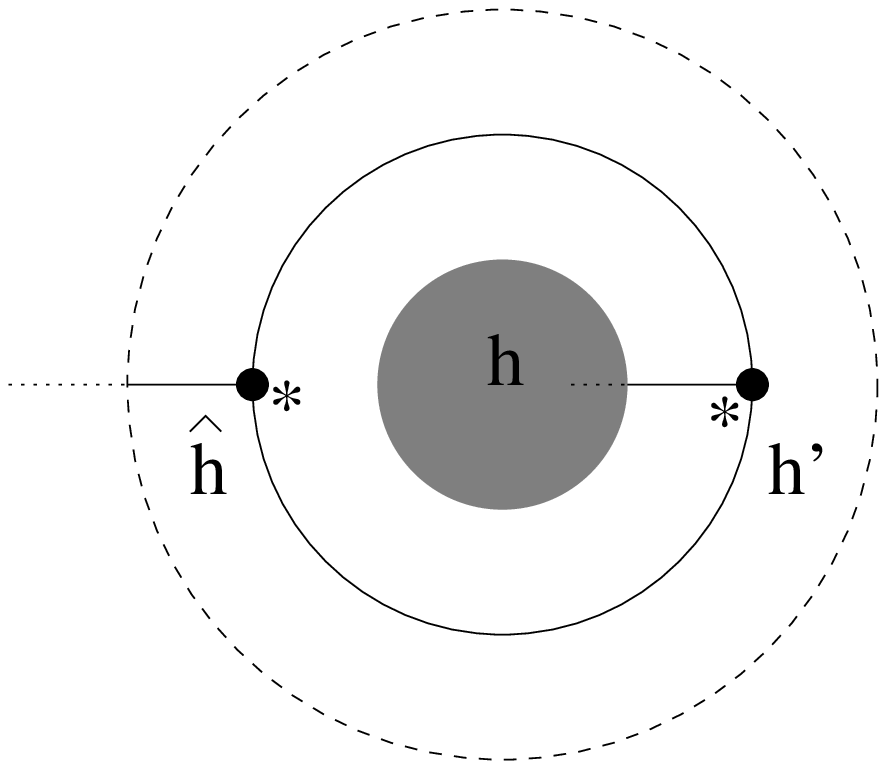,height=3cm}}
\end{equation}
The price to pay is a fairly complicated representation of the 
closed curves which surround the remaining holes.

The simple form of the fat graph near the incoming 
boundary components allows us to use the 
transformation \rf{holediag} 
to pass to a representation where the 
length operators and constraints
associated to these holes are diagonal.
In order to describe the resulting hybrid representation
let us denote by $s_b$ and $z_b$ the assignments of values 
$s_h$ and $z_h$ to each 
incoming hole $h$, while $p$ assigns real numbers $p_v$ to all
vertices $v$ of $\vf$ which do not coincide with any vertex
$\hat{h}$ or $h'$ associated to an incoming hole $h$.
The states will then be described by wave-functions 
$\psi(p;s_b,z_b)$ on which the operators $\SL_{h}$ and $\sz_{h}$
act as operators of multiplication by $2\cosh2\pi bs_h$ and 
$z_h$, respectively. 

For a given hole $h$ one may define a projection $\CH(C_{h(s,z)})$ 
of $\CH(C)$ to the eigenspace with fixed eigenvalues $2\cosh 2\pi bs$ 
and $c$ of $\SL_h$ and $\sz_h$. States in $\CH(C_{h(s,z)})$ 
are represented by wave-functions $\psi_h(p_h)$, where $p_h$ assigns real
values to all vertices in $\vf_0\setminus\{\hat{h},h'\}$.
The mapping class
action on $\CH(C)$ commutes with $\SL_h$ and $\sz_h$. It follows that
the operators $\SM(\mu)$ representing 
the mapping class group action on $\CH(C)$ project to 
operators  $\SM_{s,z}(\mu)$ generating an action of
${\rm MCG}(C)$ on $\CH(C_{h(s,z)})$.


\subsection{The cutting operation}

Cutting $C$ along the curve $c$ embedded in an annulus as considered
above will produce two surfaces $C''$ and $C'$ with boundary 
containing copies of the curve $c$. We may regard $C''$ and $C'$ 
as subsurfaces of $C$. The mapping class groups 
${\rm MCG}(C'')$ and ${\rm MCG}(C')$ thereby get embedded
as subgroups into ${\rm MCG}(C)$. The images of 
${\rm MCG}(C'')$ and ${\rm MCG}(C')$  are generated by 
the Dehn twist along $c$ together with diffeomorphisms
of $C''$ and $C'$ which act trivially on $A$, respectively. 

The spectral decomposition of $\SL_{c}$ and $\sz_c$
defines a natural counterpart of the operation to cut $C$ into $C''$ and
$C'$ within the quantum Teichm\"uller theory. It produces an 
isomorphism
\begin{equation}\label{cutmapdef}
\SS_c:\CH(C)\,\mapsto\,\int^{\oplus}_{\BR_+}ds\int^{\oplus}_\BR dc\;
\CH\big(C''_{h''(s,z)}\big)\ot\CH\big(C'_{h'(s,z)}\big)\,.
\end{equation}
The explicit form of the operator 
$\SS_c$ is easily found with the help of the integral transformation 
\rf{holediag}. To this aim it is sufficient to split the set $\vf_0$ of 
vertices of $\vf$ as $\vf_0=\vf_0''\cup\{a,b\}\cup\vf_0'$, 
where $a$ and $b$ are the vertices lying inside $A$, and the set
$\vf_0'$ contains the vertices in $\vf_0\setminus\{a,b\}$
located in $C'$. Writing accordingly
$\Psi(p)=\Psi(p'',p_a,p_b,p')$, with $p'':\vf_{0}''\mapsto \BR$
and $p':\vf_{0}'\mapsto \BR$,
we may use the integral transformation \rf{holediag} to map 
$\Psi(p)=\Psi(p'',p_a,p_b,p')$ to a function 
$\psi(p'',s,z,p')$ which represents an element of the Hilbert space 
on the right of \rf{cutmapdef}.


\section{Relation between the modular double and 
quantum Teichm\"uller theory}
\setcounter{equation}{0}


We are now ready to address our main aim. Recall that the modular
double is characterized by the following main objects:
The operators $ \SC_{s_\2s_\1}$ which generate the co-product,
and the R-operator $\SR$. We are going to show that these 
operators have very natural interpretations in within 
the quantum Teichm\"uller theory.

\subsection{The hole algebra}

Recall that the representation $\pi_\CM$
of the modular double $\CD\CU_q(\fsl_2)$
has positive self-adjoint
generators $\SE$, $\SK$, $\SF$. It will 
again be convenient
to replace 
the generator $\SF$ by the Casimir $\SQ$ 
\begin{equation}
\SF=(q-q^{-1})^{-2}\big(\SQ+q\SK^2+q^{-1}\SK^{-2}\big)\SE^{-1}\,.
\end{equation}
 
We will identify the representation $\pi_\CM$
of the algebra  $\CD\CU_q(\fsl_2)$ with the hole algebra
which is associated to the following subgraph of a fat graph $\si$:
\[
\lower.9cm\hbox{\epsfig{figure=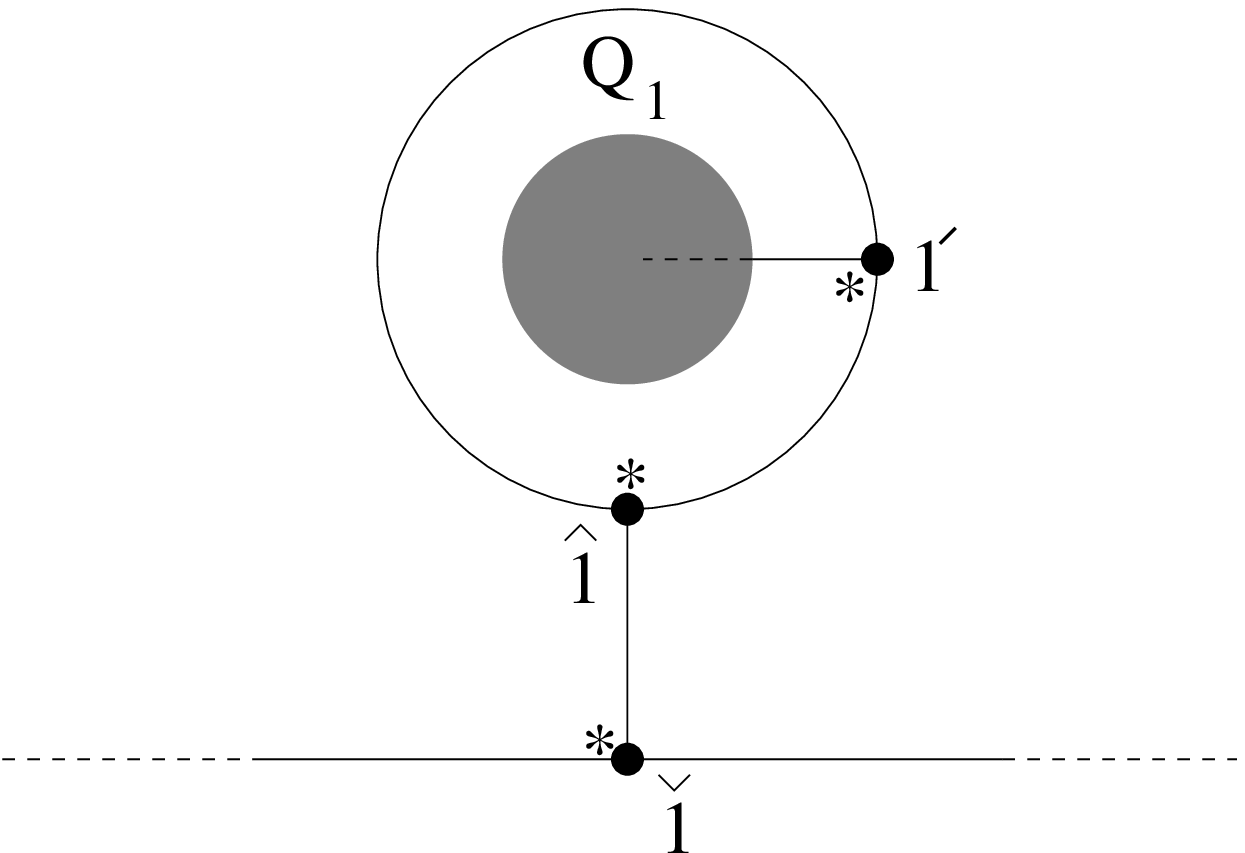,height=3cm}}
\]
The identification is such that
\begin{equation}
\begin{aligned}
&\SE\,\mapsto\,e^{\pi b(2\sq_{\check{\1}}-\spp_{\check{\1}})}\,,\\
&\SK\,\mapsto\,e^{-\pi b\spp_{\check{\1}}}\,,
\end{aligned}
\qquad \SQ\,\mapsto\,\SL\,.
\end{equation}

We furthermore 
note that local changes of the fat graph are naturally mapped to unitary
equivalence transformations 
of the representation $\pi_\CM$. A particularly important one is the 
equivalence transformation corresponding to the automorphism $w$. We have
\begin{propn}\label{W-propn}
The automorphism $\sw$ coincides with the automorphism
associated to the following move $W_\1$ 
\begin{equation}\label{Wmovedef}
\lower.9cm\hbox{\epsfig{figure=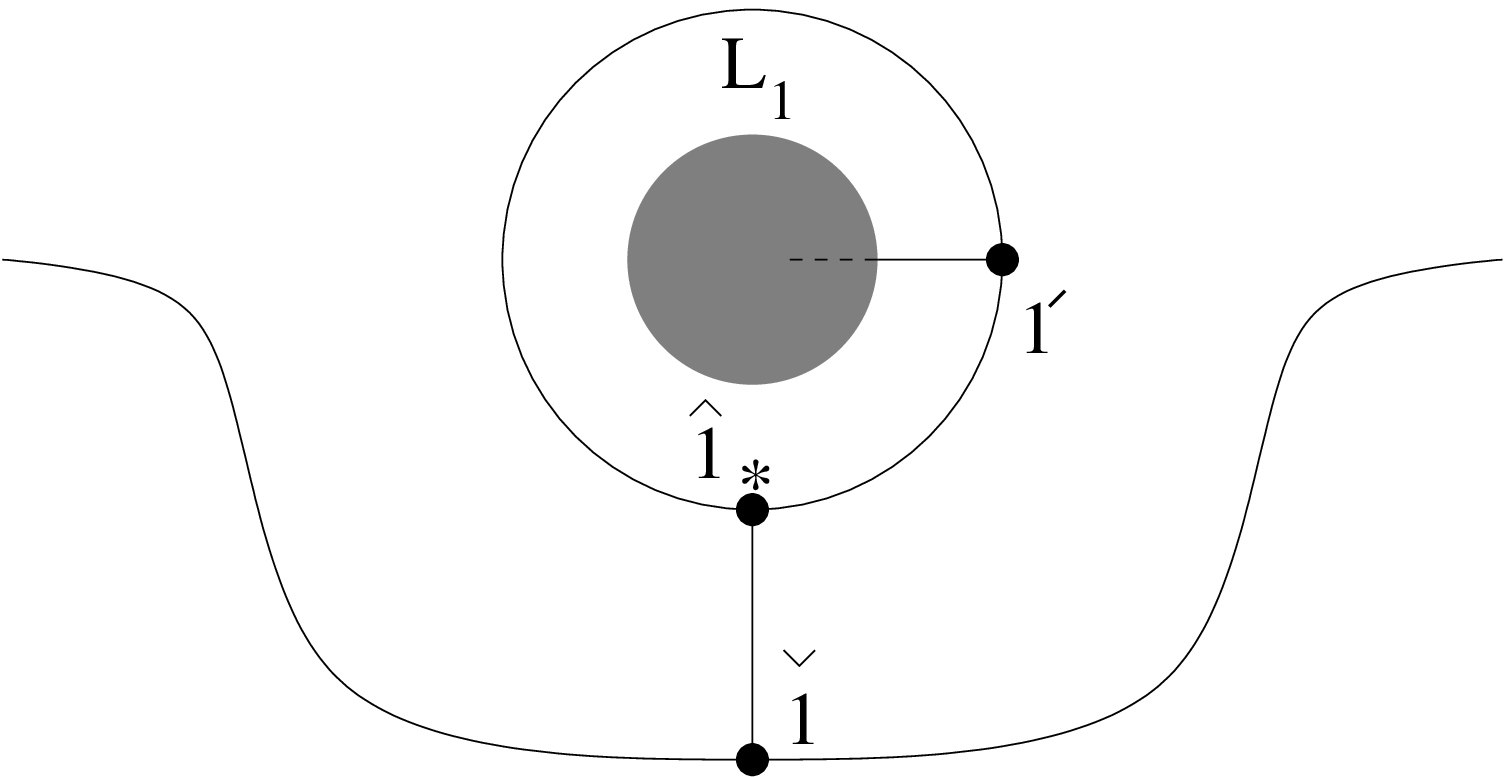,height=2.8cm}}
\quad \overset{W_\1}{\longrightarrow}\quad
\lower.9cm\hbox{\epsfig{figure=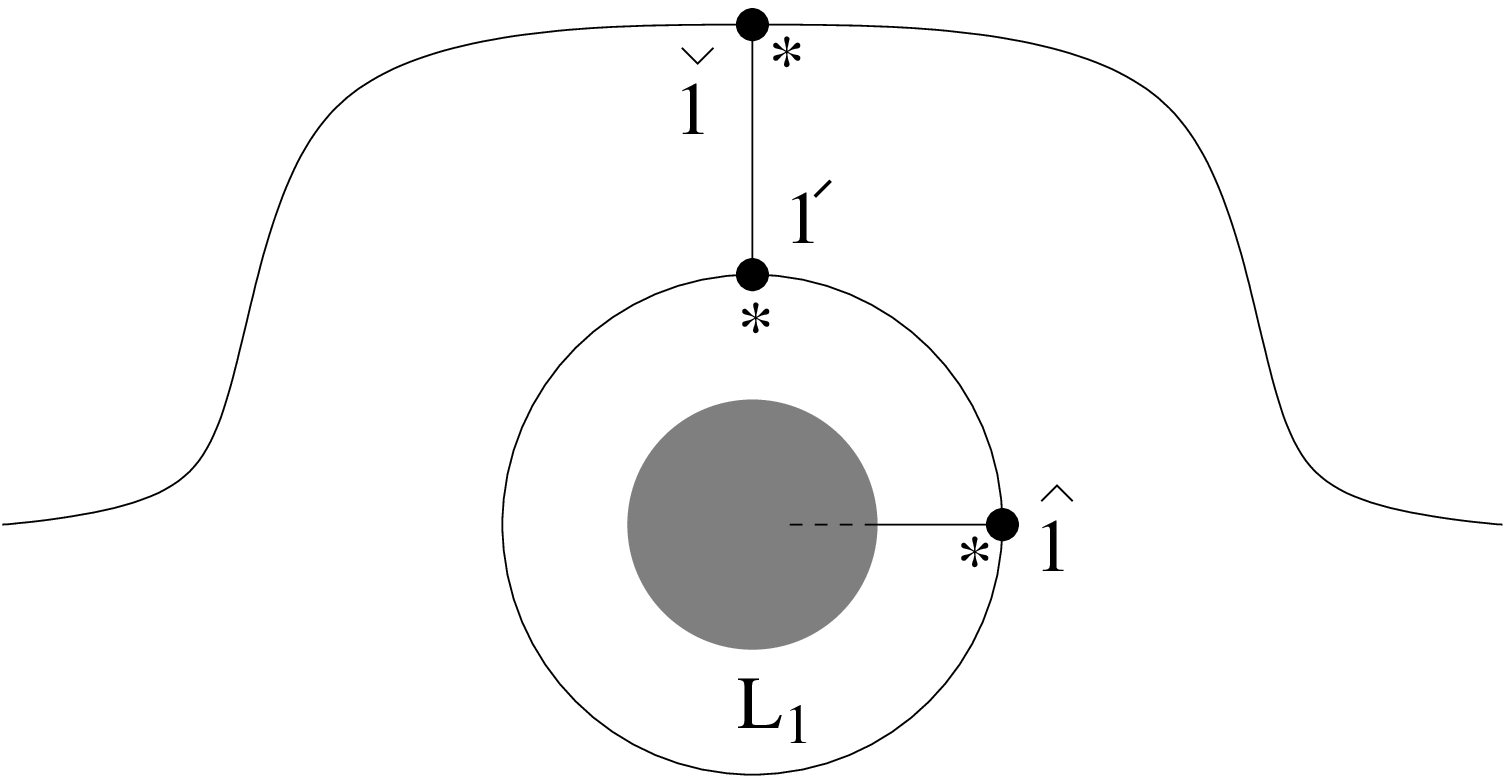,height=2.8cm}}
\end{equation}
after setting the constraint $\sz_\1$ to zero.
\end{propn}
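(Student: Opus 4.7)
The plan is to verify the proposition by a direct local calculation: decompose the graphical move $W_\1$ into elementary Ptolemy groupoid moves, compute the induced automorphism on the hole-algebra generators using the techniques of Section 4, and then match the outcome with the image of $\SE$, $\SK$, $\SQ$ under $\sw$ after setting $\sz_\1 = 0$.

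First, I would write $W_\1 = \sa_{[\vf_2,\vf_1]}$ as a finite product of the generators $\ST_{uv}$, $\SA_u$, $\SP_{uv}$ of the Ptolemy groupoid. A careful look at the two fat graphs in (\ref{Wmovedef}) shows that $W_\1$ differs from the identity only inside a small disk containing the distinguished vertex and the hole, so only a handful of elementary moves are needed; the corresponding unitary $\SU_{W_\1}$ is then the explicit product of the operators listed in (\ref{qPtgens}).

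Second, I would conjugate the hole-algebra generators $\SE = e^{\pi b(2\sq_{\check\1}-\spp_{\check\1})}$, $\SK = e^{-\pi b\spp_{\check\1}}$ and $\SQ = \SL$ by $\SU_{W_\1}$. This is a sequence of Baker--Campbell--Hausdorff-type manipulations on linear combinations of the $\sq_v$, $\spp_v$ together with the $e_b$-conjugation rule already used in the derivation of (\ref{adT12}):
\begin{equation*}
e_b(\sx)\,e^{2\pi b\sy}\,e_b(\sx)^{-1} \;=\; e^{\pi b\sy}\bigl(1+e^{2\pi b(\sx-\sy/2)}\bigr)^{\pm 1}e^{\pi b\sy}
\end{equation*}
applied iteratively, with the pentagon identity (\ref{pent}) used to shuffle adjacent $e_b$ factors when necessary. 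Each elementary move produces an explicit rearrangement, and after the full composition the images of $\SE$, $\SK$, $\SQ$ will be expressions in the vertex coordinates on the right-hand graph of (\ref{Wmovedef}), together with the constraint operator $\sz_\1$ from (\ref{constrdef}) attached to the loop around the hole.

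Third, I would impose $\sz_\1 = 0$, which geometrically restricts to the Teichm\"uller subspace and algebraically kills the contribution of the homology cycle around the hole. The remaining expressions should collapse to the images of $\SE, \SK, \SQ$ under the Weyl automorphism $\sw$ of the modular double, giving the required identification. The main obstacle will be the bookkeeping in the middle step: each flip contributes several exponentials, and one must track how the constraint $\sz_\1$ threads through the conjugations so that it cleanly decouples at the end; once it does, the residual identity is an elementary rewriting in the Heisenberg coordinates that encode $\pi_{\CM}$.
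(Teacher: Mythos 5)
Your plan is essentially the paper's own proof: the paper likewise factorizes $W_\1$ into three elementary Ptolemy moves (compositions of $\omega$'s and $\rho$'s represented by the operators $\ST$ and $\SA$), conjugates the hole-algebra generators through the resulting unitary using the $e_b$-conjugation identities, and then observes that upon setting $\sz_\1=0$ one gets $\SK\mapsto\SK^{-1}$ and $\SE\mapsto\SF$, i.e.\ the Weyl automorphism $\sw$. The approach is correct and matches the paper's route; the only remaining work is the explicit bookkeeping you already flag.
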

The proof is given in Appendix \ref{Proofapp}.

\subsection{Tensor products of representations}

It is clearly natural to identify the tensor product of two representations
with the following subgraph
\begin{equation}\label{tensorfig}
\lower.9cm\hbox{\epsfig{figure=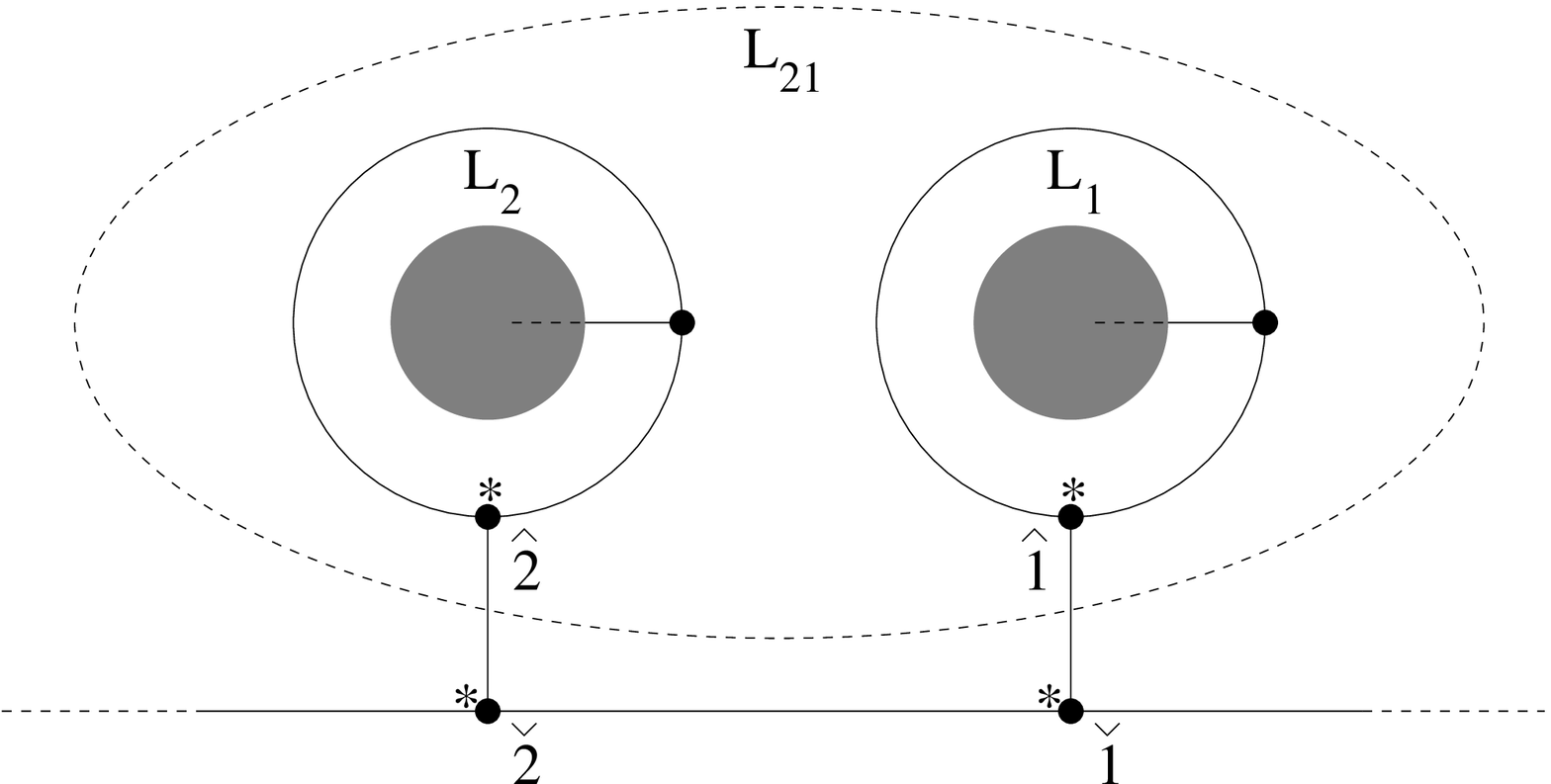,height=3cm}}
\end{equation}
Let $\SL_{\2\1}$ be the operator which represents the geodesic length
observable in the representation corresponding to the fat graph 
above.

The key observation to be made 
is formulated in the following proposition:
\begin{propn}\label{lengthCas} 
The projection of the length operator $\SL_{\2\1}$ 
onto the subspace of vanishing constraints becomes equal to the
Casimir $\SQ_{\2\1}$,
\begin{equation}
\SL_{\2\1}\,\mapsto\,\SQ_{\2\1}\,.
\end{equation}
\end{propn}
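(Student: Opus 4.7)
The plan is to apply Proposition~\ref{Lreppropn} to the length operator $\SL_{\2\1}$ computed on the fat graph of \rf{tensorfig}, then project onto the constraint-zero subspace, and finally compare the result with the explicit expression \rf{Q21} for the tensor-product Casimir.

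First, I would identify the geometric setup: the fat graph in \rf{tensorfig} contains the closed curve $c$ whose length gives $\SL_{\2\1}$, together with the two curves $c_{\ep}$, $\ep=1,2$, each encircling one of the two holes. The triple $(c,c_{\1},c_{\2})$ bounds a trinion of exactly the form shown in \rf{trinfig}, so Proposition~\ref{Lreppropn} applies directly and yields
\begin{equation*}
\SL_{\2\1}\,=\,2\cosh(\sy_c^{\2}+\sy_c^{\1})+e^{-\sy_c^{\2}}\SL_{c_{\1}}+e^{\sy_c^{\1}}\SL_{c_{\2}}+e^{\sy_c^{\1}-\sy_c^{\2}}.
\end{equation*}

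Next, I would project onto the simultaneous kernel of the constraints $\sz_{c_{\1}}$ and $\sz_{c_{\2}}$. The auxiliary operators then reduce to $\sy_c^{\2}=2\pi b\,\sq_c$ and $\sy_c^{\1}=-2\pi b\,\spp_c$, so the projected operator equals
\begin{equation*}
2\cosh 2\pi b(\sq_c-\spp_c)+e^{-2\pi b\sq_c}\SL_{c_{\1}}+e^{-2\pi b\spp_c}\SL_{c_{\2}}+e^{-2\pi b(\sq_c+\spp_c)}.
\end{equation*}
Under the hole-algebra identification of the preceding subsection, each $\SL_{c_{\ep}}$ becomes the Casimir $\SQ_{\ep}$ of the $\ep$-th representation, and the display above is then literally the operator $\SQ_{\2\1}'$ of \rf{SQ1def}, with the vertex label $c$ playing the role of $\1$. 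Finally, identity \rf{ST12c} states $\ST_{\1\2}\,\SQ_{\2\1}'\,(\ST_{\1\2})^{-1}=\SQ_{\2\1}$, and the operator $\ST_{\1\2}$ is the representative in the modular double of precisely the change of fat graph that links the Whittaker-like presentation adapted to \rf{tensorfig} to the standard tensor product presentation. This closes the identification $\SL_{\2\1}\mapsto\SQ_{\2\1}$ claimed by the proposition.

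The main obstacle I anticipate is combinatorial bookkeeping rather than a hard analytic step: one must check that the decoration of the central vertex of \rf{tensorfig} and the numbering of the two hole vertices agree with the conventions of Proposition~\ref{Lreppropn} (in particular the roles of the two boundary curves $c_{\1},c_{\2}$ in \rf{trinfig}) and with the hole-algebra conventions fixed earlier, so that the substitution reproduces \rf{SQ1def} exactly rather than a permuted or rescaled variant. Once this geometric accounting is settled, everything else is a direct substitution followed by the already-established identity \rf{ST12c}.
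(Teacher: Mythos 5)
Your argument is essentially the paper's: start from Proposition \ref{Lreppropn}, account for the change of fat graph represented by $\ST_{\cone\ctwo}$, set the constraints to zero, and compare with the explicit Casimir \rf{Q21}. Your one reorganization --- projecting to vanishing constraints first, recognizing the result as $\SQ_\1'$ of \rf{SQ1def}, and then invoking the already-established identity \rf{ST12c}, instead of first conjugating via the \rf{adT12}-type computation and projecting afterwards as the paper does --- is legitimate (the constraints satisfy $\sa_{\vf_\2\vf_\1}(\sz_{\vf_\1,c})=\sz_{\vf_\2,c}$, so projection and conjugation commute) and buys you a reuse of \rf{ST12c} rather than a fresh computation. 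One piece of bookkeeping to fix: Proposition \ref{Lreppropn} does \emph{not} apply directly to the fat graph of \rf{tensorfig}; it applies to the graph on the left of \rf{fus1move}, and $\SL_{\2\1}$ is by definition the conjugate by $\ST_{\cone\ctwo}$ of the expression \rf{modLlem2} --- so the conjugation in your final step is part of what $\SL_{\2\1}$ \emph{is}, not an additional identification, and as written your chain computes the projection of the length operator on the pre-move graph to be $\SQ_{\2\1}'$; only after relabelling the graphs correctly does it yield the claimed statement about $\SL_{\2\1}$ itself.
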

\begin{proof}
In order to calculate the explicit form of the length operator
in the representation associated to the fat graph \rf{tensorfig},
we may take Proposition \ref{Lreppropn} as a starting point.
It remains to calculate the change of representation induced
by the move $\omega_{\check\1\check\2}^{}$ which is diagrammatically 
represented as
\begin{equation}\label{fus1move}
\lower.9cm\hbox{\epsfig{figure=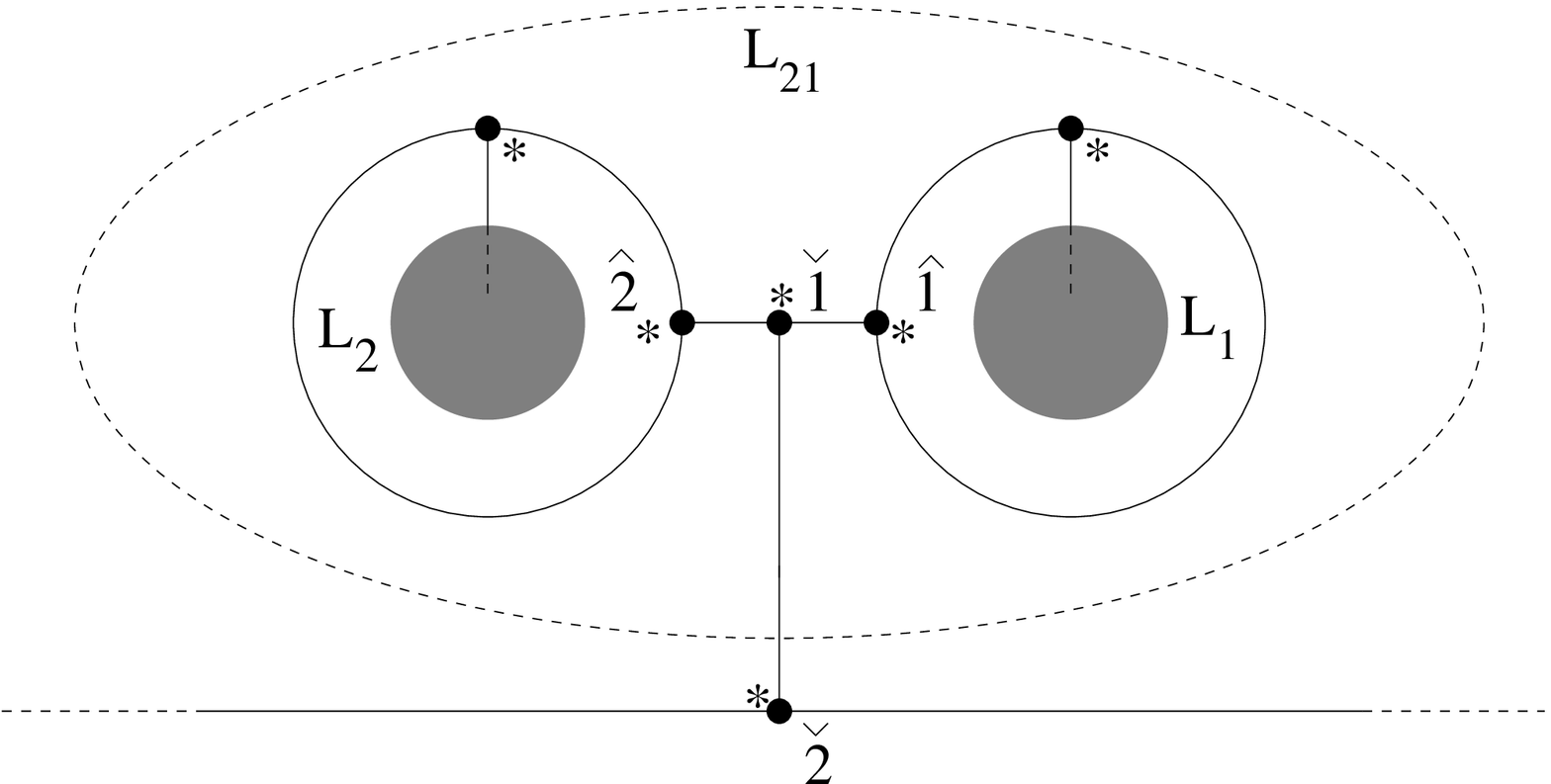,height=3cm}}
\quad \overset{\omega_{\check\1\check\2}^{}}{\longrightarrow}\quad
\lower.9cm\hbox{\epsfig{figure=caslength.eps,height=3cm}}
\end{equation}
This move is represented by the operator $\ST_{\check\1\check\2}$.
This calculation is obtained from the
one described above in \rf{adT12} by simple 
substitutions, resulting in the expression
\begin{align}
\SL_{\2\1}\,=\,
& \,e^{+2\pi b\sz_\1}
:e^{2\pi b(\sq_\1-\sq_\2+\spp_\2-\spp_\1)}
(2\cosh 2\pi b(\spp_\2-\sz_\2)+\SL_\2):  \notag \\
+&\,e^{-2\pi b\sz_\2}
:e^{-\pi b(\sq_\1-\sq_\2)}
(2\cosh 2\pi b(\spp_\1-\sz_\1)+\SL_\1):\notag\\
+& \,e^{2\pi b(\spp_\2-\sz_\2)}\SL_\1+e^{-2\pi b(\spp_\1-\sz_\1)}\SL_\2
+2\cos\pi b^2 e^{2\pi b(\spp_\2-\sz_\2-\spp_\1+\sz_1)}\,.
\end{align}
Setting the 
constraints to zero and comparing with 
\rf{Q21} yields the claimed result.
\end{proof}

\subsection{The Clebsch-Gordan maps}

Note that the operator $\SQ_\1'$ defined in \rf{SQ1def} 
essentially 
coincides with the particular 
representation of a length operator given 
in \rf{modLlem2}
after setting the constraints to zero.
It follows immediately from this observation 
that one may without loss 
of generality assume that the projection of the 
operator $\SC_v^{\CT}$ defined in \rf{SCTdef}
to $\sz_v^i=0$, $i=\1,\2$ coincides with the operator
$\nu^{\mss_{\2\1}}_{s_\2s_\1}\cdot\SS_\1\cdot\SC_\1$ which appears
as a building block in the construction of the 
Clebsch-Gordan maps $\SC_{s_\2s_\1}$
given in equation \rf{CGMconstr}. More precisely:
\begin{itemize}
\item The operator $\SC_\1$ corresponds to the following move:
\[
\lower.9cm\hbox{\epsfig{figure=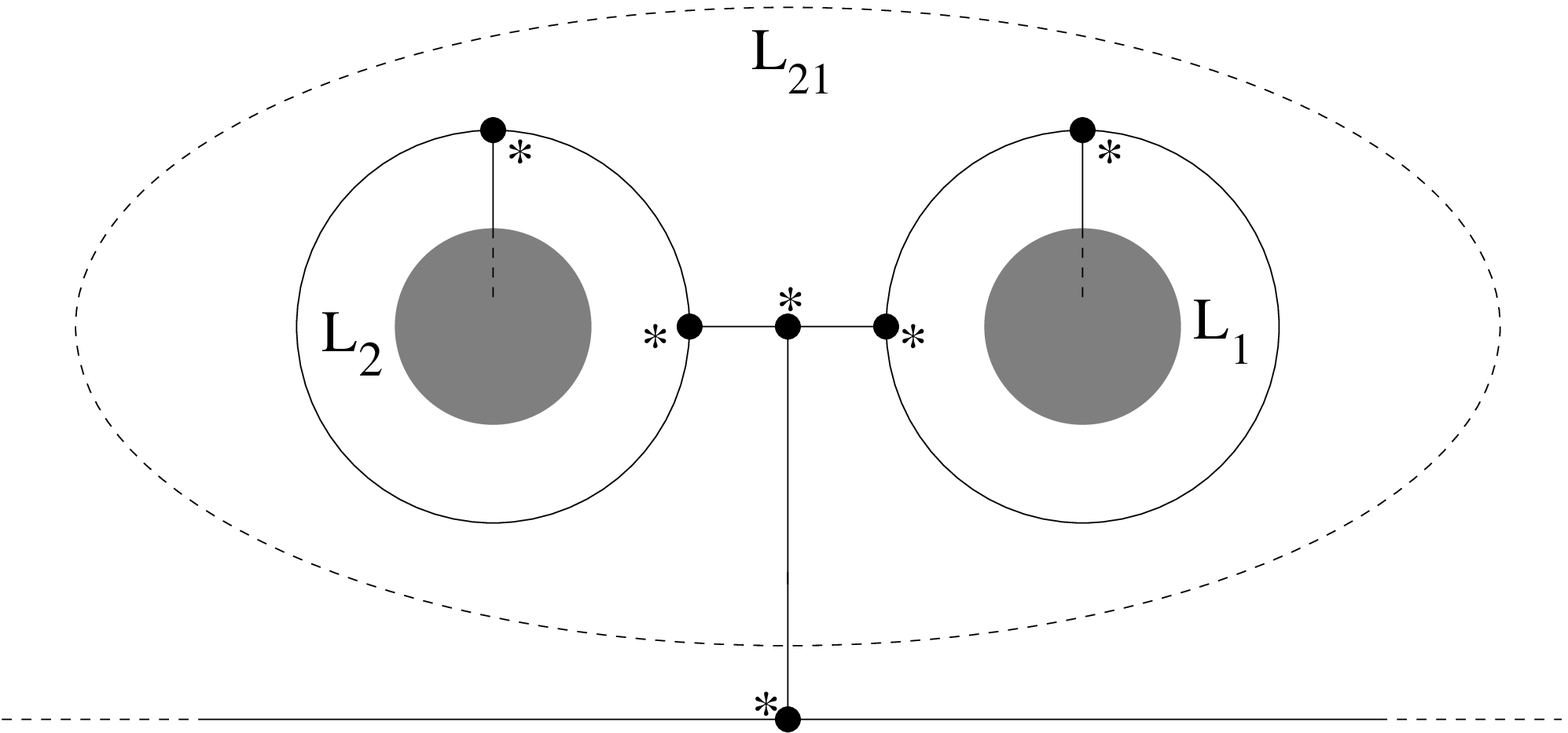,height=3cm}}
\quad \longrightarrow\quad
\lower.9cm\hbox{\epsfig{figure=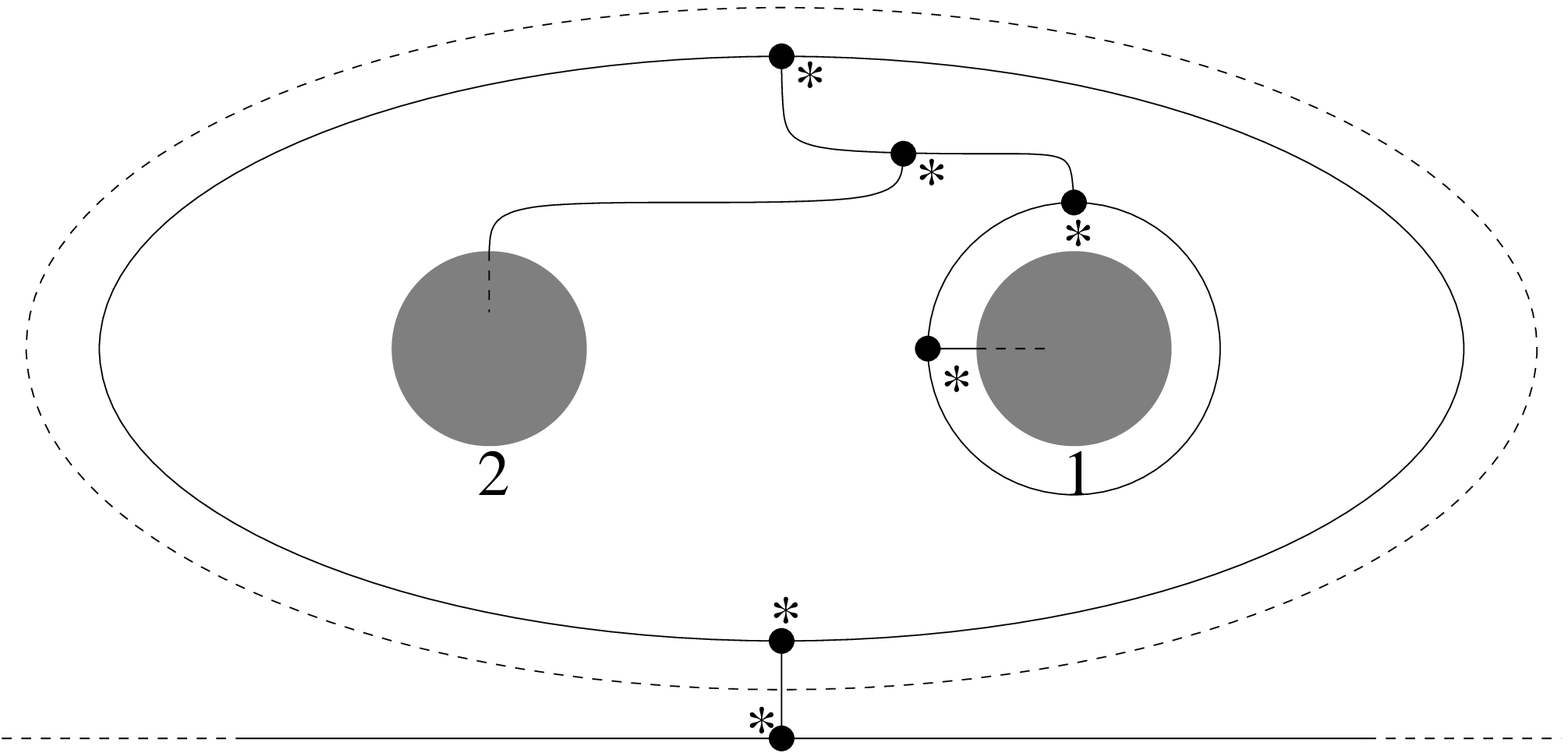,height=3cm}}
\] 
\item The operator $\SS_\1$ corresponds to the cutting operation:
\[
\lower.9cm\hbox{\epsfig{figure=leng4c.eps,height=3cm}}
\quad \longrightarrow\quad
\lower.9cm\hbox{\epsfig{figure=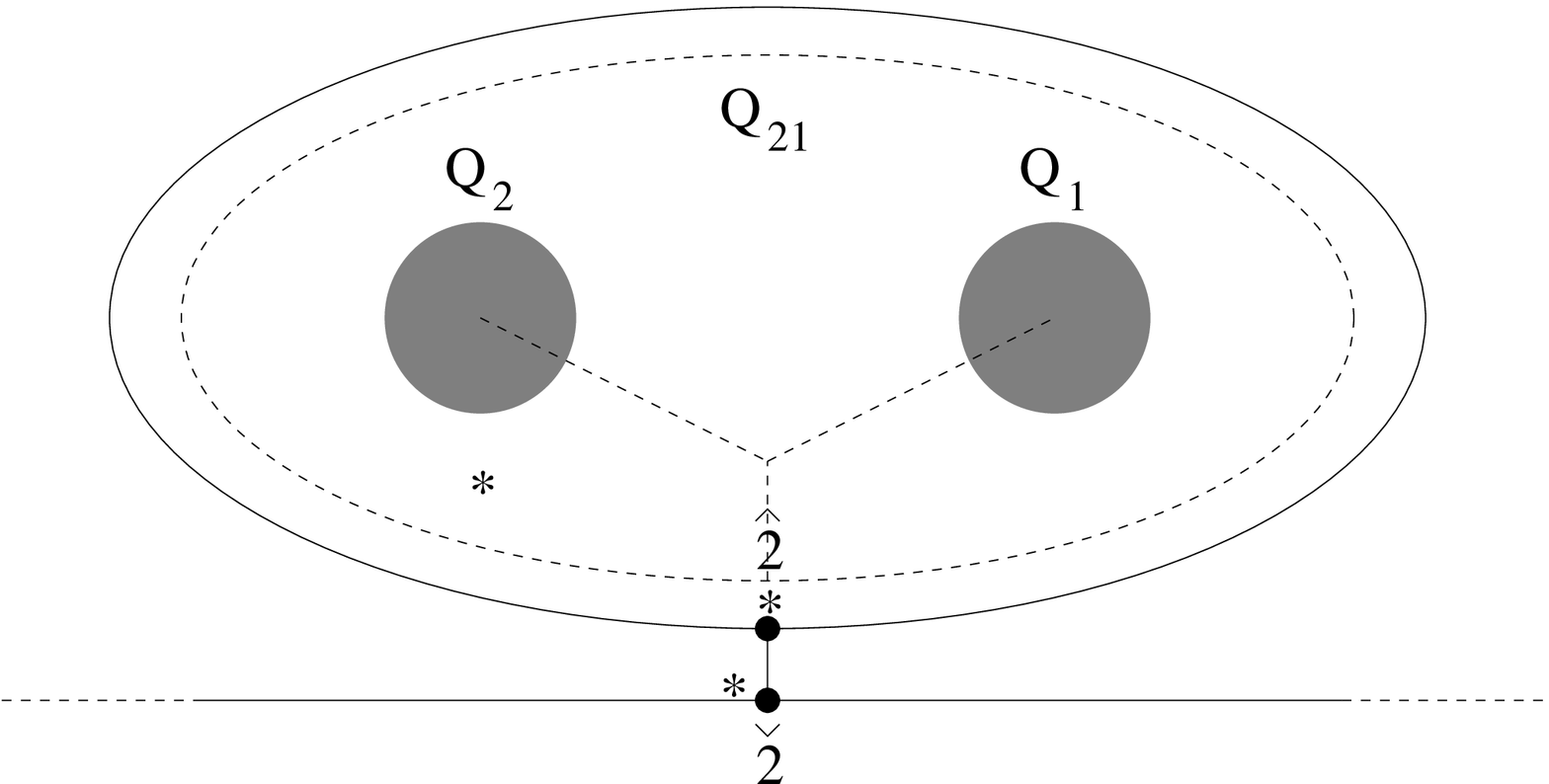,height=3cm}}
\] 
\end{itemize}
It remains to notice that the operator $ (\ST_{\1\2})^{-1}$ 
which appears in the
factorized representation of the full Clebsch-Gordan maps,
\begin{equation}\label{CGMconstr'}
\SC_{s_\2s_\1}:=\,\SS_\1\cdot\SC_\1\cdot (\ST_{\1\2})^{-1}\,,
\end{equation}
corresponds to the move 
$\omega_{\check\1\check\2}^{}$ depicted in \rf{fus1move}.

These observations may be summarized by saying 
that the Clebsch-Gordan maps of the
modular double represent the quantum 
cutting operation associated to the curve $c_{\2\1}$ 
surrounding holes $h_\2$ and $h_\1$ within the hybrid 
representation assigned to the graph in \rf{tensorfig}.

\subsection{The R-operator}\label{SSsec:Rop}

Let us consider the following move $r_{\2\1}$
\[
\lower.9cm\hbox{\epsfig{figure=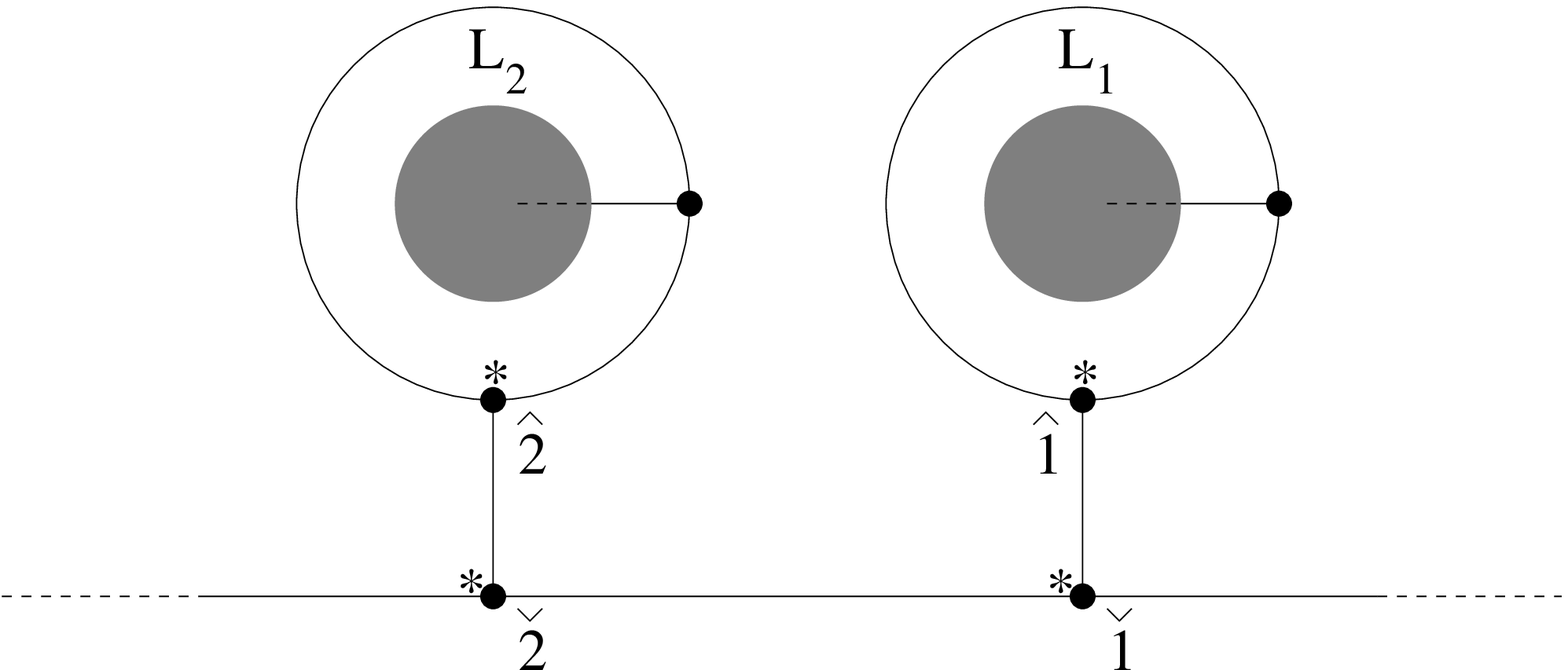,height=3cm}}
\quad \overset{r_{\2\1}}{\longrightarrow}\quad
\lower.9cm\hbox{\epsfig{figure=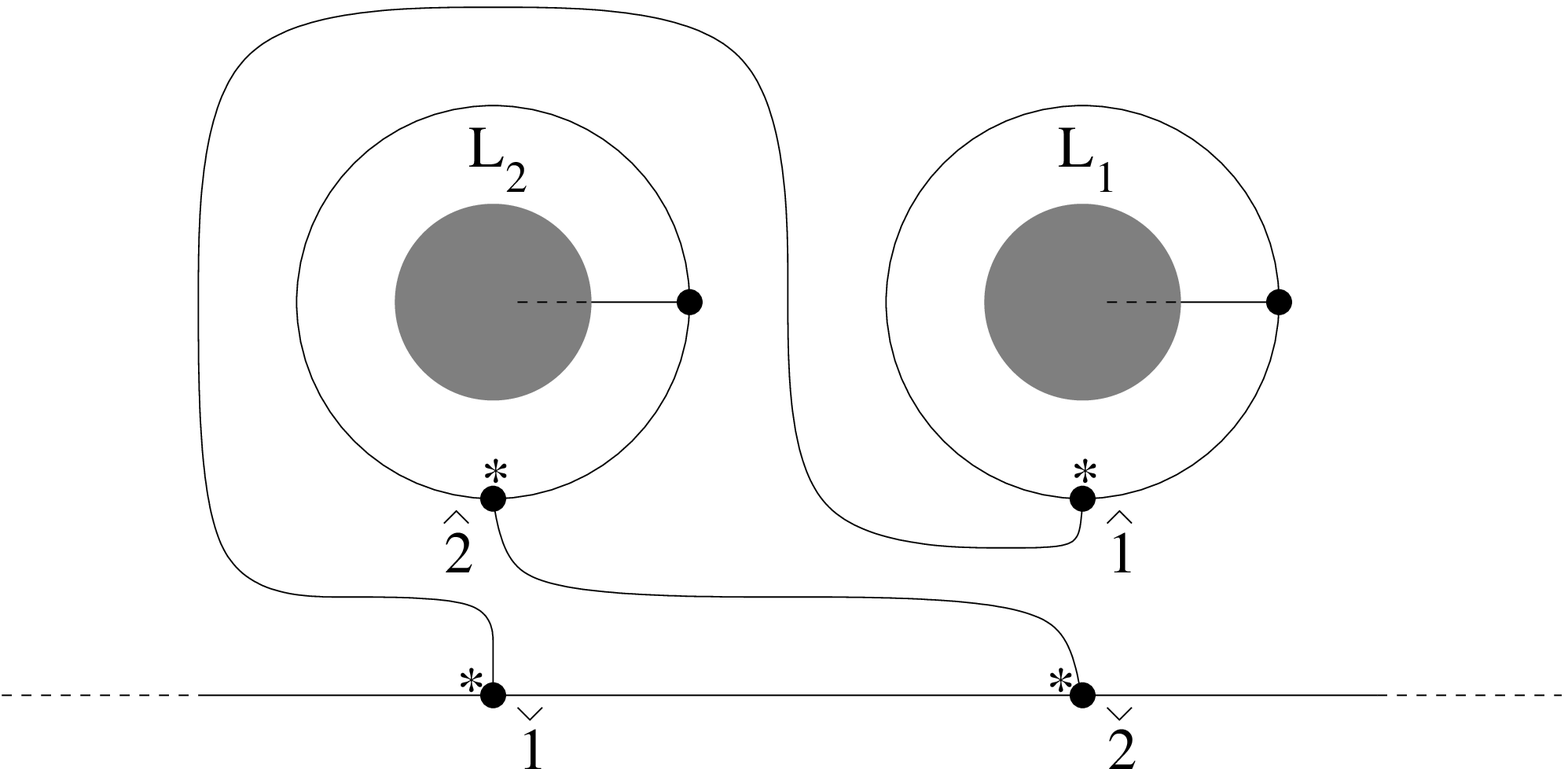,height=3cm}}
\] 
When this move is composed with the operation of exchanging all indices
$\1$ and $\2$ one gets the braid move representing the clockwise rotation
of holes $\2$ and $\1$ around each other until the positions have been
exchanged. The operator $\sr_{\2\1}$ which represents this move
within the quantum Teichm\"uller theory is easily found to be 
\begin{align}
\sr_{\2\1}=\SW_\2^{-1}\cdot\SA_{\ctwo}^{}\,
\ST_{\cone\ctwo}^{-1}\,\SA_{\ctwo}^{-1}
\cdot\SW_\2^{}\,.
\end{align}
This is easily seen by noting that
the operator $\SA_{\check\2} (\ST_{\check\1\check\2})^{-1}\SA_{\check\2}^{-1}$
represents the following move:
\[
\lower.9cm\hbox{\epsfig{figure=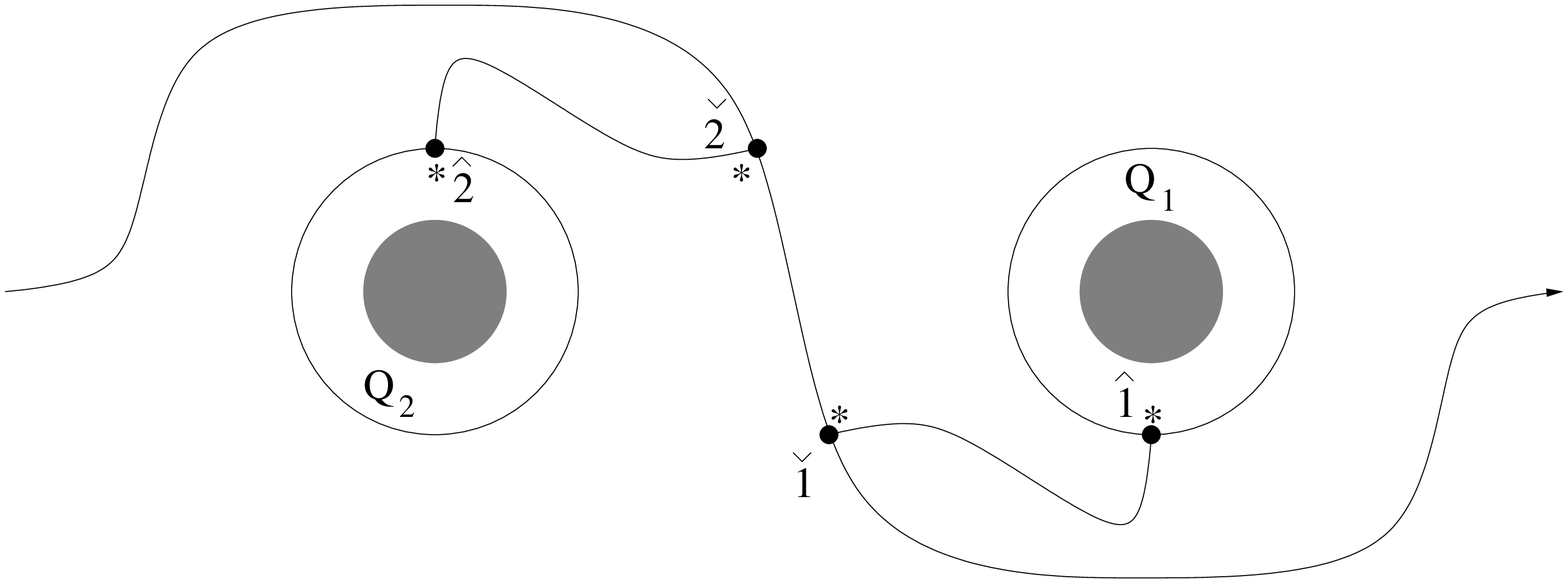,height=2.8cm}}
 \rightarrow
\lower.9cm\hbox{\epsfig{figure=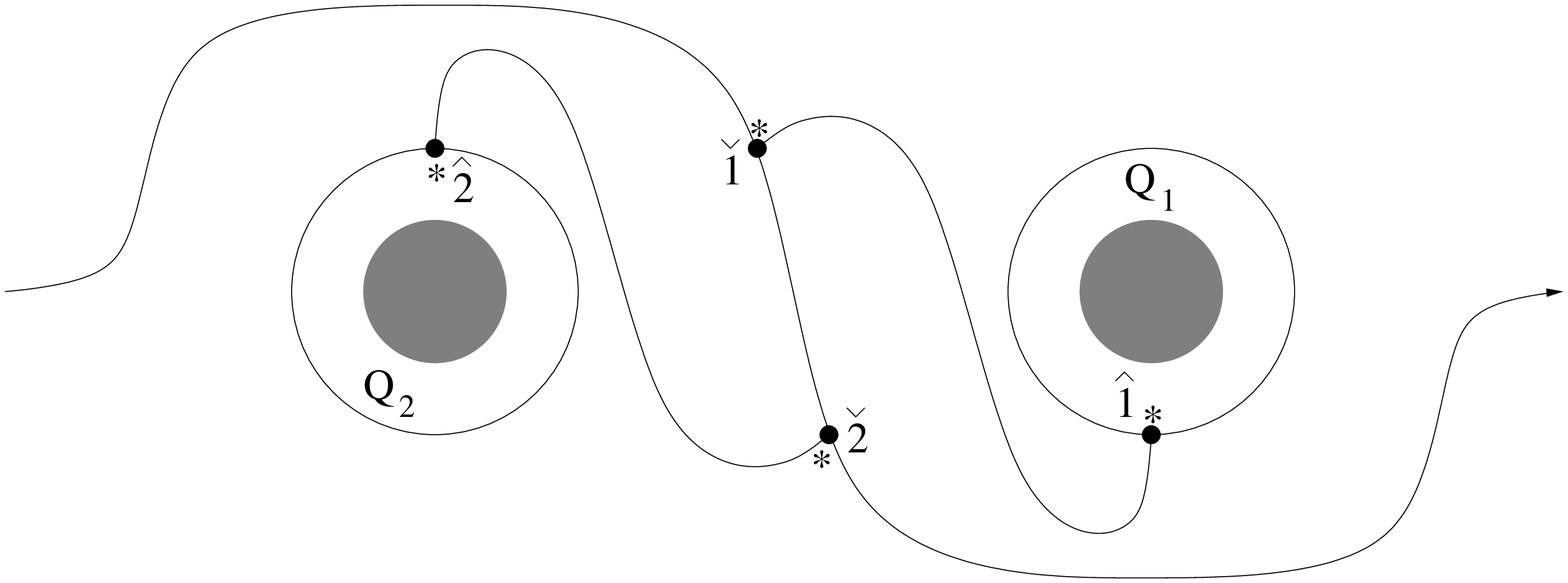,height=2.8cm}}
\] 

\begin{propn} The operator $\sr_{\2\1}$ gets mapped to the 
R-operator $\SR$.
\end{propn}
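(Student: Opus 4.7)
The plan is to verify the claimed equality by a direct computation of $\sr_{\2\1}$ as it acts in the representation identified with $\pi_{\CM}$, matching the result to the explicit form of $\SR$ in \rf{Rdef}. First I would substitute the Kashaev generators from \rf{qPtgens}, writing
\[
\ST_{\cone\ctwo}^{-1}\,=\,e^{2\pi i\spp_{\cone}\sq_{\ctwo}}\,\bigl(e_b(\sq_{\cone}+\spp_{\ctwo}-\sq_{\ctwo})\bigr)^{-1},
\]
and note that the operators attached to the vertex $\cone$ commute with $\SA_{\ctwo}$. Hence the conjugation $\SA_{\ctwo}\,\ST_{\cone\ctwo}^{-1}\,\SA_{\ctwo}^{-1}$ only affects the $\ctwo$-variables; using the standard Kashaev relations for $\SA_v$, which cyclically permute $\sq_v$, $\spp_v$ and $\sq_v-\spp_v$ up to signs and Gaussian phase factors, I would rewrite the inner combination as a new product of a pure Gaussian in the four variables $(\sq_{\cone},\spp_{\cone},\sq_{\ctwo},\spp_{\ctwo})$ times an $e_b$-factor whose argument is now adapted to the variables of $\ctwo$.

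Next I would apply the outer conjugation by $\SW_\2$. By Proposition \ref{W-propn}, $\sw$ coincides with the move $W_\1$ at the hole $\ctwo$ (after setting $\sz_\2=0$), and its effect on the generators of $\pi_{\CM}$ is a fixed, explicit unitary transformation of the Whittaker presentation \rf{Whitt}. Carrying out this last conjugation converts the arguments of the Gaussian and of the $e_b$-factor into expressions that can be identified directly with the rescaled modular-double generators $\se_\2$, $\sf_\1$, $\SK_\2$, $\SK_\1$: the Gaussian prefactors become precisely the two copies of $q^{\sh\ot\sh}$, while the $e_b$-factor becomes $E_b(\se\ot\sf)$.

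At this stage the matching with \rf{Rdef} is a comparison of explicit operator expressions. The pentagon relation (\ref{pent}) and the exponential identity (\ref{qexp}) from Lemma \ref{Qexp} are used to reorganize $e_b$-functions whose arguments are sums of $q$-commuting positive operators into $E_b(\se\ot\sf)$, exactly as in the proof of Theorem \ref{T1}(ii); this reconciles the form produced by the geometric construction with the one appearing in the definition of $\SR$. Unitarity of both sides and the intertwining property of Theorem \ref{T1}(i) -- which $\sr_{\2\1}$ also satisfies by construction, since the braid move commutes with $\De$ on the hole algebra -- can be used as an independent consistency check.

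The main obstacle is bookkeeping: one must carefully track the Gaussian phases generated by the conjugations $\SA_{\ctwo}\cdot\SA_{\ctwo}^{-1}$ and $\SW_\2\cdot\SW_\2^{-1}$, and verify that they combine to reproduce the symmetric sandwich $q^{\sh\ot\sh}(\,\cdot\,)q^{\sh\ot\sh}$ in \rf{Rdef} rather than some inequivalent ordering. The computation is essentially algebraic once the transformation rules of $\SA_{\ctwo}$ and $\SW_\2$ on $(\sq_{\ctwo},\spp_{\ctwo})$ are explicitly written out, but the phases and the argument of $e_b$ must align precisely for the identification with $E_b(\se\ot\sf)$ to emerge.
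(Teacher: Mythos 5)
Your proposal follows essentially the same route as the paper: compute the inner conjugation $\SA_{\ctwo}\,\ST_{\cone\ctwo}^{-1}\,\SA_{\ctwo}^{-1}$ explicitly (the paper finds $e^{-\pi i\spp_{\cone}\spp_{\ctwo}}\,e_b(\sq_{\cone}-\fr{1}{2}\spp_{\cone}+\sq_{\ctwo}-\fr{1}{2}\spp_{\ctwo})\,e^{-\pi i\spp_{\cone}\spp_{\ctwo}}$), identify it as $q^{-\sh\ot\sh}E_b(\se\ot\se)q^{-\sh\ot\sh}$, and then use $\sa\sd[\SW_\2](\se)=\sf$ from Proposition \ref{W-propn} to obtain $\SR$. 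The only inessential difference is your invocation of the pentagon and quantum exponential identities, which are not actually needed here since the argument of the $e_b$-factor after conjugation is a single term rather than a sum of $q$-commuting operators.
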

\begin{proof}
We have 
\begin{align*}
\SA_{\ctwo}^{}\,
\ST_{\cone\ctwo}^{-1}\,\SA_{\ctwo}^{-1}\,=\,e^{-\pi i\spp_{\cone}\spp_{\ctwo}}
\,e_b\big(\sq_{\cone}-\fr{1}{2}\spp_{\cone}+\sq_{\ctwo}-\fr{1}{2}
\spp_{\ctwo}\big)\,
e^{-\pi i\spp_{\cone}\spp_{\ctwo}}\,.
\end{align*}
This is identified as the operator
\[
q^{-\sh\ot\sh} \,
 E_b(\se\ot
 \se)\, q^{-\sh\ot\sh}\,.
\]
Using $\sa\sd[\SW_\2](\se)=\sf$ yields $\SR$, as claimed.
\end{proof}

\newpage

\section{A representation of the Moore-Seiberg groupoid in genus $0$}

\setcounter{equation}{0}
We will now present an important application of the results above. 
It was shown in \cite{T05} that the quantum Teichm\"uller theory 
defines a representation of the Moore-Seiberg groupoid, which is
important for understanding relations to conformal field theory.
The results of this paper will allow us to calculate the
operators which generate the representation of the Moore-Seiberg groupoid
explicitly.

\subsection{Pants decompositions}

Let us consider hyperbolic surfaces $C$ of 
genus $0$ with $n$ holes. We will assume that the holes
are represented by geodesics in the hyperbolic metric.
A pants decomposition of a hyperbolic surface 
$C$ is defined by a cut system which in this context
may be represented by a collection 
$\CC=\{\ga_1,\dots,\ga_{n-3}\}$ of non-intersecting simple closed geodesics
on $C$.
The complement 
$C\setminus\CC$ is a disjoint union $\bigsqcup_{v}C_{0,3}^v$ of
three-holed spheres (trinions).
One may reconstruct $C$ from the resulting collection 
of trinions by pairwise gluing of boundary 
components. 

For given lengths of the three boundary geodesics there is a unique
hyperbolic metric on each trinion $C_{0,3}^v$. 
Introducing a numbering of the boundary geodesics $\ga_i(v)$, $i=1,2,3$,
one gets three distinguished geodesic arcs $\ga_{ij}(v)$, $i,j=1,2,3$
which connect the boundary components pairwise.  
Up to homotopy there are exactly 
two tri-valent graphs $\Ga_{\pm}^v$ on $C_{0,3}^v$ 
that do not intersect any $\ga_{ij}(v)$.
We may assume that these graphs glue to two connected graphs 
$\Ga_\pm$ on $C$. 
The pair of data $\si=(\CC_\si,\Ga_\si)$, where $\Ga_\si$ is one
of the MS graphs $\Ga_{\pm}$ associated 
to a hyperbolic pants decomposition, can be used to distinguish
different pants decompositions in hyperbolic geometry.
The role of the graph $\Ga_\si$ is to distinguish pants decompositions
obtained from each other by means of Dehn twists, rotations of 
one boundary component by $2\pi$ before gluing.

\subsection{The Moore-Seiberg groupoid}

Let us note \cite{MS,BK} that
any two 
different pants decompositions 
$\si_2$, $\si_1$ can be connected by a sequence of
elementary moves localized in subsurfaces of $C_{g,n}$ of type
$C_{0,3}$, $C_{0,4}$. These will be called the 
$B$, $S$ and $F$, respectively.
Graphical representations for the elementary 
moves $B$, $Z$,  and $F$ are given in 
Figures \ref{bmove},  \ref{zmove} and \ref{fmove}, respectively.
\begin{figure}[t]
\epsfxsize9cm
\centerline{\epsfbox{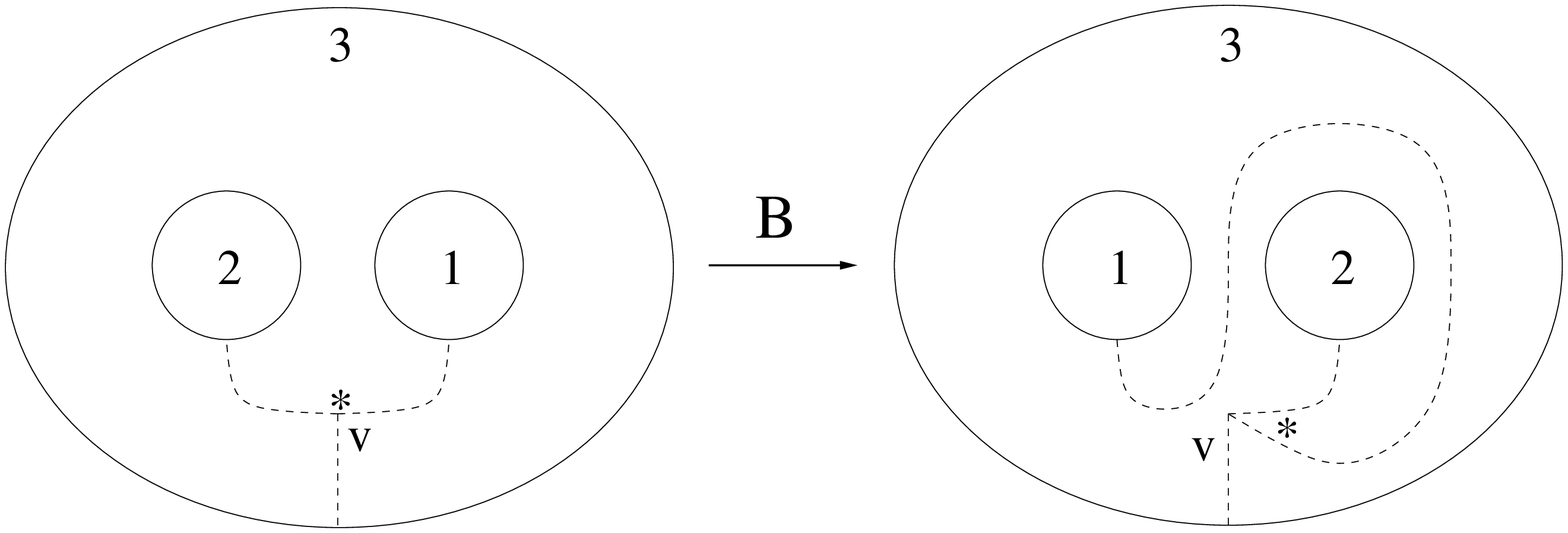}}
\caption{The move $B_v:\si\ra B_v\si$}\label{bmove}\vspace{-.03cm}
\end{figure}
\begin{figure}[t]
\epsfxsize9cm
\centerline{\epsfbox{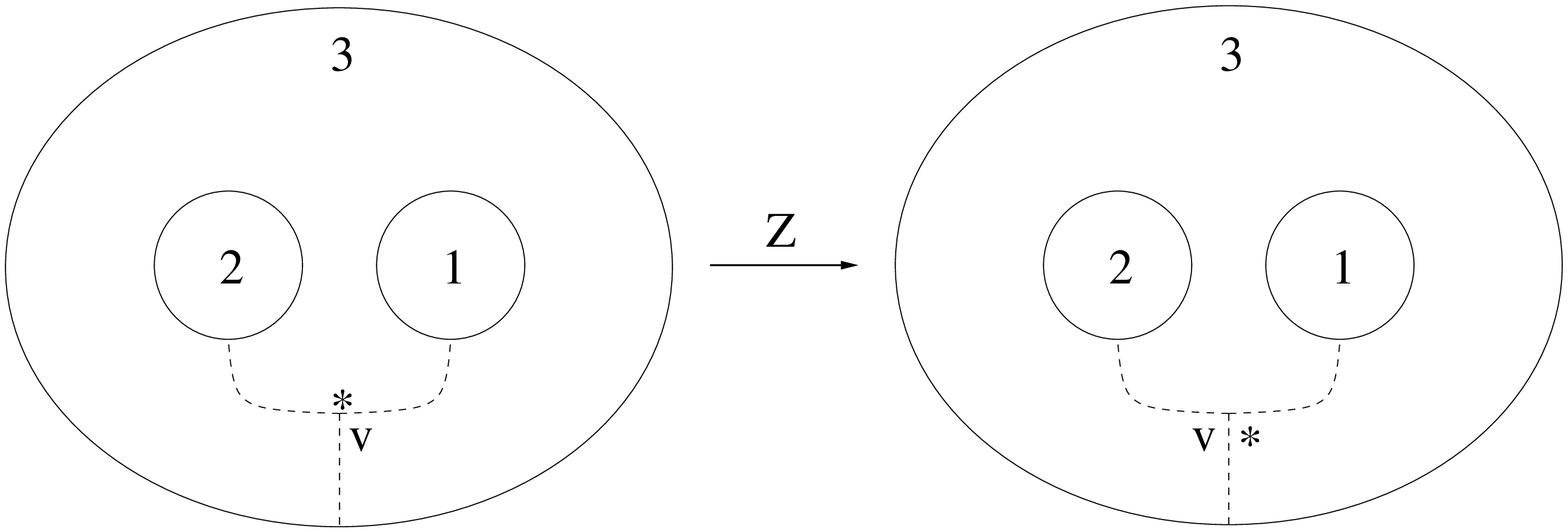}}
\caption{The move $Z_v:\si\ra Z_v\si$}\label{zmove}
\end{figure}
\begin{figure}[t]
\epsfxsize4.5cm
\centerline{\epsfbox{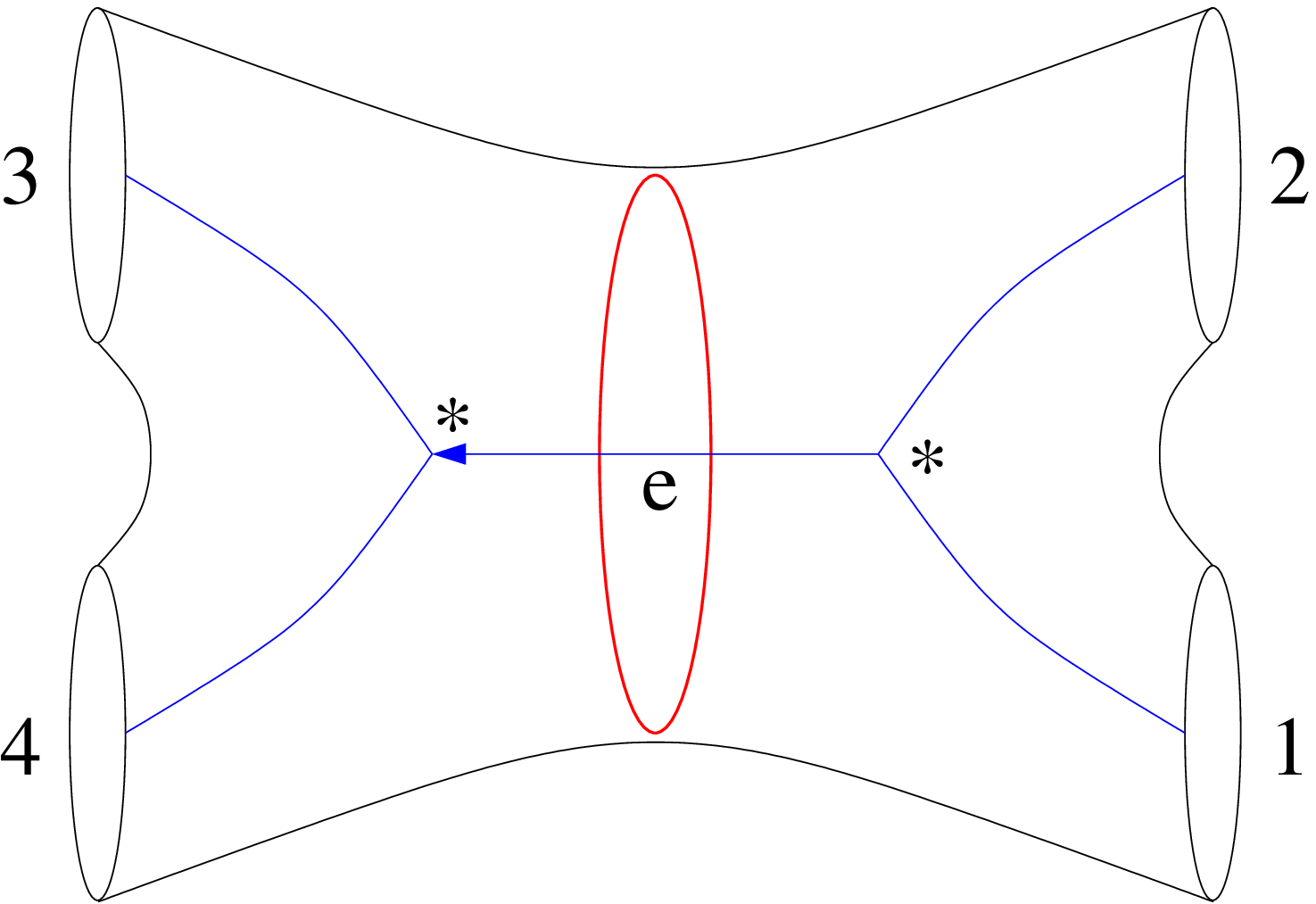}\hspace{.5cm}
$\displaystyle\overset{F_{e}}{\Longrightarrow}$\hspace{.5cm}
\epsfxsize4.5cm\epsfbox{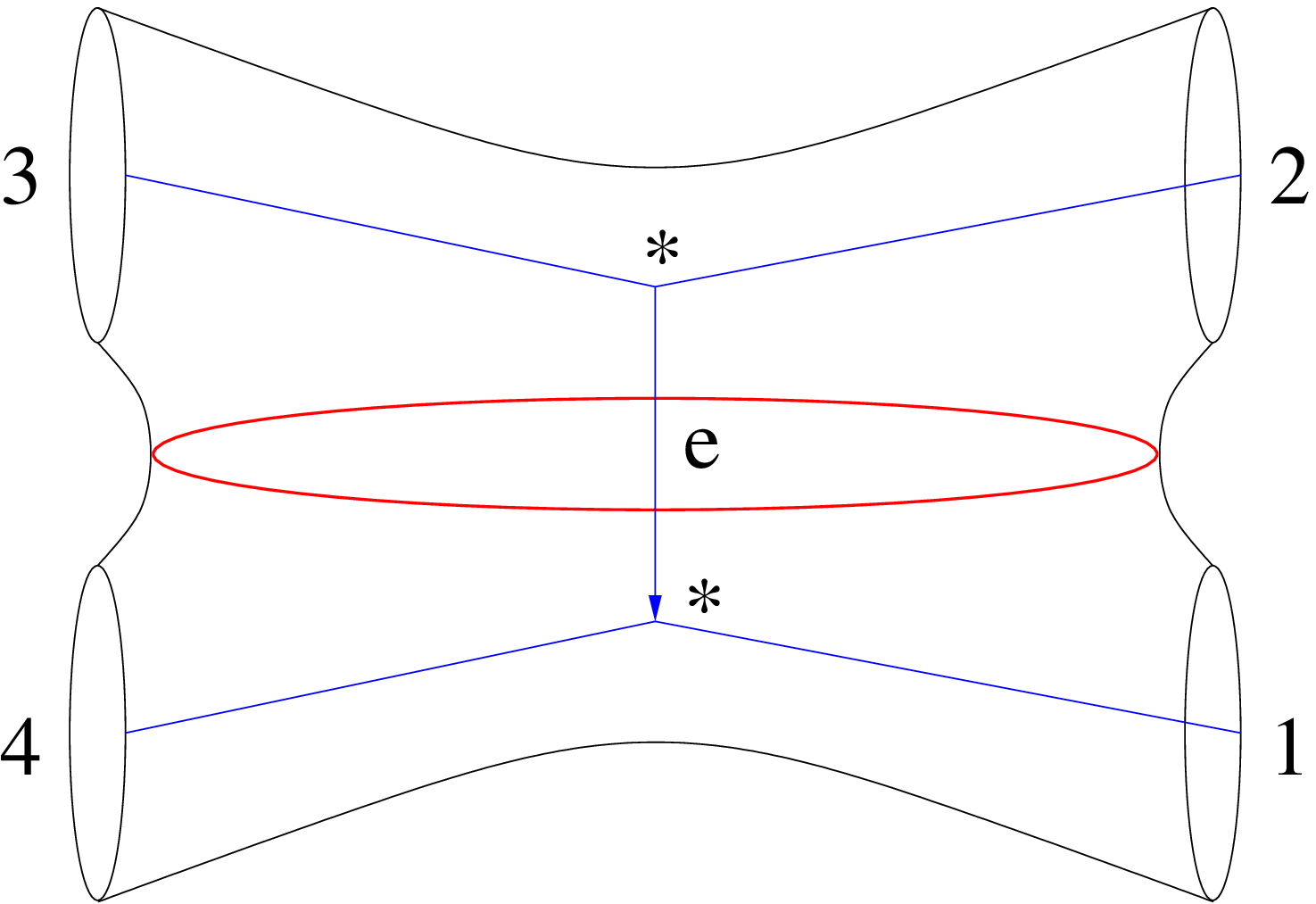}}
\caption{The move $F_e:\si_s\ra \si_t\equiv 
F_e\si_s$}\label{fmove}\vspace{.3cm}
\end{figure}

One may formalize
the resulting structure by introducing a two-dimensional 
CW complex $\CM(C)$ with set of vertices $\CM_\0(C)$
given by the
pants decompositions $\si$, and a 
set of edges $\CM_\1(C)$ associated to the elementary
moves. 

The Moore-Seiberg groupoid is defined to be the path groupoid of 
 $\CM(C)$. It can be described in terms of generators and relations,
the generators being associated with the edges in  $\CM_\1(C)$, 
and the relations associated with the faces of $\CM(C)$. 
The classification of the relations was first presented in \cite{MS},
and rigorous mathematical proofs have been presented in \cite{FG,BK}.
The relations are all represented by sequences of moves localized
in subsurfaces $C_{g,n}$ with genus $g=0$ and $n=3,4,5$ punctures. 
Graphical representations of the
relations can be found in \cite{MS,FG,BK}.

\subsection{Representation of the Moore-Seiberg groupoid}

A representation of the Moore-Seiberg groupoid can be obtained 
from the quantum Teichm\"uller theory as follows \cite{T05}. 

\subsubsection{}

The starting point is a construction which produces a fat graph 
$\vf_\si$ associated to pants decompositions $\si$. 
This 
construction depends on a choice of decoration for a pants
decomposition $\si$ which is the choice of a distinguished 
boundary component for each trinion. The distinguished 
boundary component will be called outgoing, the other
boundary component incoming.
The decoration is 
indicated by an asterisk in the Figures \ref{bmove}, 
\ref{zmove} and \ref{fmove}. We identify the Z-move as the 
elementary change of decorations. In the following we will
use the notation $\si$ for {\it decorated} pants decompositions.

The construction described in \cite{T05} 
can be applied for a subset of decorated 
pants decompositions which is defined by the condition that
outgoing boundary components are never glued to another.
Such pants decompositions will be called admissible.

There is
a natural fat graph $\vf_\si$ associated to $\si$ which is 
defined by 
gluing the following pieces:
\begin{itemize}
\item For each curve $c$ separating two 
incoming boundary component let us insert an annulus $A_c$,
with fat graph locally of the form depicted in Figure \rf{annfig}.
\item Trinions: See Figure \rf{trinfig}.
\item Holes: See Figure \rf{holefig}.
\end{itemize}
Gluing these pieces in the obvious way will produce the
connected graph $\vf_\si$ associated to the 
Moore-Seiberg graph $\si$ we started from.

\subsubsection{} 

Following \cite{T05}, we will now describe how to map a maximal commuting 
family of length operators to diagonal form. 
We will start from the hybrid representation described above in which the
length operators and constraints associated to the incoming holes
are diagonal. Recall that states are represented by wave-functions
$\psi(p;s_b,z_b)$ in such a representation, where $p:\tilde{\vf}_0\mapsto\BR$,
and $\tilde{\vf}_0$ is the subset of $\vf_0$ that does not 
contain $\hat{h}$ nor $h'$ for any incoming hole $h$.
A maximal commuting 
family of length operators is associated to the cut system
$\CC_\si$ of a pants decompostion.

To each vertex $v$ of $\Ga_\si$ 
assign the length operator $\SL^\2_v$ and $\SL^\1_v$
to the incoming and $\SL_v$ to the outgoing boundary components of the 
pair of pants $P_v$ containing $v$.
The main tool is the operator $\SC_v^{\CT}$ which maps $\SL_v$ to a simple 
standard form,
\begin{equation}
\SC_v^{\CT}\cdot\SL_v\cdot(\SC_v^{\CT})^{-1}\,=\,2\cosh 2\pi b \spp_v+e^{-2\pi b \sq_v}\,.
\end{equation}
Such an operator can be constructed explicitly as
\begin{equation}\label{SCTdef}
\SC_v^{\CT}(\mss_v^\2,\mss_v^\1):=\,e^{-2\pi i s_\2\sq_v} \,\frac{e_b(\mss^\1_v+\spp_v)}{e_b(\mss^\1_v-\spp_v)} \,
e^{-2\pi i \mss^\1_v\spp_v}\,(e_b(\sq_v-\mss^\2_v))^{-1}\,
e^{-2\pi i(\sz_v^\2\spp_v+\sz_v^\1\sq_v)},
\end{equation}
where $\mss_v^\imath$, $\imath=\1,\2$ are the 
positive self-adjoint operators defined by 
$\SL_v^\imath=2\cosh2\pi b\mss_v^{\imath}$, and $\sz^\2_v$, $\sz^\1_v$
are the constraints associated to the incoming boundary components
of $P_v$. The operator $\SC_v$ is clearly related to the operator
$\SC_\1$ that appeared as a key ingredient of the Clebsch-Gordan maps
for the modular double in the previous part.

The map to the length representation is then constructed 
as the ordered
product over the operators $\SC_v$, $v\in\vf_{\si,0}$. 
The resulting operator may be represented as the following explicit 
integral transformation: Let $s$ be 
the assignment $v:\tilde{\vf}_0\mapsto\BR_+$. Define
\begin{align}
\Phi(s,z_b)\,=\,\int\limits_{\BR^{n-3}} 
\Big({\prod}_{v\in\tilde\vf_0} dp_v\; 
K_{s^{\2}_vs^{\1}_v}^{z^{\2}_vz^{\1}_v}(s_v,p_v)\Big)\,
\psi(p;s_b,z_b)\,.
\end{align}
The kernel $K_{s_{\2}s_{\1}}^{z_{\2}z_{\1}}(s,p)$ has the explicit form
\begin{align}
K_{s_{\2}s_{\1}}^{z_{\2}z_{\1}}(s,p)& \,=\,
\langle\,s\,|\,\SC^{\CT} \,|\,p\,\rangle \,\\
&\,=\,
\langle\,s\,|\,e^{-2\pi i s_\2\sq} \,\frac{e_b(s_\1+\spp)}{e_b(s_\1-\spp)} \,
e^{-2\pi i s_\1\spp}\,(e_b(\sq-s_\2))^{-1}\,
e^{-2\pi i(z_\2\spp+z_\1\sq)}
\,|\,p\,\rangle \notag \\
& \,=\,\int_\BR dp'\;\langle\,s\,
|\,p'\,\rangle\,\frac{w_b(s_\1-p'-s_\2)}{w_b(s_\1+p'+s_\2)}\,
\langle\,p'\,|\,e^{-2\pi i s_\2\sq_\1}(e_b(\sq-s_\2))^{-1}
e^{-2\pi i(z_\2\spp+z_\1\sq)}
\,|\,p\,\rangle
\notag\\
& \,=\,\zeta_{\0}\int_\BR dp'\;
{e^{-2\pi i(s_\2-z_b)(s_\2+p'-p+z_\1)}}{e_b(p-z_\1-s_\2-p'+z_b)}
\notag\\
&\hspace{4cm}\times\frac{w_b(s_\1-p'-s_\2)}{w_b(s_\1+p'+s_\2)}
\frac{w_b(s+p'-z_b)}{w_b(s-p'+z_b)}e^{-2\pi i z_\2(2p-z_\1)}
 \,.
\notag\end{align}
In the last step we have used the complex conjugate of equation
\rf{matel2} in Appendix \ref{CGapp} below.

\subsubsection{}

The construction above canonically defines operators $\SU_{\si_\2\si_\1}$ 
intertwining between the representations 
$\pi_{\si_\1}$ and $\pi_{\si_\2}$ as
\begin{equation}\label{SWtoSU}
\SU_{\si_\2\si_\1}:=\SC_{\si_\2}^{}\cdot\SW_{\vf_{\si_\2}\vf_{\si_\1}}\cdot\SC_{\si_\1}^{-1}\,,
\end{equation}
where $\SW_{\vf_{\si_\2}\vf_{\si_\1}}$ is any operator representing the
move $[\vf_{\si_\2},\vf_{\si_\1}]$ between the fat graph associated to 
$\si_\1$ and $\si_\2$, respectively. In this way one defines
operators  $\hat\SB_v$, $\hat\SA_e$, and $\hat\SZ_v$  associated to the
elementary moves $B_v$, $F_e$ and $Z_v$ 
between different MS-graphs, respectively. These operators
satisfy operatorial versions of the Moore-Seiberg consistency conditions
\cite{T05}, which follow immediately from the relations of the 
Ptolemy groupoid using \rf{SWtoSU}.

One should note that the definition \rf{SWtoSU} can be applied only
if the decorated pants 
decompositions $\si_\2$ and $\si_\1$ are both admissible.
However, this restriction will quickly turn out to be inessential.
To begin with, let us note that the definition
\rf{SWtoSU} can indeed be applied to all operators that appear
in the relations of the Moore-Seiberg groupoid. A quick inspection 
of the relations listed in \cite{BK,T05} shows that all the 
decorated pants decompositions appearing therein are 
admissible.

The wave-function $\phi(s):=\Phi(s,z_b)|_{z_b=0}$ represents
the projection of the wave-function $\Phi$ 
to the subspace defined by vanishing constraints $\sz_{\vf,c}$.
The operators $\SU_{\si_\2\si_\1}$ commute with the constraints
$\sz_{\vf,c}$, in the sense
that
\begin{equation}
\SU_{\si_\2\si_\1}\cdot \sz_{\vf_{\si_\1},c}\,=\,
\sz_{\vf_{\si_\2},c}\cdot\SU_{\si_\2\si_\1}\,.
\end{equation}
The projections of the operators  $\hat\SB_v$, $\hat\SA_e$, and $\hat\SZ_v$
define operators $\SB_v$, $\SA_e$, and $\SZ_v$
which satisfy the relations of 
the Moore-Seiberg groupoid up to possible projective phases.

\subsubsection{}\label{SSec:locality}

The representation of the Moore-Seiberg groupoid defined via
\rf{SWtoSU} has nice locality properties in the sense that
the operator representing a move localized in a subsurface $C'$ of $C$
will only act on the variables $s_e$ associated to the edges $e$
of $\Ga_\si$ that have nontrivial intersection with $C'$.
 In order to make this precise and easily visible in the notations,
let us introduce the one-dimensional
Hilbert space $\CH_{s_2s_1}^{s_3}$ associated to  a
three-holed sphere $C_{0,3}$ with parameters $s_i$, $i=1,2,3$ 
associated to the boundary components 
according the numbering convention indicated on the left of 
Figure \ref{bmove}. Note that edges $e$ of the MS graph determine
curves $c_e$ in the cut system. The eigenvalues $L_e$ of the 
operators $\SL_{c_e}$ will be parameterized, as before, in terms
of real numbers $s_e$ such that
$L_e=2\cosh 2\pi bs_e$.
To a pants decomposition we may then associate 
the direct integral
of Hilbert spaces
\begin{equation}\label{Hilbertsp}
\CH_\si\,\simeq\,\int^{\oplus}_{\BR_+^h}
\prod_{e\in\si_1}
d\mu(s_e) \;\bigotimes_{v\in\si_0} \CH_{s_2(v)s_1(v)}^{s_3(v)}\,.
\end{equation}
We denoted the set of internal edges of the MS graph $\si$ by $\si_1$,
and the set of vertices by $\si_0$.

As a useful notation let us introduce ``basis vectors'' $\langle \,s\,|$ 
for $\CH_\si$, 
more precisely distributions on dense subspaces of $\CH_\si$ such that
the wave-function $\psi(s)$ of a state $|\,\psi\,\rangle$ 
is represented as $\psi(s)\,=\,\langle \,s\,|\,\psi\,\rangle$.
Representing  $\CH_\si$ as in \rf{Hilbertsp} 
one may identify  
\begin{equation}\label{basis}
\langle \,s\,|\,\simeq\,
\bigotimes_{v\in\si_0} v_{s_2(v)s_1(v)}^{s_3(v)}\,,
\end{equation}
where $v_{s_2s_1}^{s_3}$ is understood as an element of the 
dual $\big(\CH_{s_2,s_1}^{s_3}\big)^{t}$ of the 
Hilbert space $\CH_{s_2s_1}^{s_3}$.

The operators $\SB_v$, $\SZ_v$ and $\SA_e$ will be 
represented in the following 
form:
Given functions $B_{s_\2s_\1}^{s_\3}$ and $Z_{s_\2s_\1}^{s_\3}$ of 
three variables one may define multiplication operators $\SB$ and
$\SZ$ as
\begin{subequations}\label{BZFdef}
\begin{align}
& \SB\cdot v_{s_\2s_\1}^{s_\3}\,=\,B_{s_\2s_\1}^{s_\3}
v_{s_\2s_\1}^{s_\3}\,,\\
& \SZ\cdot v_{s_\2s_\1}^{s_\3}\,=\,Z_{s_\2s_\1}^{s_\3}
v_{s_\1s_\3}^{s_\2}\,.
\end{align}
For each vertex $v$ of $\Ga_{\si}$ the representation
\rf{Hilbertsp} suggests an obvious way to lift $\SB$ and
$\SZ$ to operators $\SB_v$ and $\SZ_v$ mapping $\CH_{\si}$ to 
$\CH_{B_v \si}$ and $\CH_{Z_v \si}$, respectively.

Let furthermore $\si_s$ and $\si_t$ be the pants decompositions
of $C_{0,4}$ depicted on the left and right of Figure 
\ref{fmove} respectively. The 
operators $\SA:\CH_{\si_s}\ra\CH_{\si_t}$ can be represented as
\begin{align}
& \SA\cdot v^{s_\4}_{s_\3s_{\2\1}}\ot 
v^{s_{\2\1}}_{s_\2s_\1} \;\,=\,
\int^{\oplus}_\BS d\mu(s_{\3\2}) \;
\Fus{s_\1}{s_\2}{s_\3}{s_\4}{s_{\2\1}}{s_{\3\2}}\;
v^{s_\4}_{s_{\3\2}s_\1}\ot 
v^{s_{\3\2}}_{s_\3s_\2}\,.
\end{align}
\end{subequations}
For each pants decomposition $\si$ and each edge $e$ of $\Ga_\si$ 
one may then use $\SA$ to
define operators $\SA_e:\CH_{\si}
\ra\CH_{F_e \si}$.

\subsubsection{}

Indeed, it is easy to see that the operators $\SB_v$, $\SA_e$, and $\SZ_v$ 
defined via \rf{SWtoSU} are of the form described in  Subsection
\ref{SSec:locality}.
The fact that $\SB_v$ and $\SZ_v$ act as multiplication
operators on $\CH_\si$ follows from the observation that
these operators, as can easily be checked, commute with the 
length operators associated to the boundary components of a trinion.
The form claimed for $\SA_e$ follows from the fact that 
this operator commutes with length operators
associated to the boundary components of the four-holed sphere $C_{0,4}$
containing $e$.
 
It is now clear how the representation of the Moore-Seiberg groupoid 
is extended from pairs $(\si_\2,\si_\1)$ of admissible pants
decompositions to all pairs $(\si_\2,\si_\1)$ of pants decompositions.

\subsection{Explicit form of the generators}

We now come to one of the main applications of the connection
between the modular double and the quantum Teichm\"uller theory:
It will allow us to calculate the explicit representation 
of the operators $\SB_v$, $\SA_e$, and $\SZ_v$. As explained 
above, it suffices to find the corresponding 
operators $\SB$, $\SZ$ and $\SA$ 
which take the form specified in equations \rf{BZFdef} above.
The result will be
\begin{subequations}
\begin{align}
&B_{s_\2s_\1}^{s_\3}\,=\,e^{\pi i (s_\3^2-s_\2^2-s_\1^2+c_b^2)}\,,\\
&\Fus{s_\1}{s_\2}{s_\3}{s_\4}{s_{\2\1}}{s_{\3\2}}=
\fus{s_\1}{s_\2}{s_\3}{s_\4}{s_{\2\1}}{s_{\3\2}}\,,\\
&Z_{s_\2s_\1}^{s_\3}\,=\,1\,.
\end{align}
\end{subequations}
This result will be a rather easy consequence of the
the relations between the modular double and quantum Teichm\"uller 
theory observed above.

\subsubsection{The operator $\SA$}

It follows from the relation  between Clebsch-Gordan maps and 
operators $\SC_v$ observed above that
the operators $\SA$ can be expressed in terms of the operators 
$\SA^{\rm T}(s_\3,s_{\2},s_\1)$ defined as
\begin{equation}\label{SFprform}
\SA^{\rm T}(s_\3,s_{\2},s_\1)\,=\,\SC_\2(s_\3,\mss_{\2\1})\SC_{\1}(s_\2,s_\1)
\cdot\ST_{{\cone}{\ctwo}}^{-1}
\cdot\big[\SC_\1(\mss_{\3\2},s_{\1})\SC_{\2}(s_\3,s_\2)\big]^{-1}\,.
\end{equation}
We have given a diagrammatic
representation for the the operator $\SA^{\rm T}$ in Figure \ref{fpr}. 
\begin{figure}[t]
\epsfxsize11cm
\centerline{\epsfbox{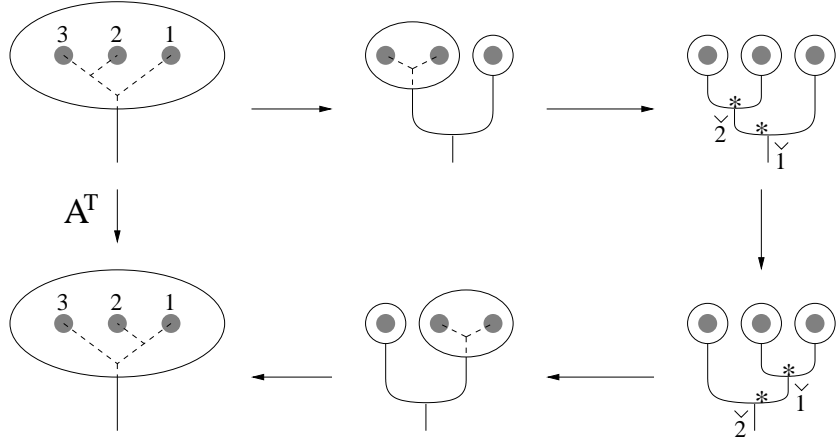}}
\caption{Graphical representation for Equation \ref{SFprform}.}
\label{fpr}\end{figure}

By projecting 
$\SA^{\rm T}(s_\3,s_{\2},s_\1)$
to vanishing constraints one gets
an operator $\SA^{\rm T}_{s_\3s_\2s_\1}:\CH_{\si_s}\ra \CH_{\si_t}$.
It is not hard to see that
we have
$\SA^{\rm T}_{s_\3s_\2s_\1}=\SA_{s_\3s_\2s_\1}$, where 
$\SA_{s_\3s_\2s_\1}$  is the operator defined in \rf{fusdef}.
Indeed, we may express $\SA_{s_\3s_\2s_\1}$ in the following form
\[
\SA_{s_\3s_\2s_\1}\,=\, 
\SC_\2(s_\3,\mss_{\2\1})\SC_{\1}(s_\2,s_\1)
\cdot\ST_{{\2}{\3}}^{-1}\,\ST_{{\1}{\2}}^{-1}\,\ST_{{\2}{\3}}^{}\,
\ST_{{\1}{\3}}^{}
\cdot\big[\SC_\1(\mss_{\3\2},s_{\1})\SC_{\2}(s_\3,s_\2)\big]^{-1}\,.
\]
By using \rf{pentrel} one easily simplifies this expression to the
form given in \rf{SFprform}.
This result allows us to conclude that the matrix elements of the 
fusion operator $\SA$ are given by the 
b-6j symbols
$\fus{s_\1}{s_\2}{s_\3}{s_\4}{s_{\2\1}}{s_{\3\2}}$.

\subsubsection{The operator $\SB$}

It follows from our main result in Subsection \ref{SSsec:Rop} 
that the operator $\SB_v$ is represented by $\SP_{\2\1}\sr_{\2\1}$,
where $\SP_{\2\1}$ is the permutation operator. It was shown in \cite{BT1}
that the Clebsch-Gordan maps diagonalize this operator, with eigenvalue
being $e^{\pi i (s_\3^2-s_\2^2-s_\1^2+c_b^2)}$.

\subsubsection{The operator $\SZ$}

Let us, on the one hand, consider the relation in  the Moore-Seiberg groupoid
drawn in Figure \ref{MSrel}.

\begin{figure}[htb]
\epsfxsize9cm
\centerline{\epsfbox{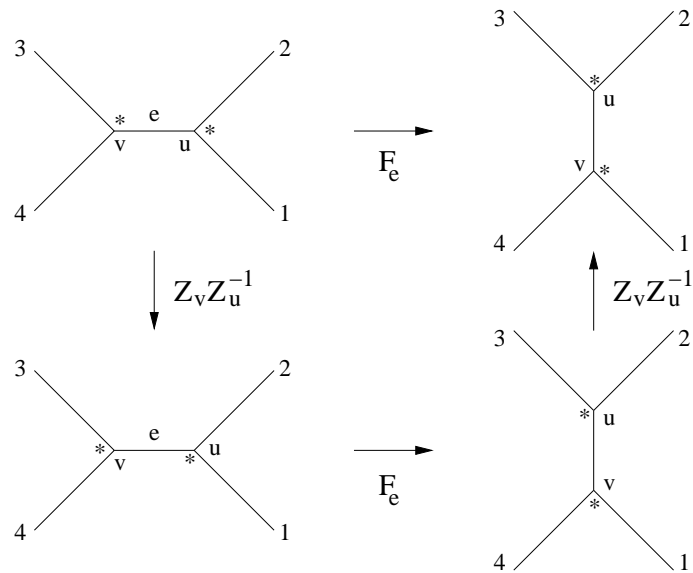}}
\caption{A relation in the Moore-Seiberg groupoid.}
\label{MSrel}\end{figure}

This relation implies the symmetry relation
\begin{equation}
\Fus{s_\1}{s_\2}{s_\3}{s_\4}{s_{\2\1}}{s_{\3\2}}\,=\,
(Z^{s_{\2\1}}_{s_\2s_\1})^{-1}Z^{s_\4}_{s_\3s_{\2\1}}\,
\Fus{s_\3}{s_\4}{s_\1}{s_\2}{s_{\2\1}}{s_{\3\2}}\,
(Z^{s_{\2}}_{s_{\3\2}s_\3})^{-1}Z^{s_{\3\2}}_{s_\4s_\1}\,.
\end{equation}

Note, on the other hand, that 
the coefficients $\big\{\,{}^{\al_1}_{\al_3}\,{}^{\al_2}_{\al_4}\,{}^{\al_s}_{\al_t}\big\}_b^{}$ satisfy the tetrahedral symmetries
\begin{equation}
\big\{\,{}^{\al_1}_{\al_3}\,{}^{\al_2}_{\al_4}\,{}^{\al_s}_{\al_t}\big\}_b^{}
=\big\{\,{}^{\al_2}_{\al_4}\,{}^{\al_1}_{\al_3}\,{}^{\al_s}_{\al_t}\big\}_b^{}
=\big\{\,{}^{\al_2}_{\al_4}\,{}^{\al_s}_{\al_t}\,{}^{\al_1}_{\al_3}\big\}_b^{}
=\big\{\,{}_{\al_1}^{\al_3}\,{}_{\al_2}^{\al_4}\,{}_{\al_t}^{\al_s}\big\}_b^{}\,,
\end{equation}
as follows easily from the integral representation \rf{6j3}.
From the comparison 
it is easy to see that we must have  $Z_{s_\2s_\1}^{s_\3}=1$,
as claimed.
\newpage
\appendix

\section{Calculation of the Clebsch-Gordan coefficients} \label{CGapp}
\setcounter{equation}{0}
\subsection{Proof of Proposition \ref{CGpropn:T}}\label{proofCGpropn}

As the most tedious step let us calculate the 
matrix elements of the operator 
$\ST_{\1\2}^{}\,\SC^{-1}_\1$, which is defined as
\begin{equation}\label{mateldef}
C\big(\,{}^{s_{\2\1}}_{p_{\2\1}}\,|\,{}^{s_\2}_{p_\2}\,{}^{s_\1}_{p_\1}\,\big):=
\langle \,p_\2,p_\1\,|\,\ST_{\1\2}^{}\,\SC^{-1}_\1\,|\,p_{\2\1},s_{\2\1}\,\rangle\,.
\end{equation}
\begin{propn}\label{CGpropn} The matrix elements of  
$\ST_{\1\2}^{}\,\SC^{-1}_\1$ are 
explicitly given by the formula
\begin{align}\label{bCGnew}
C\big(\,{}^{s_{\2\1}}_{p_{\2\1}}\,|\,{}^{s_\2}_{p_\2}\,{}^{s_\1}_{p_\1}\,\big)=&\,
\de(p_{\2\1}-p_\2-p_\1)\,
e^{\frac{\pi i}{2}(\De_{s_\1}+\De_{s_\2}-\De_{s_{\2\1}})} 
w_b(s_\1+s_\2-s_{\2\1})w_b(s_{\2\1}+s_\2-s_\1)\notag\\
&\times e^{\frac{\pi i}{2}(p_3^2-p_\1^2-p_\2^2)}\,
\frac{w_b(p_\1-s_\1)w_b(p_\2-s_\2)}{w_b(p_{\2\1}-s_{\2\1})}
e^{\pi i(p_\2(s_\1+c_b)-p_\1(s_\2+c_b))}\notag\\
& \times\int_{\BR}dp\;e^{\pi ip(s_\1+s_\2-s_{\2\1}+c_b)} 
D_{\frac{1}{2}(s_\2-s_\1-s_{\2\1}-c_b)}(p+p_\2)
D_{\frac{1}{2}(s_\1-s_\2-s_{\2\1}-c_b)}(p-p_\1)
\notag
\\
& \hspace{4cm} \times D_{\frac{1}{2}(s_\1+s_\2+s_{\2\1}-c_b)}^{}(p)\,.
\end{align}
\end{propn}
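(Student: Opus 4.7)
The plan is to compute the matrix element directly by acting with the elementary factors of $\ST_{\1\2}^{}\SC_\1^{-1}$ in sequence. Since $\SC_\1^{-1}$ is built out of $\sq_\1$, $\spp_\1$ and the parameters $s_\1,s_\2$, it commutes with every operator acting on the second tensor factor, so $\SC_\1^{-1}|p_{\2\1},s_{\2\1}\rangle=|p_{\2\1}\rangle_\2\ot\SC_\1^{-1}|s_{\2\1}\rangle_\1$, where $|s_{\2\1}\rangle_\1$ denotes the eigenvector of $\SQ_\1''$ whose $\spp_\1$-representative is the Kashaev kernel \rf{L1-eigenf}.

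First I would compute $\langle p|\SC_\1^{-1}|s_{\2\1}\rangle_\1$ by applying each factor of $\SC_\1^{-1}$ to $\langle p|s_{\2\1}\rangle$: the exponentials $e^{2\pi is_\1\spp_\1}$ and $e^{2\pi is_\2\sq_\1}$ generate translations in the arguments of the $e_b$-factors in \rf{L1-eigenf}, while $e_b(\sq_\1-s_\2)$ and the ratio $e_b(s_\1-\spp_\1)/e_b(s_\1+\spp_\1)$ act simply by multiplication. Converting each $e_b$ into $w_b$ via \rf{eb-wb} extracts a Gaussian phase in $p$ together with a ratio of $w_b$-functions that will eventually become the factor $w_b(p_\1-s_\1)/w_b(p_{\2\1}-s_{\2\1})$ on the right of \rf{bCGnew}.

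Next I would act with $\ST_{\1\2}=e_b(\sq_\1+\spp_\2-\sq_\2)\,e^{-2\pi i\spp_\1\sq_\2}$ and Fourier transform back to the $(p_\2,p_\1)$-representation of the bra. The shift $e^{-2\pi i\spp_\1\sq_\2}$ couples the two tensor factors by sending $\varphi(q_\1)\,e^{2\pi ip_{\2\1}q_\2}$ to $\varphi(q_\1-q_\2)\,e^{2\pi ip_{\2\1}q_\2}$; Fourier transforming in $q_\2$ then produces the momentum-conservation delta $\de(p_{\2\1}-p_\2-p_\1)$. The remaining factor $e_b(\sq_\1+\spp_\2-\sq_\2)$ is diagonalized by inserting a resolution of identity in the eigenbasis of the combination $\sq_\1+\spp_\2-\sq_\2$, with dual variable $p$; this yields the single remaining integral in \rf{bCGnew}. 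Combining the resulting $e_b$-factor with the two $e_b$'s already collected from $\SC_\1^{-1}$ and rewriting each of them as a ratio of $w_b$'s via \rf{eb-wb} gives precisely the three $D_\al$-functions with shifts $\frac{1}{2}(s_\2-s_\1-s_{\2\1}-c_b)$, $\frac{1}{2}(s_\1-s_\2-s_{\2\1}-c_b)$ and $\frac{1}{2}(s_\1+s_\2+s_{\2\1}-c_b)$ claimed in the integrand.

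The main obstacle is bookkeeping: one must track every Gaussian prefactor produced by commuting exponentials of $\sq$'s and $\spp$'s past one another via the Heisenberg relation, and by each conversion \rf{eb-wb}, then check that these combine into the asserted phases $e^{\frac{\pi i}{2}(\De_{s_\1}+\De_{s_\2}-\De_{s_{\2\1}})}$, $e^{\frac{\pi i}{2}(p_3^2-p_\1^2-p_\2^2)}$, $e^{\pi ip(s_\1+s_\2-s_{\2\1}+c_b)}$ and $e^{\pi i(p_\2(s_\1+c_b)-p_\1(s_\2+c_b))}$, together with the two $w_b$-factors $w_b(s_\1+s_\2-s_{\2\1})w_b(s_{\2\1}+s_\2-s_\1)$ on the first line. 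The intertwining identities \rf{ST12} and \rf{SC1} furnish a useful final consistency check: they imply that any candidate expression reproducing the correct singular support and the correct eigenvalue equations for $\SQ_{\2\1}$, $\SK_\2\SK_\1$ and $\SE_\2\SK_\1+\SK_\2^{-1}\SE_\1$ must agree with $\langle p_\2,p_\1|\ST_{\1\2}\SC_\1^{-1}|p_{\2\1},s_{\2\1}\rangle$ up to a function of $s_\1,s_\2,s_{\2\1}$ alone, pinning down everything except the overall normalization in the first line of \rf{bCGnew}.
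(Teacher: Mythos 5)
Your overall strategy---computing $\langle p_\2,p_\1|\ST_{\1\2}\SC_\1^{-1}|p_{\2\1},s_{\2\1}\rangle$ by composing the elementary factors, extracting the momentum-conservation delta from the shift operator $e^{-2\pi i\spp_\1\sq_\2}$, and converting $e_b$'s into $w_b$'s and $D_\al$'s---is the same as the paper's. But there is a genuine gap in the middle: you assert that $e_b(\sq_\1-s_\2)$ and the ratio $e_b(s_\1-\spp_\1)/e_b(s_\1+\spp_\1)$ both ``act simply by multiplication.'' They do not, in any single representation: the first is a multiplication operator in the $q_\1$-representation and the second in the $p_\1$-representation, and the matrix element of their composition is a nontrivial integral of a ratio of quantum dilogarithms. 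This is precisely where the real work lies. The paper inserts two resolutions of the identity, uses the Fourier-transform representations of $e_b$ (its \rf{FTeb1}, \rf{FTeb2}) to evaluate the individual matrix elements, and then evaluates the intermediate integral over $p'$ by means of the $b$-beta integral $\int_{\BR}dz\,e^{-2\pi iz(u+c_b)}\,e_b(z+c_b)/e_b(z-x-c_b)=\zeta_\0^{-1}e_b(u-x)/(e_b(-x-c_b)e_b(u))$. Without this identity (or an equivalent), your computation cannot be closed; it is not reducible to tracking Gaussian prefactors.

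A second, smaller omission: even after these steps one does \emph{not} land directly on the integrand claimed in \rf{bCGnew}. The direct computation produces a single integral $\CI''$ whose three $D$-functions carry different indices and shifted arguments; the paper must then apply the star-triangle integral identity \rf{startriang} together with the splitting identity for $D_aD_b$ to transform $\CI''$ into the form with $D_{\frac{1}{2}(s_\2-s_\1-s_{\2\1}-c_b)}(p+p_\2)$, $D_{\frac{1}{2}(s_\1-s_\2-s_{\2\1}-c_b)}(p-p_\1)$ and $D_{\frac{1}{2}(s_\1+s_\2+s_{\2\1}-c_b)}(p)$, and this is also what generates the prefactor $w_b(s_\1+s_\2-s_{\2\1})w_b(s_{\2\1}+s_\2-s_\1)$ on the first line. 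Your closing remark that the intertwining relations pin the answer down up to a function of $s_\1,s_\2,s_{\2\1}$ is a sound consistency check, but it cannot supply that $s$-dependent first line, which is part of the assertion being proved. To complete the argument you need to cite and apply both integral identities explicitly.
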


\begin{proof}
By using 
\begin{equation}
e_b(\sq_\1+\spp_\2-\sq_\2)e^{-2\pi i\spp_\1\sq_\2}\,=\,
e^{-2\pi i\spp_\1\sq_\2}e_b(\sq_\1+\spp_\2-\spp_\1)\,,
\end{equation}
it is easy to see that 
\begin{equation}\label{C-matel}
C\big(\,{}^{s_{\2\1}}_{p_{\2\1}}\,|\,{}^{s_\2}_{p_\2}\,{}^{s_\1}_{p_\1}\,\big)=
\de(p_{\2\1}-p_\2-p_\1)
C^{s_{\2\1}}_{s_\2s_\1}(p_\1,p_{\2\1})\,,
\end{equation}
where
\begin{equation}
C^{s_{\2\1}}_{s_\2s_\1}(p_\1,p_{\2\1})=
\langle \,p_\1\,|\,e_b(\sq_\1-\spp_\1+p_{\2\1})\,
\SC^{-1}_\1\,|\,s_{\2\1}\,\rangle\,.
\end{equation}
As a preparation it will be convenient to  rewrite $C_\1^{-1}$ using \rf{eb-wb}
in the form
\begin{equation}
\SC^{-1}_\1:=e_b(\sq_\1-s_\2)e^{2\pi i s_\2\sq_\1}
\frac{w_b(s_\1+\spp_\1+s_\2)}{w_b(s_\1-\spp_\1-s_\2)}
\,.
\end{equation}
The matrix element \rf{C-matel} may then 
be calculated by inserting two resolutions of the identity as follows,
\begin{align}\label{Cintrep0}
C^{s_{\2\1}}_{s_\2s_\1}(p_\1,p_{\2\1})=
\int_{\BR^2}dp'dp''\; & 
\langle \,p_\1\,|\,e_b(\sq_\1-\spp_\1+p_{\2\1})\,|p'\,\rangle\times\\
\times & \langle\,p'\,|\,e_b(\sq_\1-s_\2)e^{2\pi i s_\2\sq_\1}\,|\,p''\,\rangle
\frac{w_b(s_\1+p''+s_\2)}{w_b(s_\1-p''-s_\2)}\langle\,p''\,
|\,s_{\2\1}\,\rangle\,.
\notag\end{align}
The ingredients of the kernel are the following:\\[1ex]
{\it 1. The matrix element $\langle p_\1|e_b(\sq_\1-\spp_\1+p_{\2\1})|p'\rangle$}:
\\[1ex]
We may use the integral identity 
\begin{equation}\label{FTeb1}
e_b(x)\,=\,\zeta_{\0}\int_{\BR-i0}dy\; e^{-2\pi i xy}e^{-\pi i y^2}
e_b(y+c_b)\,,
\end{equation}
where $\zeta_\0=e^{\frac{\pi i}{12}(1-4c_b^2)}$, 
in order to represent the function $e_b$ in this matrix element.
We get 
\begin{align}
\langle p_\1|e_b(\sq_\1-\spp_\1+p_{\2\1})|p'\rangle &\,=\,
\zeta_{\0}\int_{\BR-i0}dy\; e^{-\pi i y^2}
e_b(y+c_b)\,\langle p_\1|e^{-2\pi i(\sq_\1-\spp_\1+p_{\2\1})y}|p'\rangle\notag\\
&\,=\,\zeta_{\0}\int_{\BR-i0}dy\; e^{-\pi i y^2}
e_b(y+c_b)\,e^{\pi i y(p_\1+p'-2p_{\2\1})}\de(p_\1+y-p')\notag\\
&\,=\,\zeta_{\0}\,e^{2\pi i(p'-p_\1)(p_\1-p_{\2\1})}e_b(p'-p_\1+c_b)\,.
\label{matel1}\end{align}
{\it 2. The matrix element 
$\langle p'|e_b(\sq_\1-s_\2)e^{2\pi i s_\2\sq_\1}|p''\rangle$}:\\[1ex]
We now use a variant of the integral identity \rf{FTeb1} which takes the form
\begin{equation}\label{FTeb2}
e_b(x)\,=\,\zeta_{\0}^{-1}
\int_{\BR-i0}dy\; e^{-2\pi i xy}\frac{e^{2\pi i c_b y}}
{e_b(-y-c_b)}\,.
\end{equation}
A calculation similar to the one leading to \rf{matel1} gives now
\begin{align}
\langle\,p'\,|\,e_b(\sq_\1-s_\2)e^{2\pi i s_\2\sq_\1}\,|\,p''\,\rangle
&\,=\,
\zeta_{\0}^{-1}\int_{\BR-i0}dy\; \frac{e^{2\pi i (s_\2+c_b) y}}
{e_b(-y-c_b)}\,\langle \,p'+y-s_\2\,|\,p''\,\rangle\notag \\
& \,=\,\zeta_{\0}^{-1}
\frac{e^{2\pi i(s_\2+c_b)(s_\2+p''-p')}}{e_b(p'-s_\2-p''-c_b)}\,.
\label{matel2}\end{align}
{\it 3. The integral over $p'$}:\\[1ex]
Let us focus on the integral over $p'$ appearing in 
\rf{Cintrep0}:
\begin{equation}
\CI':=\int_{\BR}dp'\; 
\langle \,p_\1\,|\,e_b(\sq_\1-\spp_\1+p_{\2\1})\,|p'\,\rangle
\langle\,p'\,|\,e_b(\sq_\1-s_\2)e^{2\pi i s_\2\sq_\1}\,|\,p''\,\rangle\,.
\end{equation}
Inserting \rf{matel1} and \rf{matel2} yields
\begin{align}
\CI'\,=&\,e^{2\pi i(s_\2+c_b)(s_\2+p'')}  
e^{2\pi ip_\1(p_{\2\1}-p_{1})}\times\notag\\
&\times\int_{\BR+i0}
dp'\;e^{-2\pi ip'(p_{\2\1}+s_\2+c_b-p_\1)}
\frac{e_b(p'-p_\1+c_b)}{e_b(p'-s_\2-p''-c_b)}\,.
\end{align}
By using 
\begin{equation}
\int_{\BR}dz\;e^{-2\pi i z(u+c_b)}\frac{e_b(z+c_b)}{e_b(z-x-c_b)}\,=\,
\zeta_\0^{-1}\frac{e_b(u-x)}{e_b(-x-c_b)e_b(u)}\,,
\end{equation}
we may calculate 
\begin{align}\label{CI'}
\CI'\,=&\,\zeta_\0^{-1} 
\frac{e^{2\pi i(s_\2+c_b)(s_\2+p''-p_\1)} \,e_b(p_{\2\1}-p'')}{e_b(p_\1-s_\2-p''-c_b)e_b(p_{\2\1}+s_\2-p_\1)}\,.
\end{align}
It is now convenient to rewrite the resulting expression in terms
of the function $w_b(x)$ related to $e_b(x)$ via \rf{eb-wb}.
We find using $p_{\2\1}=p_\2+p_\1$
\begin{align}
\CI'=\,e^{\pi i(s_\2+c_b)(s_\2-p_\1)} 
e^{-\pi ip_\2s_\2}\,e^{\frac{\pi i}{2}(p_3^2-p_\1^2-p_\2^2)}
\frac{w_b(p_{2}+s_\2)w_b(p''-p_{\2\1})}{w_b(p''+s_\2-p_\1+c_b)}e^{\pi ip''(s_\2-p_\2+c_b)}.
\end{align}
Taking into account that 
\begin{equation}
\langle\,p''\,
|\,s_{\2\1}\,\rangle\,=\,\frac{w_b(s_{\2\1}-p''-c_b)}{w_b(s_{\2\1}+p''+c_b)}\,,
\end{equation}
we may use \rf{CI'} to get a single integral representation for
$C^{s_{\2\1}}_{s_\2s_\1}(p_\1,p_{\2\1})$, 
which takes the form
\begin{equation}
C^{s_{\2\1}}_{s_\2s_\1}(p_\1,p_{\2\1})=e^{\frac{\pi i}{2}(p_3^2-p_\1^2-p_\2^2)}
e^{\pi i(s_\2+c_b)(s_\2-p_\1)} 
e^{-\pi ip_\2s_\2}\,w_b(p_{2}+s_\2)\,\CI''\,,
\end{equation}
where the integral $\CI''$ is defined as
\begin{equation}
\CI'':=\int\limits_{\BR-i0}dp\;
\frac{w_b(p-p_{\2\1})}{w_b(p+s_\2-p_\1+c_b)}
\frac{w_b(p+s_\1+s_\2)}{w_b(p+s_{\2\1}+c_b)}
\frac{w_b(p+s_\2-s_\1)}{w_b(p-s_{\2\1}+c_b)}e^{\pi ip(s_\2-p_\2+c_b)}\,.
\end{equation}
We'll need to rewrite this integral further. Let us first introduce
the combination
\begin{equation}
D_a(x):=\,\frac{w_b(x+a)}{w_b(x-a)}\,.
\end{equation}
In terms of this function we may write $\CI''$ as
\begin{align}
\CI'':=\int_{\BR}dp\;e^{\pi ip(s_\2-p_\2+c_b)}\;
& D_{-\frac{1}{2}(s_\2+p_{2}+c_b)}^{}\big(p+\fr{1}{2}(s_\2-p_\1-p_{\2\1}+c_b)\big)
\\[-.75ex]
 \times &
D_{\frac{1}{2}(s_\1+s_\2-s_{\2\1}-c_b)}\big(p+\fr{1}{2}(s_\1+s_\2+s_{\2\1}+c_b)\big)
\notag\\
 \times &
D_{\frac{1}{2}(s_\2-s_\1+s_{\2\1}-c_b)}\big(p+\fr{1}{2}(s_\2-s_\1-s_{\2\1}+c_b)\big)
\notag\,.
\end{align}
By using the identities \cite[Equation (A.34)]{BT}
\begin{align}\label{startriang}
&\int dx\;D_{\al}(x+u)D_{\be}(x+v)D_{\ga}(x+w)e^{-2\pi i x\de}=\\
&=A_{\al\be\ga\de}D_{\al+\be+c_b}(u-v)e^{-2\pi i(\al+\be+c_b)w}
\int dx\;e^{-2\pi i x\ga^*}D_{\al^*}(x+v)D_{\be^*}(x+u)D_{\de^*}(x+w)\,,
\notag\end{align}
and 
\begin{equation}
D_{a}(x)D_{b}(y)\,=\,
D^{}_{\frac{1}{2}(a+b+x-y)}\big(\fr{1}{2}(a-b+x+y)\big)
D^{}_{\frac{1}{2}(a+b+y-x)}\big(\fr{1}{2}(b-a+x+y)\big)\,,
\end{equation}
we find that 
$C\big({}^{s_{\2\1}}_{p_{\2\1}}|{}^{s_\2}_{p_\2}{}^{s_\1}_{p_\1}\big)$ is
indeed represented by the formula \rf{bCGnew}, a claimed.
\end{proof}

With the help of the integral identity \rf{startriang}
it is straightforward to check that the expression given 
in \rf{b-3j} has the Weyl-symmetries \rf{Weyl}.
The reality \rf{reality} follows immediately
since
\begin{equation}\label{reality'}
\big[\big(\,{}^{s_3}_{p_3}\,|\,{}^{s_\2}_{p_\2}\,{}^{s_\1}_{p_\1}\big)_b^{}\big]^*
=\big(\,{}^{-s_3}_{\;\;\,p_3}\,|\,
{}^{-s_\2}_{\;\;\,p_\2}\,{}^{-s_\1}_{\;\;\,p_\1}\big)_b^{}\,.
\end{equation}
Keeping in mind that
\begin{equation}
\big[\langle\,s_\3,p_\3\,|\,\SC_{s_\2s_\1}\,|\,p_\2,p_\1\,\rangle\big]^* =
\langle \,p_\2,p_\1\,|\,\ST_{\1\2}^{}\,\SC^{-1}_\1\,|\,s_{\3},p_{\3}\,\rangle\,,
\end{equation}
one may complete the proof of
Proposition \ref{CGpropn:T} by comparing the expressions
\rf{b-3j} and \rf{bCGnew}.

\subsection{Proof of Proposition \ref{Comppropn}}\label{PTvsNT}

As the main technical step
let us calculate the Fourier-transformation of the b-Clebsch-Gordan
coefficients 
$\big(\,{}^{s_3}_{x_3}\,|\,{}^{s_\2}_{x_\2}\,{}^{s_\1}_{x_\1}\big)_b^{}$ defined 
in \rf{b-3ja}. We need to calculate the following integral:
\begin{equation}\label{k-x}
\big(\,{}^{s_3}_{k_3}\,|\,{}^{s_\2}_{k_\2}\,{}^{s_\1}_{k_\1}\big)_b^{\rm\sst PT}=
\int_{\BR}dx_3\;e^{-2\pi i k_3x_3}
\int_{\BR^2}dx_\2dx_\1\;e^{2\pi i (k_\2x_\2+k_\1x_\1)}
\big(\,{}^{s_3}_{x_3}\,|\,{}^{s_\2}_{x_\2}\,{}^{s_\1}_{x_\1}\big)_b^{}\,.
\end{equation}
\begin{propn} We have
\begin{align}\label{bCGold}
 \big(\,{}^{s_3}_{k_3}\,|\,{}^{s_\2}_{k_\2}\,{}^{s_\1}_{k_\1}\big)_b^{\rm\sst PT}= 
& \,\de(k_3-k_\2-k_\1)\,
\,e^{\pi i(k_\2(s_\1+c_b)-k_\1(s_\2+c_b))}\,\\
&\times N(s_\3,s_\2,s_\1) 
\,w_b(-s_\1-s_\2-s_3)w_b(s_\1+s_3-s_\2)w_b(s_\2+s_3-s_\1)\,\notag\\
&\times\int_\BR dy \;e^{-\pi i(s_\3-s_\2-s_\1-c_b)y}
D_{\frac{1}{2} (s_\1+s_\2+s_3-c_b)}(y)\notag\\
&\hspace{4cm}\times
D_{\frac{1}{2}(s_\2-s_3-s_\1-c_b)}^{}(y+k_\2)
D_{\frac{1}{2}(s_\1-s_3-s_\2-c_b)}^{}(y-k_\1)\,. 
\notag
\end{align}
\end{propn}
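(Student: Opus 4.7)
The idea is to compute the triple Fourier transform \rf{k-x} directly from the explicit expression \rf{b-3ja}, extract a delta function from translation invariance of the PT kernel, and then reduce the remaining double integral to the single-integral form claimed in \rf{bCGold}.

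First we exploit the translation invariance of the kernel $\big(\,{}^{s_3}_{x_3}\,|\,{}^{s_\2}_{x_\2}\,{}^{s_\1}_{x_\1}\big)_b^{}$: the formula \rf{b-3ja} depends on $x_\1,x_\2,x_\3$ only through the three pairwise differences, which satisfy one linear relation. Changing variables $(x_\1,x_\2,x_\3)\to(t,A_\2,A_\3)$, with $t$ the common translation parameter and
\[
A_\2:=x_\2-x_\3-\fr{s_\1+c_b}{2},\qquad A_\3:=x_\3-x_\1-\fr{s_\2+c_b}{2},
\]
decouples the integrand into $e^{2\pi it(k_\2+k_\1-k_\3)}$ times a function of $A_\2,A_\3$. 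The $t$-integration produces $\de(k_\3-k_\2-k_\1)$, and after imposing this constraint in the remaining linear phases one reads off exactly the factor $e^{\pi i(k_\2(s_\1+c_b)-k_\1(s_\2+c_b))}$ appearing in \rf{bCGold}.

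What remains is a double integral of the schematic form
\[
\int dA_\2\,dA_\3\;e^{2\pi i(k_\2A_\2-k_\1A_\3)}\,D_\al(A_\2+A_\3+\xi)\,D_\be(A_\2)\,D_\ga(A_\3),
\]
with $\xi=(s_\1+s_\2-s_\3+c_b)/2$ and $\al,\be,\ga$ the $D$-function orders read off from \rf{b-3ja}. Here lies the main obstacle: the reduction from a double to a single integral. The plan is to substitute $y:=-(A_\2+A_\3+\xi)$ (or its sign-flipped counterpart) as one of the two new variables and evaluate the remaining integral by means of the star-triangle identity \rf{startriang}. Indeed, the $y$-integral on the right of \rf{bCGold} is precisely the left-hand side of \rf{startriang} specialized to $u=0$, $v=k_\2$, $w=-k_\1$, $\de=(s_\3-s_\2-s_\1-c_b)/2$; after swapping to the equivalent representation furnished by the right-hand side of \rf{startriang} (and using once the product identity for $D_\al D_\be$ displayed just after it) the two forms align term by term. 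The most delicate bookkeeping concerns the normalization constant $A_{\al\be\ga\de}$ from \rf{startriang} together with the Gaussian phases relating $w_b$ to $e_b$ via \rf{eb-wb}; these need to assemble into the prefactor $N(s_\3,s_\2,s_\1)\,w_b(-s_\1-s_\2-s_\3)\,w_b(s_\1+s_\3-s_\2)\,w_b(s_\2+s_\3-s_\1)$.

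As a consistency check, the right-hand side of \rf{bCGold} should manifestly exhibit the Weyl symmetries $s_i\mapsto-s_i$, and these follow from \rf{startriang} under the corresponding sign changes of $\al,\be,\ga$. Once \rf{bCGold} is in hand, Proposition \ref{Comppropn} itself reduces to a direct comparison with the explicit formula \rf{b-3j} of Proposition \ref{CGpropn:T}: since $\SU_s=e^{-\frac{\pi i}{2}\spp^2}w_b(\spp-s)$ acts on momentum eigenstates as multiplication by $e^{-\frac{\pi i}{2}p^2}w_b(p-s)$, the ratio of \rf{b-3j} and \rf{bCGold} is precisely the product of three such factors arising from the action of $\SU_{s_\3}\cdot(\SU_{s_\2}^{-1}\ot\SU_{s_\1}^{-1})$.
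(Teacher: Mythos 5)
Your setup is fine: the translation invariance of \rf{b-3ja} does produce $\de(k_\3-k_\2-k_\1)$, and shifting $x_\2$ and $x_\1$ by $x_\3$ plus the constants $(s_\1+c_b)/2$ and $-(s_\2+c_b)/2$ does yield the phase $e^{\pi i(k_\2(s_\1+c_b)-k_\1(s_\2+c_b))}$, exactly as in the paper. The gap is in the step you yourself flag as the main obstacle: the reduction of the remaining double integral over $(A_\2,A_\3)$ to a single $y$-integral. The star-triangle identity \rf{startriang} cannot do this job --- it transforms a one-dimensional integral of three $D$-functions into another one-dimensional integral of three $D$-functions, so it never changes the number of integrations. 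After your substitution $y=-(A_\2+A_\3+\xi)$ you are left with a genuine inner integral of the form $\int dA\;e^{2\pi i kA}\,D_\be(A)\,D_\ga(-y-\xi-A)$, a Fourier--convolution of two $D$-functions; none of the identities you invoke (star-triangle, the $D_aD_b$ product formula, \rf{eb-wb}) evaluates it in closed form, and generically such an integral is a $b$-hypergeometric function rather than a product of $w_b$'s. Observing that the final answer has the shape of the left-hand side of \rf{startriang} is a statement about the target, not a derivation.

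The paper's mechanism is different, and it is the missing idea: one first Fourier-expands the single $D$-function that couples $x_\2$ to $x_\1$, namely $D_{-\frac{1}{2}(s_\1+s_\2+s_\3+c_b)}\big(x_\2-x_\1-\frac{s_\3+c_b}{2}\big)$, using \rf{D-FT}, $D_a(x)=w_b(2a+c_b)\int dy\,e^{-2\pi i xy}D_{-a-c_b}(y)$. At the cost of one new integration variable $y$, the integrand then factorizes in the shifted variables $y_\2$ and $y_\1$, and each of those integrals is the Fourier transform of a \emph{single} $D$-function, again evaluated in closed form by \rf{D-FT}. This is what collapses three integrations into one and, at the same time, generates the prefactors $w_b(-s_\1-s_\2-s_\3)$, $w_b(s_\1+s_\3-s_\2)$, $w_b(s_\2+s_\3-s_\1)$ from the constants $w_b(2a+c_b)$ --- prefactors your route has no mechanism to produce. (In this appendix \rf{startriang} is used only afterwards, to verify the Weyl symmetries and to compare \rf{bCGold} with \rf{b-3j}.) Your closing remarks on Proposition \ref{Comppropn} are consistent with the paper's comparison \rf{3jcomp}, but they presuppose the formula you have not yet established.
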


\begin{proof}
After using the integral transformation
\begin{equation}\label{D-FT}
D_a(x)=w_b(2a+c_b)\int_\BR dy\;e^{-2\pi ixy}D_{-a-c_b}(y)\,
\end{equation}
in order to express the function $D_a(x)$ which appears in the first
line of \rf{b-3ja}, we get the integral
\begin{align}
& \big(\,{}^{s_3}_{k_3}\,|\,{}^{s_\2}_{k_\2}\,{}^{s_\1}_{k_\1}\big)_b^{\rm\sst PT}= 
N(s_\3,s_\2,s_\1)\,w_b(-s_\1-s_\2-s_3)\\
& \times\!
\int_\BR dy \;e^{\pi i(s_3+c_b)y}D_{\frac{1}{2} (s_\1+s_\2+s_3-c_b)}(y)
\int_{\BR}dx_3\;e^{-2\pi i k_3x_3}\!
\int_{\BR^2}\!dx_\2dx_\1\;e^{2\pi i (k_\2x_\2+k_\1x_\1)}\,
e^{-2\pi i(x_\2-x_\1)y}\notag \\ 
& \hspace{3cm}\times D_{-\frac{1}{2}(s_\2-s_3-s_\1+c_b)}^{\rm\sst PT}
\big(x_\2-x_3- \fr{s_\1+c_b}{2}\big)
D_{-\frac{1}{2}(s_\1-s_3-s_\2+c_b)}^{}
\big(x_3-x_\1- \fr{s_\2+c_b}{2}\big)\,. 
\notag\end{align}
Substituting the variables of integration as
$x_\2=y_\2+x_3+(s_\1+c_b)/2$, $x_\1=y_\1+x_3-(s_\2+c_b)/2$ 
yields
\begin{align}
& \big(\,{}^{s_3}_{k_3}\,|\,{}^{s_\2}_{k_\2}\,{}^{s_\1}_{k_\1}\big)_b^{\rm\sst PT}= 
N(s_\3,s_\2,s_\1)\,w_b(-s_\1-s_\2-s_3)\,e^{\pi i(k_\2(s_\1+c_b)-k_\1(s_\2+c_b))}\\
& \qquad\qquad\times\int_\BR dy \;e^{\pi i(s_3-s_\2-s_\1-c_b)y}
D_{\frac{1}{2} (s_\1+s_\2+s_3-c_b)}(y)
\int_{\BR}dx_3\;
e^{-2\pi i (k_3-k_\2-k_\1)x_3}\notag\\
&\qquad\qquad\times \int_{\BR}dy_\2 \;e^{-2\pi iy_\2(y-k_\2)}\,
D_{-\frac{1}{2}(s_\2-s_3-s_\1+c_b)}^{}(y_\2)
\int_{\BR}dy_\1\;e^{2\pi i y_\1(y+k_\1)}\,
D_{-\frac{1}{2}(s_\1-s_3-s_\2+c_b)}^{}(y_\1). 
\notag
\end{align}
The integrals over $y_\2$ and $y_\1$ may be carried out using \rf{D-FT},
while the integral over $x_3$ yields a delta-distribution $\de(k_3-k_\2-k_\1)$.
We arrive at the formula
\begin{align}\label{bCGold'}
 \big(\,{}^{s_3}_{k_3}\,|\,{}^{s_\2}_{k_\2}\,{}^{s_\1}_{k_\1}\big)_b^{\rm\sst PT}= 
& \,\de(k_3-k_\2-k_\1)\,
N(s_\3,s_\2,s_\1)\,e^{\pi i(k_\2(s_\1+c_b)-k_\1(s_\2+c_b))}\,\\
&\times w_b(-s_\1-s_\2-s_3)w_b(s_\1+s_3-s_\2)w_b(s_\2+s_3-s_\1)\,\notag\\
&\times\int_\BR dy \;e^{\pi i(s_3-s_\2-s_\1-c_b)y}
D_{\frac{1}{2} (s_\1+s_\2+s_3-c_b)}(y)\notag\\
&\hspace{4cm}\times
D_{\frac{1}{2}(s_\2-s_3-s_\1-c_b)}^{}(y-k_\2)
D_{\frac{1}{2}(s_\1-s_3-s_\2-c_b)}^{}(y+k_\1)\,. 
\notag
\end{align}
Substituting $y\ra -y$ and using that $D_a(x)=D_{a}(-x)$
completes the proof.
\end{proof}
It remains to compare the resulting expression 
\rf{bCGold} with \rf{b-3j}. We find that
\begin{align}\label{3jcomp}
\big(\,{}^{s_{\3}}_{p_{\3}}\,|\,{}^{s_\2}_{p_\2}\,{}^{s_\1}_{p_\1}\,\big)
& \,=\,N(s_\3,s_\2,s_\1)
\frac{w_b(p_\1-s_\1)w_b(p_\2-s_\2)}{w_b(p_{\3}-s_{\3})}
e^{\frac{\pi i}{2}(p_3^2-p_\1^2-p_\2^2)}
\big(\,{}^{s_\3}_{p_\3}\,|\,
{}^{s_\2}_{p_\2}\,{}^{s_\1}_{p_\1}\big)_b^{\rm\sst PT}\,.
\end{align}
We observe that the terms in the first line of \rf{3jcomp} represent
the unitary 
transformation between the representation \rf{Paldef} and the Whittaker
model \rf{Whitt}. The prefactor in the second line depends only
on the triple of Casimir eigenvalues and represents a change of
normalization of the Clebsch-Gordan maps.

\section{Proofs of some technical results}\label{Proofapp}
\setcounter{equation}{0}

\subsection{Proof of Proposition \ref{W-propn}}

The move $W_1$ defined in \rf{Wmovedef}
may be factorized into the following 
three simple moves:

{\bf First move:} The move 
$\rho_{\check\1}^{}\circ\omega_{\hat\1\check\1}^{}$, diagrammatically 
represented as follows
\[
\lower.9cm\hbox{\epsfig{figure=w-op1.eps,height=2.8cm}}
\quad 
\overset{\rho_{\check\1}^{}\circ\,\omega_{\hat\1\check\1}^{}}{\longrightarrow}
\quad
\lower.9cm\hbox{\epsfig{figure=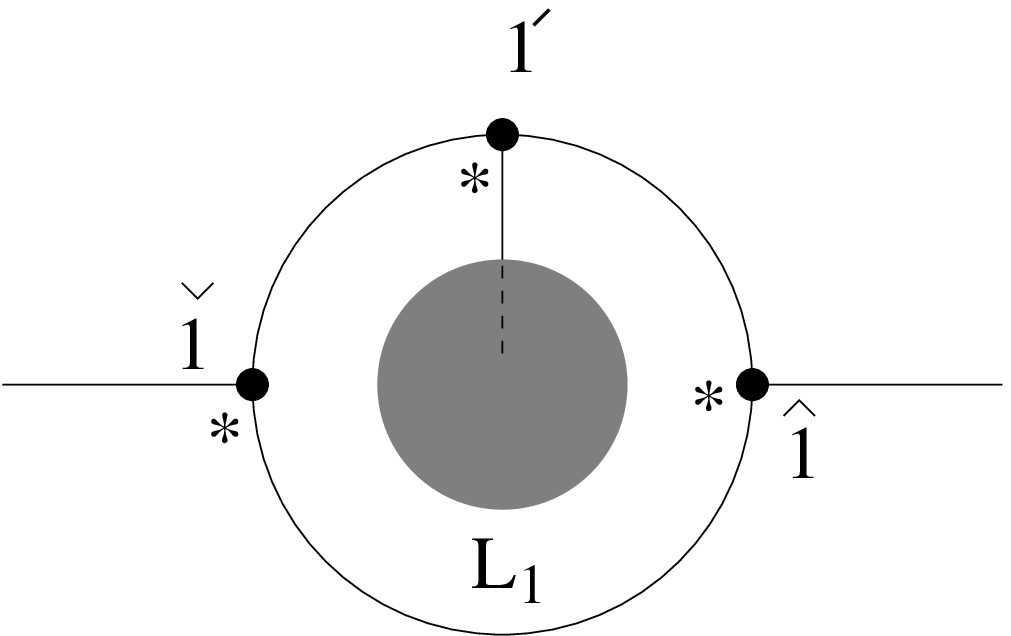,height=2.8cm}}
\]
{\bf Second move:} The move 
$\omega_{\1'\hat\1}^{}$, diagrammatically 
represented as follows
\[
\lower.9cm\hbox{\epsfig{figure=w-op1a.eps,height=2.8cm}}
\quad 
\overset{\omega_{\1'\hat\1}^{}}{\longrightarrow}
\quad
\lower.9cm\hbox{\epsfig{figure=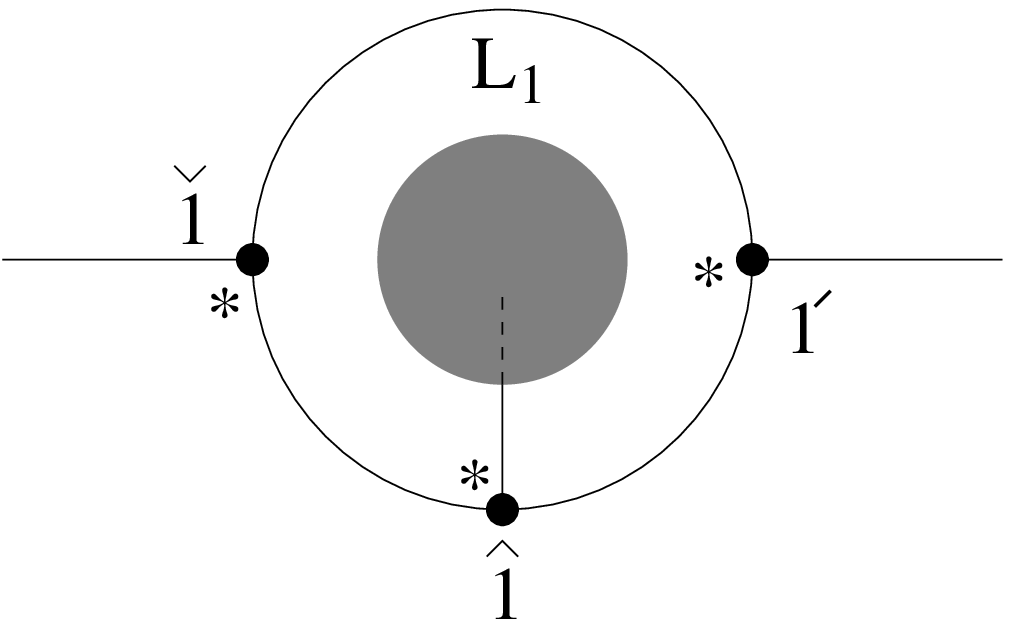,height=2.8cm}}
\]
{\bf Third move:} The move 
$\rho_{\check\1}^{}\circ\omega_{\check\1\1'}^{}$, diagrammatically 
represented as follows
\[
\lower.9cm\hbox{\epsfig{figure=w-op2a.eps,height=2.8cm}}
\quad 
\overset{\rho_{\check\1}^{}\circ\,\omega_{\check\1\1'}^{}}{\longrightarrow}
\quad
\lower.9cm\hbox{\epsfig{figure=w-op2.eps,height=2.8cm}}
\]

We have 
\begin{equation}
\sa\sd[\SW_\1](e^{-\pi b \spp_{\check{1}}})\,=\,e^{-\pi b(-\spp_{\check{1}}+2\sz_\1)}\,,
\end{equation}
where $\sz_\1:=\frac{1}{2}(\spp_0-\sq_0+\spp_{\hat{1}})$. This is equivalent
to $\sa\sd[\SW_\1](\SK_\1^{})=\SK_\1^{-1}$.

Furthermore
\begin{equation}
\sa\sd[\SW_\1](e^{\pi b(2\sq_{\check{\1}}-\spp_{\check{\1}})})
=e^{-\frac{\pi b}{2}(2\sq_{\check{\1}}-\spp_{\check{\1}})}
\big(2\cosh2\pi b(\spp_{\check{\1}}-\sz_\1)+\SL_\1\big)
e^{-\frac{\pi b}{2}(2\sq_{\check{\1}}-\spp_{\check{\1}})}
\end{equation}
This is equivalent
to $\sa\sd[\SW_\1](\SE_\1^{})=\SF_\1$.

\subsection{Proof of Proposition \ref{Lreppropn}}
We need to calculate
$\sa\sd[\SU_{\2\1}](\SL)$, 
where 
\begin{equation}
\SL:=2\cosh\pi b (\spp_b+\sq_a-\spp_a)+
e^{\pi b(2\sq_b-\spp_b-(\sq_a+\spp_a))}\,,
\end{equation}
and $\SU_{\2\1}$ is the operator representing the
move $U_{\2\1}$ which is diagrammatically represented as:
\[
\lower1cm\hbox{\epsfig{figure=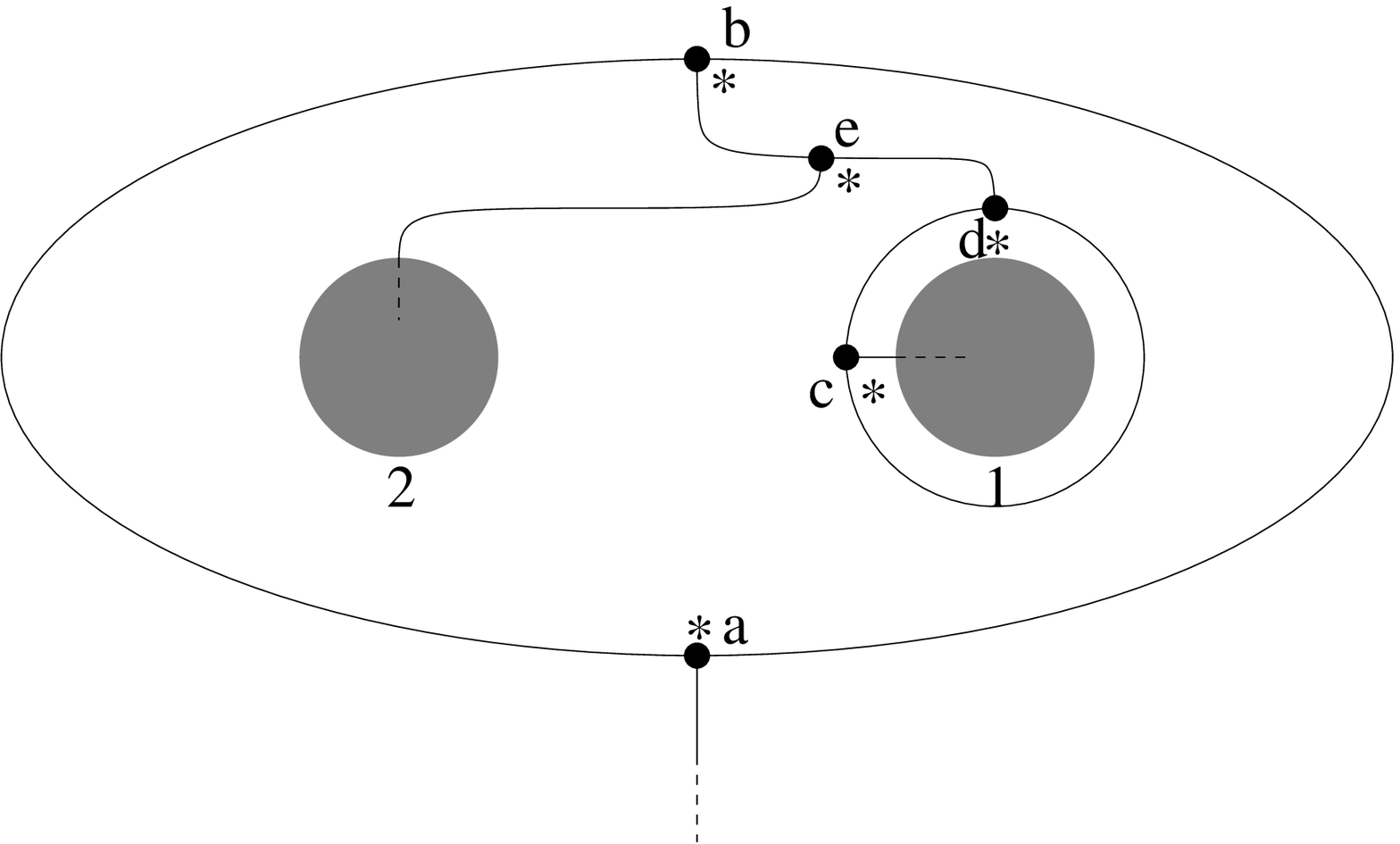,height=3cm}}
\quad \overset{U_{\2\1}}{\longrightarrow}\quad
\lower1cm\hbox{\epsfig{figure=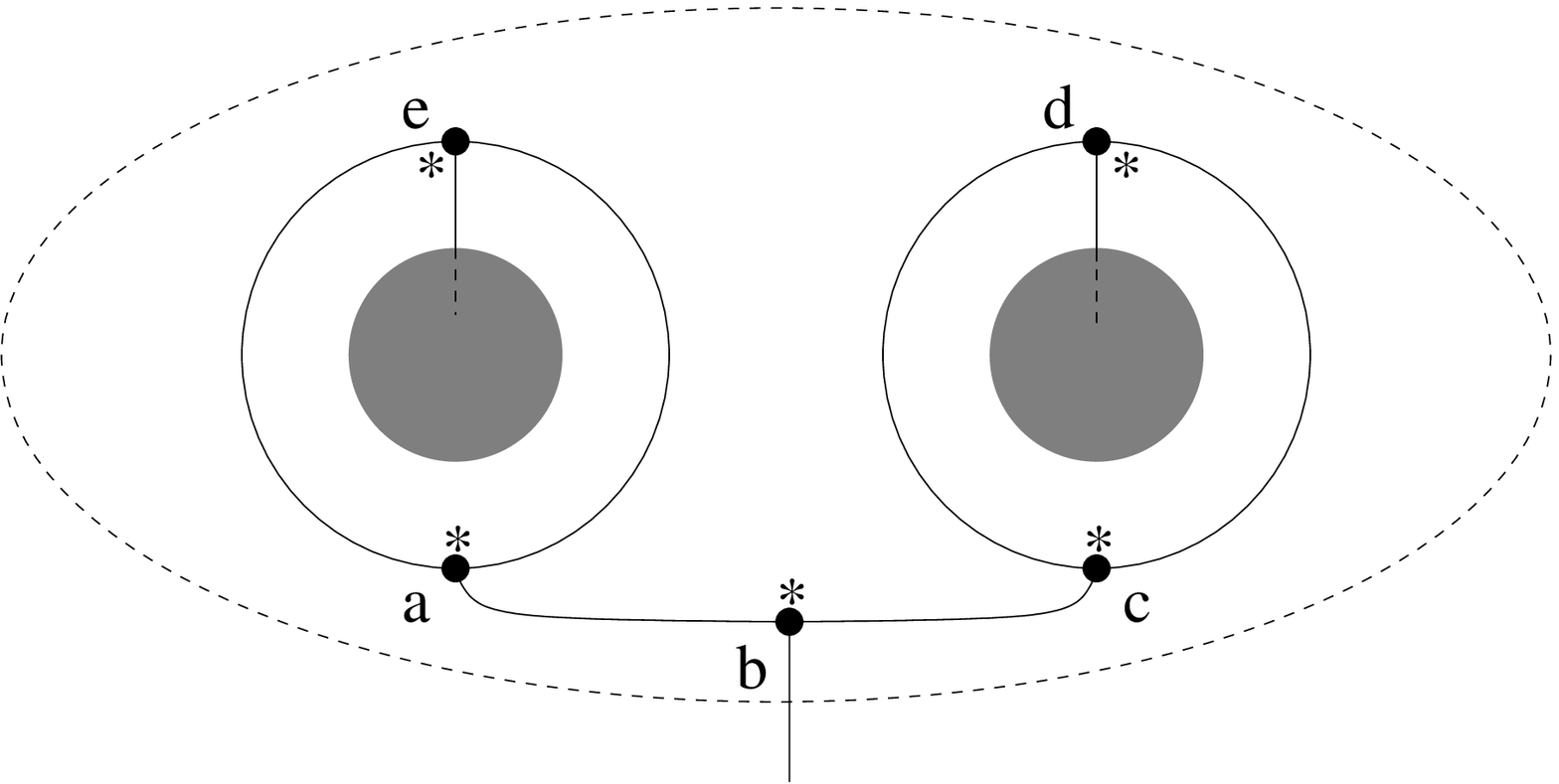,height=3cm}}
\] 
The calculation may be performed in three steps.

{\bf First step:} The move $\rho_{e}^{-1}\circ\omega_{eb}$, diagrammatically 
represented as follows
\[
\lower1cm\hbox{\epsfig{figure=leng4b.eps,height=3cm}}
\quad \overset{\rho_{e}^{-1}\circ\omega_{eb}}{\longrightarrow}\quad
\lower1cm\hbox{\epsfig{figure=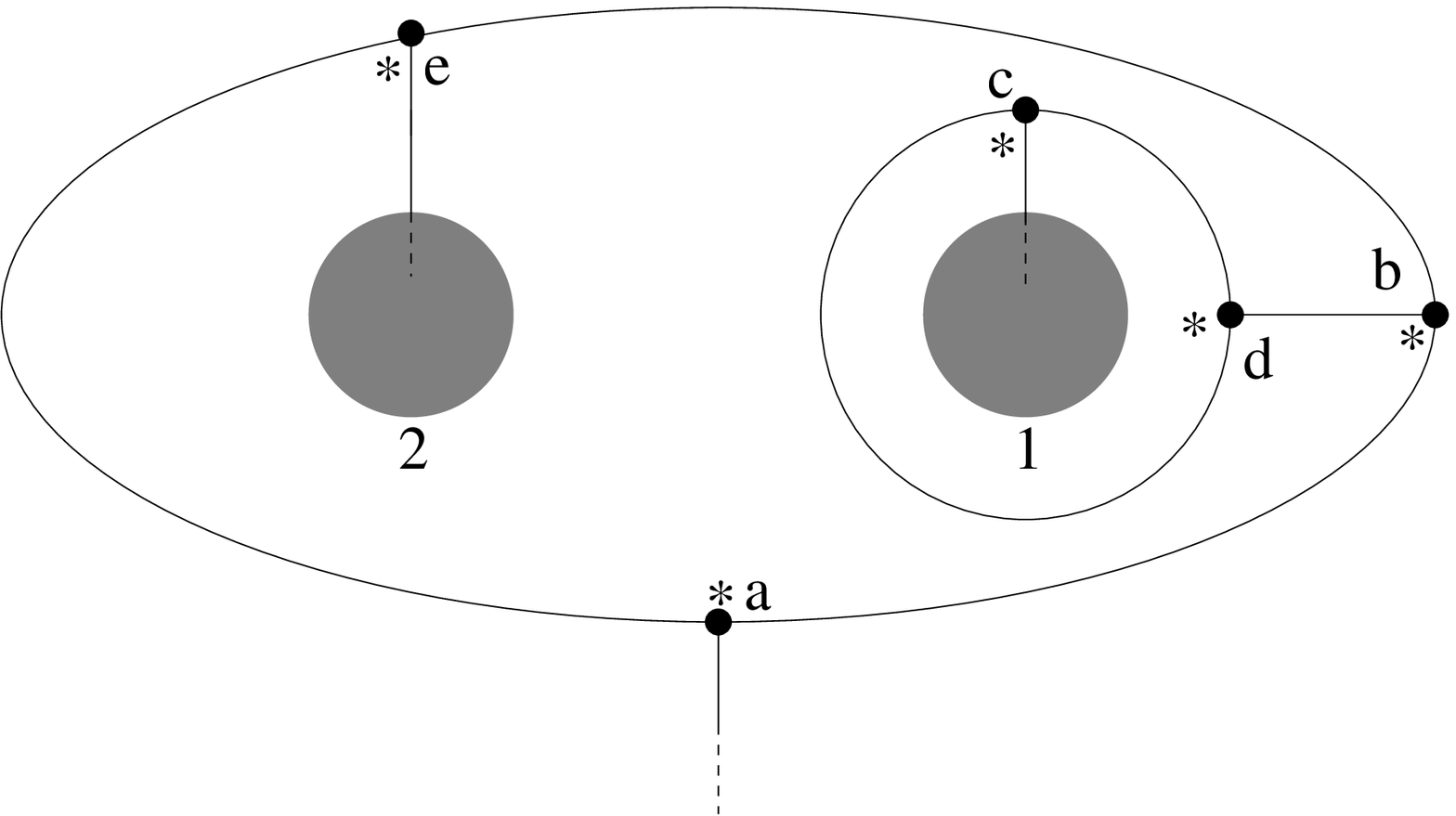,height=3cm}}
\] 
Calculation of $\sa\sd[\SA_{e}^{-1}\ST_{eb}](\SL)$:
\begin{align} 
\SL':=\sa\sd[\SA_{e}^{-1}\ST_{eb}](\SL)\,=\,&
2\cosh\pi b (\spp_b-\sq_e+\sq_a-\spp_a)+e^{\pi b(\spp_b+(2\spp_e-\sq_e)-(\sq_a+\spp_a))}\notag\\
&+
e^{\pi b(2\sq_b+\sq_e-\spp_b-(\sq_a+\spp_a))} \,.
\end{align}
{\bf Second step:}  The move $W_1$, diagrammatically 
represented as follows
\[
\lower1cm\hbox{\epsfig{figure=leng3.eps,height=3cm}}
\quad \overset{W_1}{\longrightarrow}\quad
\lower1cm\hbox{\epsfig{figure=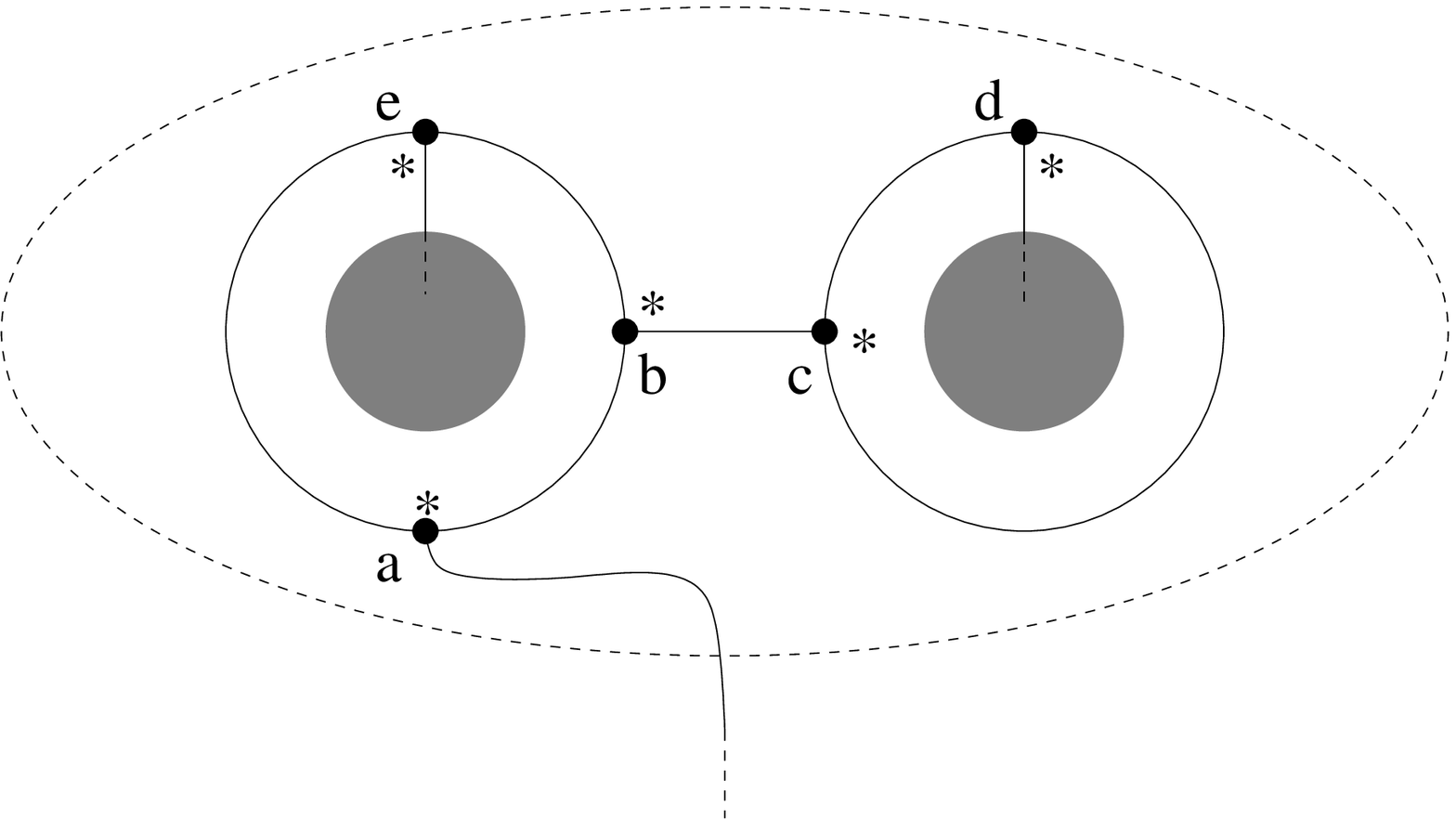,height=3cm}}
\] 
Calculation of $\sa\sd[\SW_\1](\SL')$:
\begin{align}
\SL'':=& \,\sa\sd[\SW_\1](\SL')\\
= & \,2\cosh\pi b (-\spp_b+2\sz_\1-\sq_e+\sq_a-\spp_a)+e^{\pi b(-\spp_b+\sz_\1
+(2\spp_e-\sq_e)-(\sq_a+\spp_a))}\notag\\
&+:e^{-{\pi b}(2\sq_b+\sq_e-\spp_b+(\sq_a+\spp_a))}\big(2\cosh2\pi b(\spp_b-\sz_\1)+\SL_\1\big):,
\notag\end{align}
where $\sz_\1:=\frac{1}{2}(\spp_c+\spp_d-\sq_c)$.

{\bf Third step:}  The move $\omega_{ba}$, diagrammatically 
represented as follows
\[
\lower1cm\hbox{\epsfig{figure=leng2.eps,height=3cm}}
\quad \overset{\omega_{ba}}{\longrightarrow}\quad
\lower1cm\hbox{\epsfig{figure=leng1.eps,height=3cm}}
\] 
Calculation of $\sa\sd[\ST_{ba}^{}](\SL'')$:
We factorize $\ST_{ba}^{}=\ST_{ba}'e^{-2\pi i \spp_b\sq_a}$
and collect the terms with equal weight with respect to 
the adjoint action of the 
argument $\sq_b-\sq_a+\spp_a$ of $\ST'_{ba}=e_b(\sq_b-\sq_a+\spp_a)$:
\begin{align*}
\SL''':=\sa\sd[\ST_{ba}']\Big( 
& :e^{2\pi b(\spp_b-\sz_\1)}\big(
e^{-\pi b(2\sq_b+\sq_e-\sq_a+\spp_a)}+e^{-\pi b(\sq_a-\spp_a)}\big): \\
& +e^{\pi b(-\spp_b+\sz_\1+(2\spp_e-\sq_e)-(\sq_a+\spp_a))}+ 
e^{-\pi b(2\sq_b+\sq_e+\spp_a-\sq_a)}\SL_\1\\
& \;+:e^{-2\pi b(\spp_b-\sz_\1)}\big(e^{-\pi b(2\sq_b+\sq_e+\spp_a-\sq_a)}+
e^{\pi b(\sq_a-\spp_a)}\big): \Big)\\
= & :e^{2\pi b(\spp_b-\sz_\1)}\big(
e^{-\pi b(2\sq_b+\sq_e-\sq_a+\spp_a)}+e^{-\pi b(\sq_a-\spp_a)}\big)
\big(1+e^{2\pi b(\sq_b-\sq_a+\spp_a)}\big)^{-1}: \\
& + e^{\pi b(-\spp_b+\sz_\1
+(2\spp_e-\sq_e)-(\sq_a+\spp_a))}+ 
e^{-\pi b(2\sq_b+\sq_e+\spp_a-\sq_a)}\SL_\1 \\
& \;+ :e^{-2\pi b(\spp_b-\sz_\1)}\big(e^{-\pi b(2\sq_b+\sq_e+\spp_a-\sq_a)}+
e^{\pi b(\sq_a-\spp_a)}\big)\big(1+e^{2\pi b(\sq_b-\sq_a+\spp_a)}\big):\,,
\end{align*}
where $\sz_\2:=\frac{1}{2}(\spp_a-\sq_a+\sq_e)$.
Collecting the terms yields
\begin{align}
\SL'''= & e^{-2\pi b(\spp_b-\sz_\1+\sq_b+\sz_\2)}
+e^{-2\pi b(\spp_b-\sz_\1)}\SL_\2+
e^{-2\pi b(\sq_b+\sz_\2)}\SL_\1\notag\\
& +2\cosh2\pi b(\spp_b-\sq_b-\sz_\1-\sz_\2)\,.
\end{align}


\begin{thebibliography}{99}

\newcommand{\CMP}[3]{{ Commun. Math. Phys. }{\bf #1} (#2) #3}
\newcommand{\LMP}[3]{{ Lett. Math. Phys. }{\bf #1} (#2) #3}
\newcommand{\IMP}[3]{{ Int. J. Mod. Phys. }{\bf A#1} (#2) #3}
\newcommand{\NP}[3]{{ Nucl. Phys. }{\bf B#1} (#2) #3}
\newcommand{\PL}[3]{{ Phys. Lett. }{\bf B#1} (#2) #3}
\newcommand{\MPL}[3]{{ Mod. Phys. Lett. }{\bf A#1} (#2) #3}
\newcommand{\PRL}[3]{{ Phys. Rev. Lett. }{\bf #1} (#2) #3}
\newcommand{\AP}[3]{{ Ann. Phys. (N.Y.) }{\bf #1} (#2) #3}
\newcommand{\LMJ}[3]{{ Leningrad Math. J. }{\bf #1} (#2) #3}
\newcommand{\FAA}[3]{{ Funct. Anal. Appl. }{\bf #1} (#2) #3}
\newcommand{\TMP}[3]{{ Theor. Math. Phys. }{\bf #1} (#2) #3}
\newcommand{\PTPS}[3]{{ Progr. Theor. Phys. Suppl. }{\bf #1} (#2) #3}
\newcommand{\LMN}[3]{{ Lecture Notes in Mathematics }{\bf #1} (#2) #2}
\small  \setlength{\itemsep}{-3pt}


\bibitem[BK]{BK} B. Bakalov, A. Kirillov, Jr., 
{ \it On the Lego-Teichm\"uller game.}  
Transform. Groups  {\bf 5}  (2000),  no. 3, 207--244.


\bibitem[BT1]{BT1}
A.~G.~Bytsko and J.~Teschner,
{\em $R$-operator, co-product and Haar-measure for the modular double of 
$U_q(\fsl(2,\BR))$}, 
Comm. Math. Phys. {\bf  240}  (2003) 171--196.

\bibitem[BT2]{BT} A. Bytsko, J.Teschner, {\it Quantization of models with 
non-compact quantum group symmetry. Modular XXZ magnet and 
lattice sinh-Gordon model}, J.Phys. {\bf A39} (2006) 12927-12981.
Preprint arXiv:hep-th/0602093

\bibitem[CF1]{CF} L.O. Chekhov, V. Fock: {\it A quantum Teich\-m\"uller space},
  Theor. Math. Phys. {\bf 120} (1999) 1245-1259 

\bibitem[CF2]{CF2} L.O. Chekhov, V. Fock: {\it  Quantum modular 
  transformations, the pentagon relation, and geodesics}, 
    Proc. Steklov Inst. Math.  {\bf 226}  (1999)  149-163



\bibitem[F99]{FMD} L. D.~Faddeev,
{\em Modular double of a quantum group}, 
Conférence Moshé Flato 1999, Vol. I (Dijon),  149--156, 
Math. Phys. Stud., 21, Kluwer Acad. Publ., Dordrecht, 2000,
{\tt arXiv:math/9912078}.

\bibitem[F97]{F97} V. Fock: {\it Dual Teichm\" uller spaces}
Preprint arXiv:dg-ga/9702018

\bibitem[FK]{FK}
I.B. Frenkel,  Hyun Kyu Kim, 
{\it Quantum Teichmüller space from the quantum plane.}  
Duke Math. J.  {\bf 161}  (2012)  305--366.

\bibitem[FG]{FG} L. Funar, R. Gelca, 
{\it On the groupoid of transformations of rigid structures on surfaces.}
J. Math. Sci. Univ. Tokyo  {\bf 6 } (1999), 599–646. 


\bibitem[Ip]{Ip} I. C. H. Ip,
{\em Representation of the Quantum Plane, its Quantum Double, 
and Harmonic Analysis on $GL_q^+(2,R)$}, 
{\tt arXiv:1108.5365}.


\bibitem[Ka1]{Ka1} R.M. Kashaev: 
{\it Quantization of Teich\-m\"uller spaces and 
  the quantum dilogarithm,} 
\LMP{43}{1998}{105-115}, q-alg/9705021,   
\bibitem[Ka2]{Ka2} R.M. Kashaev:
 {\it Liouville central charge in quantum 
  Teichmuller theory}, Proc. of the Steklov Inst. of Math. 
   Vol.{\bf 226}(1999)63-71, hep-th/9811203
\bibitem[Ka3]{Ka3} R.M. Kashaev: {\it On the spectrum of Dehn twists in 
  quantum Teich\-m\"uller theory,} Physics and combinatorics, 2000 (Nagoya),  
 63--81, World Sci. Publishing, River Edge, NJ, 2001.
\bibitem[Ka4]{Ka4} R.M. Kashaev, {\it The quantum dilogarithm and Dehn 
  twists in quantum Teich\-m\"uller theory.}  Integrable structures of 
  exactly solvable two-dimensional models of quantum field theory 
  (Kiev, 2000),  211--221, NATO Sci. Ser. II Math. Phys. Chem., 35,
  Kluwer Acad. Publ., Dordrecht, 2001.

\bibitem[MS]{MS} G. Moore, N. Seiberg,
{\em Classical and quantum conformal field theory},
Comm. Math. Phys. {\bf 123} (1989) 177-254



\bibitem[PT1]{PT1}
B.~Ponsot and J.~Teschner,
{\em Liouville bootstrap via harmonic analysis on a noncompact quantum group}, {\tt arXiv:hep-th/9911110}.



\bibitem[PT2]{PT2}
B.~Ponsot and J.~Teschner,
{\em Clebsch-Gordan and Racah-Wigner coefficients for a continuous series of representations of $U_q(sl(2,\mathbb{R}))$}, Commun.\ Math.\ Phys.\  {\bf 224} (2001) 613--655.

\bibitem[T01]{Teschner:2001rv}
J.~Teschner, {\em Liouville theory revisited}, Class.\ Quant.\ Grav.\  {\bf 18} (2001) R153--R222.

\bibitem[T03]{T03} 
J.~Teschner, {\em On the relation between quantum Liouville
theory and the quantized Teichm\"uller spaces}, Int.\ J.\ Mod.\ Phys.\
{\bf A19S2} (2004), 459--477;


\bibitem[T05]{T05} J. Teschner,
{\em An analog of a modular functor from quantized Teichm\"uller 
theory}, ``Handbook of Teichm\"uller theory'', (A. Papadopoulos, ed.) 
Volume I, EMS Publishing House, Z\"urich 2007, 685--760.



\bibitem[TV]{TV}  J. Teschner, G. S. Vartanov 
   {\it $6j$ symbols for the modular double, 
quantum hyperbolic geometry, and supersymmetric gauge theories}
, Preprint arXiv:1202.4698 
\end{thebibliography}
\end{document}